\DeclareSymbolFont{bbold}{U}{bbold}{m}{n}
\DeclareSymbolFontAlphabet{\mathbbold}{bbold}
\theoremstyle{plain} 
\newtheorem{theorem}{Theorem}
\newtheorem{thm}[theorem]{Theorem}
\newtheorem{lem}[theorem]{Lemma}
\newtheorem{cor}[theorem]{Corollary}
\newtheorem{ADC}[theorem]{Algebraic Dichotomy Conjecture (ADC)}
\newtheorem{AoNT}[theorem]{All or Nothing Theorem (ANT)}
\newtheorem{GDfSG}[theorem]{Gap Dichotomy for Simple Graphs}
\newtheorem*{remark}{Remark}
\theoremstyle{definition}
\newtheorem{defn}[theorem]{Definition}
\newtheorem{definition}[theorem]{Definition}
\newtheorem{rem}[theorem]{Remark}
\newcommand{\arity}{\operatorname{arity}}
\newcommand{\diag}{\operatorname{diag}}
\newcommand{\Pol}{\operatorname{Pol}}
\newcommand{\pp}{\operatorname{pp}}
\DeclareMathOperator{\Con}{Con}
\newcommand{\SAT}{\textup{SAT}}
\newcommand{\SEP}{\textup{SEP}}
\newcommand{\family}{\operatorname{family}}
\newcommand{\dsharp}{\raisebox{.01em}{\kern-.08em\scalebox{.6}{\rotatebox[origin=c]{45}{\ding{67}}}}\kern-.017em}
\newcommand{\type}{\operatorname{type}}
\DeclareMathOperator{\CSP}{CSP}
\newcommand{\up}[1]{\textup{#1}}
\newcommand{\rest}[1]{{\upharpoonright}_{#1}}
\newcommand{\Mod}{\texttt{Mod}}
\newcommand{\bigand}{\operatornamewithlimits{\hbox{\Large$\&$}}}
\newcommand{\Bigand}{\operatornamewithlimits{\hbox{\LARGE$\&$}}}
\title[A promise problem dichotomy for constraint problems]{All or nothing: toward a promise problem dichotomy for constraint problems\footnote{The second author was  supported by ARC Discovery Project DP1094578 and ARC Future Fellowship FT120100666.}}
\author[1]{Lucy Ham}
\author[2]{Marcel Jackson}
\address[1,2]{Department of Mathematics and Statistics, La Trobe University, Melbourne, Australia\\
  \texttt{leham@students.latrobe.edu.au}\\
  \texttt{m.g.jackson@latrobe.edu.au}}
\subjclass{F.1.3 Complexity Measures and Classes; F.4.1 Logic and constraint programming; G.2.2 Graph Theory}
\keywords{Constraint satisfaction problem, dichotomy, robust satisfiability, promise problem, quasivariety}
\begin{document}

\maketitle

\begin{abstract}
A finite constraint language $\mathscr{R}$ is a finite set of relations over some finite domain $A$.  We show that intractability of the constraint satisfaction problem $\CSP(\mathscr{R})$ can, in all known cases, be replaced by an infinite hierarchy of intractable promise problems of increasingly disparate promise conditions: where instances are guaranteed to either have no solutions at all, or to be $k$-robustly satisfiable (for any fixed $k$), meaning that every ``reasonable'' partial instantiation on~$k$ variables extends to a solution.  For example, subject to the assumption $\texttt{P}\neq \texttt{NP}$, then for any~$k$, we show that there is no polynomial time algorithm that can distinguish non-$3$-colourable graphs, from those for which any reasonable $3$-colouring of any $k$ of the vertices can extend to a full $3$-colouring.  Our main result shows that an analogous statement holds for  all known intractable constraint problems over fixed finite constraint languages.
\end{abstract}


\section{Introduction}

In the constraint satisfaction problem (CSP) we are given a domain $A$, a list of relations $\mathscr{R}$ on $A$ and a finite set $V$ of variables, in which various tuples of variables have been constrained by the relations in $\mathscr{R}$.  The fundamental \emph{satisfaction} question is to decide whether there is a function $\phi:V\to A$ such that $(\phi(v_1),\dots,\phi(v_n))\in r$ whenever $\langle (v_1,\dots,v_n),r\rangle$ is a constraint (and $r\in\mathscr{R}$ is of arity $n$). 
Many important computational problems are expressible in this framework, even in the particular case where the domain $A$ and relations $\mathscr{R}$ are fixed.  
Such \emph{fixed template} CSPs have received particular attention in theoretical investigations, and are also the focus of the present article: examples include the SAT variants considered by Schaefer~\cite{sch}, graph homomorphism problems such as in the Hell-Ne\v{s}et\v{r}il dichotomy~\cite{helnes} as well as list-homomorphism problems and conservative CSPs \cite{bul:LH}.  
Feder and Vardi \cite{fedvar} generated particular attention on the theoretical analysis of computational complexity of fixed template CSPs, by tying the complexity of fixed finite template CSPs precisely to those complexities to be found in the largest logically definable class for which they were unable to prove that Ladner's Theorem holds.  This motivated their famous \emph{dichotomy conjecture}: is it true that a fixed finite template CSP is either solvable in polynomial time or is \texttt{NP}-complete?

A pivotal development in the efforts toward a possible proof of the dichotomy conjecture was the introduction of universal algebraic methods.  This provided fresh tools to build tractable algorithms, and to build reductions for hardness,  as well as an established  mathematical landscape in which to formulate conjectures on complexity.  The method is fundamental to Bulatov's classification of 3-element CSPs \cite{bul3}, of the Dichotomy Theorem for conservative CSPs \cite{bul:LH}, for homomorphism problems on digraphs without sources and sinks~\cite{BKN}, in the classification of when a CSP is solvable by generalised Gaussian elimination~\cite{IMMVW}, and of when a CSP is solvable by a local consistency check algorithm~\cite{barkoz:BW}, among others.  The \emph{algebraic dichotomy conjecture} (ADC) of \cite{BJK} refines the Feder-Vardi conjecture by speculating the precise boundary between \texttt{P} and \texttt{NP}, in terms of the presence of certain algebraic properties.  The ADC has been verified in each of the aforementioned tractability classifications.

The present article shows that \texttt{NP}-completeness results obtained via the algebraic method also imply the \texttt{NP}-completeness of a surprisingly strong promise problem.  The NO instances are those for which there is no solution, but the YES instances are instances for which any ``reasonable'' partial assignment on $k$ variables can extend to a solution.  
``Reasonable'' here means subject to some finite set of local, necessary conditions.  The main result---the All or Nothing Theorem (ANT)~\ref{thm:main}---proves the \texttt{NP}-completeness of this promise problem for any integer $k\geq 0$ and in any intractable CSP covered by the algebraic method.  The promise conditions include satisfaction as a special case, and complement the promise condition on NO instances provided by the PCP Theorem \cite{arosaf} (at least $\varepsilon$ proportion of the constraints must fail, for some $\varepsilon>0$). 
We are also able to prove a dichotomy theorem by showing that for sufficiently large $k$, our promise problem is solvable in $\texttt{AC}^0$ if and only if the CSP is of bounded width (in the sense of Barto and Kozik \cite{barkoz:BW}) and otherwise is hard for the complexity class $\texttt{Mod}_p(\texttt{L})$ for some prime~$p$.

A second contribution of the article is to connect the model-theoretic notion of \emph{quasivariety} to the concept of \emph{implied constraints}.  Identifying implied constraints is a central method employed in constraint solvers~\cite{MSS}, and the proliferation of implied constraints is associated with phase transitions in randomly generated constraint problems;~\cite{MZKST}.  We explain how the absence of implied constraints corresponds to membership in the quasivariety generated by the template.   Intuitively, it seems quite unlikely that the problem of recognising ``no implied constraints'' can be approached using the algebraic method, because there is no obvious reduction between constraint languages $\mathscr{R}_1$ and $\mathscr{R}_2$ when $\mathscr{R}_1\subsetneq \mathscr{R}_2$.  
Despite this intuition, the strength of the promise in the ANT enables us to show that whenever the algebraic method shows hardness of $\CSP(\mathscr{R})$, then there is no polynomial time algorithm to distinguish constraint instances with no solution, from those that have no implied constraints.   We can also use our bounded width dichotomy to obtain the most general  nonfinite axiomatisability result known for finitely generated quasivarieties.  A further important corollary is a promise problem extension of Hell and Ne\v{s}et\v{r}il's well-known dichotomy for simple graphs~\cite{helnes}.  The strength of the promise in the ANT~\ref{thm:main} enables us to show that, under extremely general conditions, there is no polynomial time algorithm to distinguish constraint instances with no solutions at all, from those for which there are no implied constraints at all.   The result also implies an equally general hardness result for deciding membership in finitely generated quasivarieties, as well as the most general known nonfinite axiomatisability result known for finitely generated quasivarieties.  A further important corollary is a promise problem extension of the Hell and Ne\v{s}et\v{r}il's well-known dichotomy for simple graphs \cite{helnes}.

More generally, the ability to extend all reasonable partial assignments appears to hold potential for further application.  Aside from similarity to the ultrahomogeneity concept from model theory, the concept has already found applications to minimal networks in~\cite{got}, in quantum mechanics~\cite{AGK}, and semigroup theory~\cite{jac:SAT}.  


%
\section{Constraints and implied constraints}\label{sec:CSP}
Since Feder and Vardi \cite{fedvar} it has been standard to reformulate the fixed template CSP over domain $A$ and language $\mathscr{R}$ as a homomorphism problem between model-theoretic structures. The template is a relational structure $\mathbb{A}=\langle A,\mathscr{R}^A\rangle$ (with $\mathscr{R}$ a relational signature), as is the instance $(V,\mathscr{C})$ (where $\mathscr{C}$ is the list of constraints) where the variable set $V$ is the universe, and with each $r\in\mathscr{R}$ being interpreted as the relation on $V$ equal to the set of tuples constrained to $r$ in the set $\mathscr{C}$.  Thus each individual constraint $\langle (v_1,\dots,v_n),r\rangle$ becomes a membership of a tuple $(v_1,\dots,v_n)$ in the relation $r^V$ on $V$.  
We refer to  $(v_1,\dots,v_n)\in r^V$ as a \emph{hyperedge}.   The \emph{constraint satisfaction problem for $\mathbb{A}$}, which we denote by $\CSP(\mathbb{A})$ is the problem of deciding membership in the class of finite structures admitting homomorphism into~$\mathbb{A}$.  Throughout the article, $\mathbb{A}$  will be the default notation for a CSP template of signature $\mathscr{R}$ (both assumed to be finite) and $\mathbb{B}$ for a 
general (finite) $\mathscr{R}$-structure.  We let $\arity(\mathscr{R})$ denote the maximal arity of any relation in $\mathscr{R}$. 

An \emph{implied constraint} in $\mathbb{B}$ relative to $\mathbb{A}$ is a nonhyperedge $(v_1,\dots,v_n)\notin r^B$, where $r\in \mathscr{R}$, that is mapped to an $r^A$ hyperedge under every homomorphism into~$\mathbb{A}$.  We also allow this for the logical relation $=$, thus an implied equality is a pair $b_1,b_2$ such that every homomorphism identifies $b_1$ with $b_2$.  Thus an input $\mathbb{B}$ has \emph{no implied constraints} if and only if every nonhyperedge can be homomorphically mapped to a nonhyperedge over the same relation in $\mathbb{A}$.  This ``separation condition'' is widely known to be equivalent to the property that $\mathbb{B}$ lies in the \emph{quasivariety} generated by $\mathbb{A}$: the class of isomorphic copies of induced substructures of direct powers of $\mathbb{A}$; see Maltsev \cite{mal} and Gorbunov \cite{gor}, but also \cite[Theorem~2.1]{jac:SAT} and \cite[\S2.1,2.2]{jactro} for the CSP interpretations and generalizations.  The trivial one-element structure ${\bf 1}_\mathscr{R}$ satisfies the separation condition  vacuously.  If we wish to exclude ${\bf 1}_\mathscr{R}$ we arrive at the \emph{universal Horn class} generated by $\mathbb{A}$ (which excludes the zeroth power from ``direct powers'').  We let $\mathsf{Q}(\mathbb{A})$ denote the quasivariety of $\mathbb{A}$ and $\mathsf{Q}^+(\mathbb{A})$ the universal Horn class of~$\mathbb{A}$.  Membership in $\mathsf{Q}(\mathbb{A})$ is the problem of deciding if an input has no implied constraints, which we denote by $\CSP_{\infty}(\mathbb{A})$.  Membership in $\mathsf{Q}^+(\mathbb{A})$ is essentially the same as $\CSP_{\infty}(\mathbb{A})$ because $\mathsf{Q}(\mathbb{A})$ and $\mathsf{Q}^+(\mathbb{A})$  differ on at most the structure ${\bf 1}_\mathscr{R}$.  
\smallskip

\fbox{
\parbox{0.92\textwidth}{
Problem: $\CSP_{\infty}(\mathbb{A})$ (no implied constraints)\\
Instance: a finite $\mathscr{R}$-structure $\mathbb{B}$.\\
Question: for every nonhyperedge $(v_1,\dots,v_n)\notin r^B$, is there a homomorphism into $\mathbb{B}$ taking $(v_1,\dots,v_n)\notin r^B$ to a nonhyperedge $(a_1,\dots,a_n)\notin r^A$ of $\mathbb{A}$?
}
}

\smallskip
\rule{0cm}{0.4cm}The case of no implied \emph{equalities} is considered in Ham~\cite{ham:SATconf,ham:SAT}, with a complete tractability classification in the case of Boolean constraint languages.

The process of identifying and adding implied constraints is incorporated into typical constraint solvers and can be thought of as gradually collapsing and enriching the given instance toward an instance without implied constraints.  In model-theoretic terms, the structure obtained by adjoining all implied constraints is the \emph{reflection} (or \emph{replica}) of an instance $\mathbb{B}$ into the quasivariety of $\mathbb{A}$; see Maltsev \cite[V.11.3]{mal} and  Sections~\ref{sec:reflection1} and \ref{sec:reflection} below.

\section{Primitive positive formul{\ae} and robust satisfiability}\label{sec:pp}
\begin{defn}\label{defn:pp}
An \emph{atomic formula} is an expression of the form $x= y$ or $(x_1,\dots,x_n)\in r$ for some $r\in \mathscr{R}$.  A \emph{primitive positive formula} \up(abbreviated to \emph{pp-formula}\up) is a formula obtained from a conjunction of atomic formul{\ae} by existentially quantifying some variables.   A pp-formula $\phi(x_1,\dots,x_n)$ with free variables $x_1,\dots,x_n$ defines an $n$-ary relation $r_\phi$, which on an $\mathscr{R}$-structure $\mathbb{A}$ is interpreted as the solution set of $\phi$.  If $\mathscr{F}$ is a set of pp-formul{\ae}, then $\mathbb{A}_\mathscr{F}$ denotes  $\langle A;\{r_\phi\mid \phi\in\mathscr{F}\}\rangle$.
\end{defn}
We let $\pp(\mathscr{R})$ be the set of all pp-formul{\ae} (over some fixed countably infinite set of variables) in $\mathscr{R}$ and let $\pp(\mathbb{A})$ denote the set of all relations on~$A$ that are pp-definable from the fundamental relations of $\mathbb{A}$.  

Let $\mathbb{A},\mathbb{B}$ be $\mathscr{R}$-structures and $\mathscr{F}\subseteq\pp(\mathscr{R})$.  
For a subset $S\subseteq B$, a function $\nu:S\to A$ is \emph{$\mathscr{F}$-compatible} if it is a homomorphism from the substructure $\mathbb{S}$ of $\mathbb{B}_\mathscr{F}$ to $\mathbb{A}_{\mathscr{F}}$.
A function $\nu:S\to A$ extends to a homomorphism precisely when it preserves \emph{all} of $\pp(\mathscr{R})$, so restricting $\mathscr{F}$ to a fixed finite subset of $\pp(\mathscr{R})$ is the natural local condition for extendability; see \cite[Lemma 3.1]{jac:SAT}.
\begin{defn}\label{defn:kf}\cite{ham:SAT} 
Let  $\mathscr{F}$ be a finite set of pp-formul{\ae} in~$\mathscr{R}$ and let $\mathbb{A}$ be a fixed finite $\mathscr{R}$-structure.  For a finite $\mathscr{R}$-structure $\mathbb{B}$, we say that $\mathbb{B}$ is \emph{$(k,\mathscr{F})$-robustly satisfiable} (with respect to $\mathbb{A}$) if $\mathbb{B}$ is a YES instance of $\CSP(\mathbb{A})$ and for every $k$-element subset $S$ of $B$ and every $\mathscr{F}$-compatible assignment $\nu:S\to A$, there is a full solution extending $\nu$.  The structure~$\mathbb{B}$ is $({}{\leq k},\mathscr{F})$-robustly satisfiable if it is $(\ell,\mathscr{F})$-robustly satisfiable for every $\ell\leq k$.
\end{defn} 
Note that $(0,\mathscr{F})$-robust satisfiability coincides with satisfiability.
In \cite{beacul}, the case of $(k,\varnothing)$-robust satisfiability is considered for SAT-related probems using the notation $\widehat{{\bf U}^k}$; this appears in the context of phase transitions and implicit constraints.  The concept of $(k,\varnothing)$-robust satisfiability is called \emph{$k$-supersymmetric} in Gottlob \cite{got}, where it is used to show that there is no polynomial time solver for a minimal constraint network, an issue lingering since the pioneering work of Montanari \cite{mon}.  \ If $\mathscr{P}$ denotes the conjunction-free pp-formul{\ae}, then $(k,\mathscr{P})$-robust satisfiability is the ``$k$-robust satisfiability'' concept introduced in Abramsky, Gottlob and Kolaitis \cite{AGK}, where (for $k=3$ in $3\SAT$) it is applied to show the intractability of detecting local hidden-variable models in quantum mechanics.  The second author showed the \texttt{NP}-completeness of a promise problem form of $(2,\mathscr{P})$-robust satisfiability for positive 1-in-$3\SAT$, and used it to solve a  20+ year old problem in semigroup theory, itself motivated by issues in formal languages.  The first author \cite{ham:SATconf,ham:SAT} recently classified the tractability of $(2,\mathscr{F})$-robust satisfiability (for some $\mathscr{F}$) in the case of Boolean constraint languages.

More complicated sets $\mathscr{F}$ are necessary once $k$ is larger than the arity of relations in~$\mathscr{R}$.  

%
\section{Primitive positive definability and polymorphisms}\label{sec:poly}
When  $\mathscr{R}$ is pp-definable from a set of relations $\mathscr{S}$ on a set $A$ then there is logspace reduction from $\CSP(\langle A;\mathscr{R}\rangle)$ to $\CSP(\langle A;\mathscr{S}\rangle)$. 
This fundamental idea was primarily developed through the work of Cohen, Jeavons and others \cite{Jeav98,JCC,JCG,JCG97,JCP}, though aspects appear in proof of Schaefer's original dichotomy for Boolean CSPs~\cite{sch}.  We investigate this further in Section~\ref{sec:ppstable}. 

There is a well-known Galois correspondence  between sets of relations on a set $A$ and the sets of operations on $A$; see \cite{gei}.   The link is via \emph{polymorphisms}, which are homomorphisms from the direct product $\mathbb{A}^n$ to $\mathbb{A}$.  
In other words, for each relation $r\in\mathscr{R}$ (with arity $k$, say), if we are given an $k\times n$ matrix of entries from $A$, with each column being a $k$-tuple in $r$, then applying the polymorphism~$f$ to each row produces a $k$-tuple of outputs that also must lie in~$r$.
We let $\Pol(\mathbb{A})$ denote the family of all polymorphisms of the relational structure~$\mathbb{A}$.  In general we have $\Pol(\langle A;\mathscr{R}\rangle)\subseteq \Pol(\langle A;\mathscr{S}\rangle)$ if and only if $\mathscr{S}\subseteq \pp(\langle A,\mathscr{R}\rangle)$, so that pp-definability is somehow captured by polymorphisms.  Properties that restrict the strength of pp-definability usually end up being expressible by way of polymorphism equations.  The article \cite{JKN} includes a survey of most of the relevant conditions, but we list just the minimum necessary for our results and arguments.
\begin{itemize}
\item 
An $n$-ary operation $w:A^n\to A$ on a set $A$ is a \emph{weak near unanimity} operation \up(or \emph{WNU}\up) if it satisfies $w(x,x,\dots,x)=x$ (idempotence) and
\(
w(y,x,\dots,x)= w(x,y,\dots,x)=\dots=w(x,x,\dots,y)
\)
for all $x,y$.  A weak near unanimity operation is \emph{near unanimity} \up(NU\up) if it additionally satisfies $w(y,x,\dots,x)=x$.  
\item If the condition of being idempotent is dropped, we refer to a \emph{quasi WNU}, and a \emph{quasi NU} respectively.
\end{itemize}
We mention that most algebraic approaches use the assumption that the template $\mathbb{A}$ is a \emph{core}, meaning that it has no proper retracts.  We now list a selection of pertinent results and conjectures that are expressed in the language of polymorphisms.

The fundamental conjecture on fixed template CSP complexity is the following refinement of Feder and Vardi's original.
\begin{ADC} \cite{BJK}\label{ADC}
Let $\mathbb{A}$ be a finite core relational structure of finite relational signature.  If $\mathbb{A}$ has a  WNU polymorphism then $\CSP(\mathbb{A})$ is tractable.  If $\mathbb{A}$ has no WNU polymorphism then $\CSP(\mathbb{A})$ is \texttt{NP}-complete.
\end{ADC}
The final sentence in the conjecture is proved already in \cite{BJK} (with the WNU condition we state established in \cite{marmck}), with completeness with respect to first order reductions established in \cite{lartes}. There are no counterexamples to the conjecture amongst known classifications.

In the following, \emph{bounded width} corresponds to solvability by way of a local consistency check algorithm, while strict width is a restricted case of this, where every family of locally consistent partial solutions extends to a solution.
\begin{theorem}\label{thm:BW}
Let $\mathbb{A}$ be a finite core relational structure of finite relational signature.\vspace{-.2cm}
\begin{enumerate}
\item \up(Feder and Vardi \cite{fedvar}.\up)  $\CSP(\mathbb{A})$ has strict width if and only if $\mathbb{A}$ has an NU polymorphism.
\item \up(Barto and Kozik \cite{barkoz:BW}.\up) $\CSP(\mathbb{A})$ has bounded width if and only if $\mathbb{A}$ has a $3$-ary WNU $w_3$ and a $4$-ary WNU $w_4$ such that $w_3(y,x,x)=w_4(y,x,x,x)$ holds for all $x,y$.
\end{enumerate}
\end{theorem}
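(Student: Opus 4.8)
\section*{Proof strategy}

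The plan is to establish the two equivalences separately, since they rest on rather different machinery. Throughout I write $\mathbf{A}$ for the algebra $\langle A;\Pol(\mathbb{A})\rangle$ and use the Galois correspondence of Section~\ref{sec:poly}: the relations pp-definable from $\mathbb{A}$ are precisely the invariant relations of $\mathbf{A}$, that is, the subuniverses of finite powers of $\mathbf{A}$.

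For part~(1), the backward direction is a relational form of the Baker--Pixley theorem. Assuming $\mathbb{A}$ has an $n$-ary NU polymorphism $w$ (which is then idempotent), I would first show that every pp-definable relation $r$ of arity $m\ge n$ is the intersection of the preimages of its $(n{-}1)$-coordinate projections: if a tuple $\mathbf{t}$ has each of its $(n{-}1)$-element subtuples lying in the corresponding projection of $r$, choose for each coordinate a witnessing tuple of $r$ and combine those witnesses with $w$ to recover $\mathbf{t}\in r$. I would then deduce that once an instance $\mathbb{B}$ has been made strongly $n$-consistent, every partial solution on at most $n{-}1$ variables extends to a full solution, by processing the remaining variables one at a time and using the $(n{-}1)$-determinacy of the pp-definable relations to certify that no constraint is violated along the way; hence $\CSP(\mathbb{A})$ has strict width at most $n-1$. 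For the forward direction, assuming $\CSP(\mathbb{A})$ has strict width $j$, I would build a specific instance $\mathbb{I}$---obtained from the bounded power $\mathbb{A}^{A^{j+1}}$ by restricting to, and quotienting by, the ``near-unanimity configuration''---whose homomorphisms to $\mathbb{A}$ are exactly the $(j{+}1)$-ary NU polymorphisms, verify directly from idempotency of $\mathbb{A}$ on small configurations that $\mathbb{I}$ is strongly $(j{+}1)$-consistent, and then invoke strict width to extract a solution, which is the required NU polymorphism.

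For part~(2), the forward direction goes through tame congruence theory. I would argue that bounded width of $\CSP(\mathbb{A})$ forces the variety generated by $\mathbf{A}$ to omit the unary and affine types: were one of these types present, then (following Larose and Z\'adori) systems of linear equations over $\mathbb{Z}_p$ for some prime $p$ could be pp-interpreted in $\mathbb{A}$ compatibly with local consistency, and a Tseitin/expander construction exhibits such systems that remain locally consistent at every fixed level while being globally unsatisfiable, contradicting bounded width. I would then invoke the Kozik--Krokhin--Valeriote--Willard characterisation: a finite idempotent algebra omits the unary and affine types if and only if it has weak near unanimity terms $w_3$ and $w_4$ of arities $3$ and $4$ satisfying $w_3(y,x,x)=w_4(y,x,x,x)$. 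Applying this to the expansion of the core $\mathbb{A}$ by its singleton unary relations---which leaves the bounded-width status unchanged and makes all polymorphisms idempotent---yields the stated polymorphism condition on $\mathbb{A}$ itself.

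The backward direction of part~(2) is where the genuine difficulty lies, and I expect it to be the main obstacle. Here I must show that a fixed level of local consistency---concretely, the ``Prague strategy'' or $(2,3)$-consistency---correctly decides $\CSP(\mathbb{A})$ whenever $\mathbb{A}$ carries the WNU pair above. The route is the absorption theory of Barto and Kozik: from omitting the unary and affine types one extracts, for every proper subalgebra $\mathbf{B}<\mathbf{A}$---and, more generally, for the ``blocks'' that arise in a locally consistent instance---either a proper absorbing subuniverse or a controllable quotient. These are then fed into an induction that, starting from a locally consistent instance, repeatedly shrinks the domains attached to the variables while preserving consistency, until every domain is a singleton and a solution can be read off. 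Setting up this absorption apparatus correctly---in particular the analysis of linked instances and the ``one-block'' reductions that keep the shrinking going---is the technically heaviest part; given that, together with the Galois correspondence, the remaining consistency bookkeeping is routine.
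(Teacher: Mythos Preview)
The paper does not prove Theorem~\ref{thm:BW} at all: it is presented in Section~\ref{sec:poly} as a cited background result, with part~(1) attributed to Feder and Vardi~\cite{fedvar} and part~(2) to Barto and Kozik~\cite{barkoz:BW}, and is invoked later only as a black box (for instance in the proof of Theorem~\ref{thm:gap} and Corollary~\ref{cor:qvar}). There is no proof in the paper for you to compare against.

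That said, your outline is a faithful sketch of the original arguments from those sources: the Baker--Pixley decomposition plus the NU-from-strict-width instance for part~(1), and the omitting-types/absorption programme for part~(2), including your correct identification of the backward direction of~(2) as the hard step. Since the present paper treats the theorem as imported, no proof is expected here; if you wish to include one, what you have is an accurate roadmap to the literature proofs rather than anything the paper itself supplies.
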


\section{Main results}\label{sec:mainresults}
The main result (ANT~\ref{thm:main}) concerns the following promise problem, which simultaneously extends $\CSP(\mathbb{A})$, $\CSP_\infty(\mathbb{A})$, robust-$\CSP(\mathbb{A})$ \cite{AGK},  $\SEP(\mathbb{A})$  \cite{ham:SATconf,ham:SAT} and others.  In the title line,~$k$ is a non-negative integer and $\mathscr{F}$ is a finite set of pp-formul{\ae} in the signature of $\mathbb{A}$.\\
\fbox{\parbox{0.98\textwidth}
{Promise problem: $(Y_{(k,\mathscr{F}), \mathsf{Q}},N_{\CSP})$ for $\mathbb{A}$.
\begin{description}
\item[YES:] $\mathbb{B}$ is $(k,\mathscr{F})$-robustly satisfiable with respect to $\mathbb{A}$ and has no implied constraints.
\item[NO:] $\mathbb{B}$ is a no instance of $\CSP(\mathbb{A})$.
\end{description}
}}\\
\rule{0cm}{.5cm}We will let $Y_{(k,\mathscr{F})}$ denote the YES promise but where ``no implied constraints'' is omitted.
\begin{AoNT}\label{thm:main}
Let $\mathbb{A}$ be a finite core relational structure in finite signature~$\mathscr{R}$.  \vspace{-.1cm}
\begin{enumerate}
\item \up(Everything is easy.\up) If $\CSP(\mathbb{A})$ is tractable then so also is deciding both $\CSP_\infty(\mathbb{A})$ and $(k,\mathscr{F})$-robust satisfiability, for any $k$ and any finite set $\mathscr{F}$ of pp-formul{\ae}.
\item \up(Nothing is easy.\up) If $\mathbb{A}$ has no WNU, then for all $k$ there exists a finite set of pp-formul{\ae}~$\mathscr{F}$ such that $(Y_{(k,\mathscr{F}),\mathsf{Q}},N_{\CSP})$ is \texttt{NP}-complete for $\mathbb{A}$ with respect to first order reductions.
\end{enumerate}
\end{AoNT}

\begin{remark}\label{rem:PCP}
The ANT \ref{thm:main} shows that the ADC is equivalent to the ostensibly far stronger dichotomy statement: either there is a WNU and (1) holds, or there is no WNU and $(Y_{(k,\mathscr{F}), \mathsf{Q}},N_{\CSP})$  is \texttt{NP}-complete.  
\end{remark}
\begin{remark}
In part (1) of the ANT, we do not require that $\mathbb{A}$ be a core, only that $\CSP(\mathbb{A}_{\rm Con})$ be tractable, where $\mathbb{A}_{\rm Con}$ denotes the result of enlarging the signature of $\mathbb{A}$ to include all singleton unary relations.  When $\mathbb{A}$ is a core, it is known that $\CSP(\mathbb{A}_{\rm Con})$ is tractable if and only if $\CSP(\mathbb{A})$ is tractable.
\end{remark}
The following result gives a dichotomy within tractable complexity classes (cf.~Theorem~\ref{thm:BW}(2)).
\begin{thm}\label{thm:gap}
Let $\mathbb{A}$ be a finite core relational structure in finite signature~$\mathscr{R}$. \vspace{-.1cm}
\begin{enumerate}
\item If $\CSP(\mathbb{A})$ has bounded width, then there exists $n$ such that for all $k\geq n$ and for all finite sets of pp-formul{\ae} $\mathscr{F}$, the promise problem $(Y_{(k,\mathscr{F}), \mathsf{Q}},N_{\CSP})$ lies in $\texttt{AC}^0$. (If $\CSP(\mathbb{A})$ has strict width, then the class of $(k,\mathscr{F})$-robustly satisfiable instances is itself first order definable.) 
\item If $\CSP(\mathbb{A})$ does not have bounded width then for some prime number $p$ and for all $k$ there exists an $\mathscr{F}$ such that $(Y_{(k,\mathscr{F}),\mathsf{Q}},N_{\CSP})$ is $\texttt{Mod}_p(\texttt{L})$-hard.
\end{enumerate}
\end{thm}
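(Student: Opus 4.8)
The two statements are proved by entirely different means: part~(2) is a quantitative strengthening of the reduction behind the ANT~\ref{thm:main}, while part~(1) is a direct structural argument exploiting (strict, resp.\ bounded) width.

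\emph{Part (1), strict width.} Suppose $\mathbb{A}$ has an $(m{+}1)$-ary near unanimity polymorphism and set $n:=m$. By Theorem~\ref{thm:BW}(1), once an instance has been made $m$-consistent a partial assignment extends to a solution if and only if it is compatible with the propagated $\le m$-ary constraints. The plan is to show that, for $k\ge n$, a structure $\mathbb{B}$ is $(k,\mathscr{F})$-robustly satisfiable precisely when it satisfies a fixed first-order sentence $\sigma_{(k,\mathscr{F})}$ asserting: (i) every induced substructure on $\le m$ variables maps homomorphically into $\mathbb{A}$ --- a genuinely local, hence first-order, check; (ii) these ``initial $\le m$-ary lists'' are already fixed by one round of $m$-consistency propagation, again first-order since it quantifies only over $\le m$-element tuples and over the fixed finite set $A^{\le m}$; and (iii) that every $\mathscr{F}$-compatible assignment on $k$ variables restricts, on each $\le m$-subset of its domain, into the initial list there. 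The essential point is that robust satisfiability with $k\ge m$ rules out long-range constraints: an $\mathscr{F}$-compatible assignment on $\le m$ variables that violated some primitive positive formula would by definition fail to extend (it cannot preserve all of $\pp(\mathscr{R})$; see \cite[Lemma~3.1]{jac:SAT}), contradicting $(k,\mathscr{F})$-robust satisfiability, so the configurations of (i)--(ii) are automatically reached; conversely (i)--(iii) together with strict width force full $(k,\mathscr{F})$-robust satisfiability. Since the quantifier rank and variable count of $\sigma_{(k,\mathscr{F})}$ depend only on $m$, never on the arities appearing in $\mathscr{F}$, a single $n$ works uniformly; this yields the parenthetical assertion and a fortiori membership of the promise problem in $\texttt{AC}^0$.

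\emph{Part (1), general bounded width.} Here the class of $(k,\mathscr{F})$-robustly satisfiable structures need not be first-order and the promise is essential. By the theory of bounded width (Barto--Kozik~\cite{barkoz:BW} and the known collapse of bounded width to relational width $(2,3)$), both $\CSP(\mathbb{A})$ and $\CSP(\mathbb{A}_{\mathscr{F}})$ are decided by the $(2,3)$-consistency algorithm, uniformly in $\mathscr{F}$, with the soundness guarantee that a non-empty closed configuration yields a solution. I would exhibit a first-order sentence $\sigma$ certifying, on the enriched instance $\mathbb{B}_{\mathscr{F}}$, that the as-given configuration of admissible assignments on small variable sets is non-empty and reproduced by one round of propagation. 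A YES instance carries this certificate: being $(k,\mathscr{F})$-robustly satisfiable with \emph{no implied constraints} is exactly what lets one replace the possibly non-terminating propagation by its single-round truncation --- a partial assignment surviving one round but pruned later would either witness an implied constraint or produce a non-extending $\mathscr{F}$-compatible $k$-assignment. A NO instance cannot carry it, since by soundness an as-given non-empty closed configuration would force a solution. Thus $\sigma$ separates the two promise classes in first order.

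\emph{Part (2).} Assume $\CSP(\mathbb{A})$ does not have bounded width. If $\mathbb{A}$ has no WNU, the ANT~\ref{thm:main} already yields \texttt{NP}-completeness of $(Y_{(k,\mathscr{F}),\mathsf{Q}},N_{\CSP})$, which subsumes $\texttt{Mod}_p(\texttt{L})$-hardness for every prime $p$; so assume $\mathbb{A}$ has a WNU. Then the failure of bounded width is detected, through the tame-congruence-theoretic analysis underlying Theorem~\ref{thm:BW}(2), by an affine (type~$\mathbf{2}$) behaviour: $\mathbb{A}$ pp-constructs the constraint language $\mathrm{Lin}(\mathbb{Z}_p)$ of systems of linear equations over $\mathbb{Z}_p$ for some prime $p$, and $\CSP(\mathrm{Lin}(\mathbb{Z}_p))$ is $\texttt{Mod}_p(\texttt{L})$-hard under first-order reductions~\cite{lartes}. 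It remains to transport this hardness to the promise problem. The core of the ANT~\ref{thm:main} is a construction that, from an instance of the CSP of any structure that $\mathbb{A}$ pp-constructs, builds an $\mathscr{R}$-structure which is $(k,\mathscr{F})$-robustly satisfiable with no implied constraints if the source instance is satisfiable, and is a NO instance of $\CSP(\mathbb{A})$ otherwise. Applying this with source $\mathrm{Lin}(\mathbb{Z}_p)$, and checking that this construction together with the pp-construction is a first-order (or at worst \logspace) reduction, yields a reduction from $\CSP(\mathrm{Lin}(\mathbb{Z}_p))$ to $(Y_{(k,\mathscr{F}),\mathsf{Q}},N_{\CSP})$, hence the claimed $\texttt{Mod}_p(\texttt{L})$-hardness.

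\emph{Expected main obstacle.} The delicate step is the general bounded-width case of part~(1): choosing the first-order certificate $\sigma$ so that it holds on \emph{every} $(k,\mathscr{F})$-robustly satisfiable structure with no implied constraints --- including ``rich'' ones such as large direct powers of $\mathbb{A}$, for which the as-given configuration need not be literally closed --- while still failing on every unsatisfiable structure. This forces one to phrase the certified configuration over $\mathbb{A}_{\mathscr{F}}$ in just the right way, so that membership in $\mathsf{Q}(\mathbb{A})$ plays the role the terminating propagation would otherwise play, and to prove that bounded width of $\mathbb{A}_{\mathscr{F}}$ makes the truncated certificate sound uniformly in $\mathscr{F}$ (this is where the width-$(2,3)$ collapse is used, to keep $n$ independent of $\mathscr{F}$). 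The parallel subtlety in part~(2) --- verifying that the ANT construction composes with the pp-construction to $\mathrm{Lin}(\mathbb{Z}_p)$ inside first-order reductions --- is comparatively routine bookkeeping.
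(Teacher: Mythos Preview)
Your part~(1) is more intricate than necessary, and your ``expected main obstacle'' is an artifact of that. The paper's approach does not use the $\mathsf{Q}$ part of the promise at all, nor any truncated propagation. The key observation is simply this: if $\mathbb{B}$ is $(k,\mathscr{F})$-robustly satisfiable and $k\ge j{+}1$, then the family of all $\mathscr{F}_{\le k}$-compatible partial maps on $\le j{+}1$ points is \emph{automatically} a $(j,j{+}1)$-strategy. Restriction and extension are immediate from the definition; the homomorphism condition holds because each such map extends to a full solution by robust satisfiability, hence is itself a partial homomorphism. The statement ``the $\mathscr{F}_{\le k}$-compatible maps on $\le j{+}1$ points form a $(j,j{+}1)$-strategy'' is then first-order expressible (finitely many $(i,\mathscr{F}{\mid}_i)$-types, bounded domains, fixed $\mathbb{A}$). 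On a NO instance no $(j,j{+}1)$-strategy exists by bounded width, so the sentence separates. The strict-width refinement follows because in that case \emph{every} strategy consists of restrictions of full solutions, so the same sentence characterises $(k,\mathscr{F})$-robust satisfiability exactly. There is no need to invoke width-$(2,3)$ collapse or to argue about one round of propagation being a fixed point.

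Your part~(2) has a genuine gap. The ANT machinery does not provide a black-box ``from any pp-constructed source'' construction; its robustness-boosting step (Gottlob's $2k{+}1$-copy trick, Section~\ref{sec:3k3SAT}) is specific to $\SAT$ and relies on the disjunctive clause structure. You cannot route $\mathrm{Lin}(\mathbb{Z}_p)$ through $3\SAT$ because $3\SAT$ does not pp-reduce to $\mathrm{Lin}(\mathbb{Z}_p)$ (the latter is tractable). The paper therefore develops a separate inductive argument for linear equations: replace each variable $x$ by a sum $x_L+x_M+x_R$ to inflate ternary equations to $9$-ary ones (gaining robustness since at most $k$ of $3k{+}2$ new variables can be ``determined''), then regroup back to length~$3$ via carefully chosen pp-reductions with claw formul{\ae}, and iterate. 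This construction, not the composition with the pp-construction, is the substantive content of part~(2); it is not routine bookkeeping.
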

Recall that the $\texttt{Mod}_p(\texttt{L})$ class contains $\texttt{L}$ and hence properly contains $\texttt{AC}^0$; the precise relationship with \texttt{NL} is unknown.  Thus
Theorem \ref{thm:gap} shows that in contrast to $\CSP(\mathbb{A})$ (see~\cite{ABISV,lartes}), one cannot get \texttt{L}-completeness, nor \texttt{NL}-completeness for $(Y_{(k,\mathscr{F}),\mathsf{Q}},N_{\CSP})$ over $\mathbb{A}$ unless there are unexpected collapses between $\texttt{L}$, $\texttt{NL}$ and  $\texttt{Mod}_p(\texttt{L})$ for various $p$.

The complexity of $\CSP(\mathbb{A})$ is determined by the core retract of $\mathbb{A}$, but this is not true for $(k,\mathscr{F})$-robust satisfiability and quasivariety membership; see \cite{jac:SAT} and \cite{ham:SATconf,ham:SAT} for example.  The remaining results however apply regardless of whether $\mathbb{A}$ itself is a core.

\begin{cor}\label{cor:qvar}
Let $\mathbb{A}$ be finite relational structure of finite signature.
\begin{itemize}
\item If $\mathbb{A}$ has no quasi WNU polymorphism then $\CSP_{\infty}(\mathbb{A})$ is \texttt{NP}-complete,
\item If the core retract of $\mathbb{A}$ fails to have bounded width then $\mathsf{Q}(\mathbb{A})$ is not finitely axiomatisable in first order logic, even amongst finite structures.
\end{itemize}
\end{cor}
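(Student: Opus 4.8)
The plan is to obtain both statements as consequences of the All or Nothing Theorem~\ref{thm:main}(2) and Theorem~\ref{thm:gap}(2), applied to the core retract $\mathbb{A}'$ of $\mathbb{A}$, together with one essentially trivial reduction linking $\CSP_\infty(\mathbb{A})$ to the promise problem $(Y_{(0,\varnothing),\mathsf{Q}},N_{\CSP})$ over $\mathbb{A}'$. To set this up I would first record the routine fact that $\mathbb{A}$ has a quasi WNU polymorphism if and only if $\mathbb{A}'$ has a WNU polymorphism: fixing a retraction $\rho\colon\mathbb{A}\onto\mathbb{A}'$ and the inclusion $\iota\colon\mathbb{A}'\into\mathbb{A}$, the map $(a_1,\dots,a_n)\mapsto w'(\rho a_1,\dots,\rho a_n)$ turns a WNU $w'$ of $\mathbb{A}'$ into a quasi WNU of $\mathbb{A}$; and conversely, if $w$ is a quasi WNU of $\mathbb{A}$ then $v(a_1,\dots,a_n):=\rho\bigl(w(\iota a_1,\dots,\iota a_n)\bigr)$ is a quasi WNU of $\mathbb{A}'$ whose diagonal $x\mapsto v(x,\dots,x)$ is an endomorphism of the core $\mathbb{A}'$, hence an automorphism $\sigma$, so that $\sigma^{-1}\circ v$ is a genuine (idempotent) WNU of $\mathbb{A}'$. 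Thus the hypothesis of the first bullet is precisely ``$\mathbb{A}'$ has no WNU'', and the hypothesis of the second bullet is already phrased in terms of $\mathbb{A}'$.

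Next I would establish the reduction. Since $\mathbb{A}'$ is an \emph{induced} substructure of $\mathbb{A}$, any embedding of a structure $\mathbb{B}$ into a power of $\mathbb{A}'$ is also an embedding into a power of $\mathbb{A}$, so $\mathsf{Q}(\mathbb{A}')\subseteq\mathsf{Q}(\mathbb{A})$; hence every YES instance of $(Y_{(0,\varnothing),\mathsf{Q}},N_{\CSP})$ over $\mathbb{A}'$ — which in particular has no implied constraints over $\mathbb{A}'$, i.e.\ lies in $\mathsf{Q}(\mathbb{A}')$ — is a YES instance of $\CSP_\infty(\mathbb{A})$. Conversely, $\mathbb{A}$ and $\mathbb{A}'$ are homomorphically equivalent, so a NO instance of the promise problem admits no homomorphism into $\mathbb{A}$; and a structure with no homomorphism into $\mathbb{A}$ cannot lie in $\mathsf{Q}(\mathbb{A})$ unless it is isomorphic to the trivial structure $\mathbf{1}_\mathscr{R}$, which I would remove by restricting inputs to the first order definable class of structures with at least two elements (harmless for hardness, and automatically met by the outputs of the hardness reductions anyway). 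Modulo that triviality, the identity map is a first order reduction from $(Y_{(0,\varnothing),\mathsf{Q}},N_{\CSP})$ over $\mathbb{A}'$ to $\CSP_\infty(\mathbb{A})$.

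The two bullets then drop out. For the first: the ANT~\ref{thm:main}(2) applied to $\mathbb{A}'$ (which has no WNU) with $k=0$ and $\mathscr{F}=\varnothing$ — recall $(0,\mathscr{F})$-robust satisfiability is just satisfiability — shows $(Y_{(0,\varnothing),\mathsf{Q}},N_{\CSP})$ over $\mathbb{A}'$ is \texttt{NP}-complete, so the reduction yields \texttt{NP}-hardness of $\CSP_\infty(\mathbb{A})$; and $\CSP_\infty(\mathbb{A})\in\texttt{NP}$ since a witness of ``no implied constraints'' is just a list, over the polynomially many nonhyperedges and nonequal pairs, of separating homomorphisms $\mathbb{B}\to\mathbb{A}$, each of size $O(|B|)$ and polynomial-time checkable. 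For the second: if $\mathsf{Q}(\mathbb{A})$ were finitely first order axiomatisable among finite structures, then $\CSP_\infty(\mathbb{A})$ would be first order definable and hence lie in $\texttt{AC}^0$; but Theorem~\ref{thm:gap}(2), applied to $\mathbb{A}'$ (whose CSP lacks bounded width), gives a prime $p$ for which $(Y_{(0,\varnothing),\mathsf{Q}},N_{\CSP})$ over $\mathbb{A}'$ is $\texttt{Mod}_p(\texttt{L})$-hard, and composing with the (first order) reduction makes $\CSP_\infty(\mathbb{A})$ itself $\texttt{Mod}_p(\texttt{L})$-hard — contradicting $\texttt{AC}^0\subsetneq\texttt{Mod}_p(\texttt{L})$, and so establishing non-finite-axiomatisability even among finite structures.

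I expect the only genuine work to be the bookkeeping in the reduction of the second paragraph: making sure the ``do nothing'' map really preserves YES/NO membership hinges on $\mathbb{A}'$ sitting inside $\mathbb{A}$ as an \emph{induced} (not merely weak) substructure, so that the separation/quasivariety condition transfers verbatim, and on correctly disposing of $\mathbf{1}_\mathscr{R}$, the single structure that can belong to $\mathsf{Q}(\mathbb{A})$ without mapping homomorphically into $\mathbb{A}$. One should also confirm the reduction is of low enough complexity (first order, hence $\texttt{AC}^0$) to transport $\texttt{Mod}_p(\texttt{L})$-hardness; it plainly is.
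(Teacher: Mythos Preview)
Your argument is correct and follows essentially the same route as the paper: pass to the core retract, invoke the ANT~\ref{thm:main}(2) (respectively Theorem~\ref{thm:gap}(2)) there, and use the containment $\mathsf{Q}(\mathbb{A}')\subseteq\mathsf{Q}(\mathbb{A})$ together with homomorphic equivalence of $\mathbb{A}$ and $\mathbb{A}'$ to transport hardness to $\CSP_\infty(\mathbb{A})$. Two small remarks: your explicit choice $k=0$, $\mathscr{F}=\varnothing$ is harmless but unnecessary---the YES promise $Y_{(k,\mathscr{F}),\mathsf{Q}}$ already forces ``no implied constraints'' for \emph{any} $k,\mathscr{F}$, so the identity reduction works uniformly (this is how the paper phrases it); and where you hand-argue the quasi-WNU/WNU correspondence, the paper simply cites Chen--Larose~\cite[Lemma~6.4]{chelar}.
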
 
The second statement is equivalent to the absence of quasi WNUs satisfying the conditions of Theorem \ref{thm:BW}(2).
Similar statements to Corollary \ref{cor:qvar} hold for problems intermediate to $\CSP(\mathbb{A})$ and $\CSP_{\infty}(\mathbb{A})$, such as the $\SEP(\mathbb{A})$ of \cite{ham:SAT} and the problem of detecting if no variable is nontrivially forced to take a fixed value: variables with implicitly fixed values have been called the ``backbone'' or ``frozen variables''; see \cite{jonkro} for example.

The following corollary simultaneously covers the original Hell-Ne\v{s}et\v{r}il Dichotomy for simple graphs and a corresponding quasivariety dichotomy; again it does not assume cores.
\begin{GDfSG}\label{thm:GDfSG}
Let $\mathbb{G}$ be a finite simple graph. 
\begin{enumerate}
\item If $\mathbb{G}$ is bipartite, then deciding $\CSP(\mathbb{G})$ and deciding membership in the quasivariety of $\mathbb{G}$ are both tractable.  
\item Otherwise, the following promise problem is \texttt{NP}-complete with respect to first order reductions and for finite input graph $\mathbb{H}$\up: 
\begin{description}
\item[Yes] $\mathbb{H}$ is in $\mathsf{Q}^+(\mathbb{G})$.
\item[No] $\mathbb{H}$ has no homomorphisms into $\mathbb{G}$.
\end{description}
\end{enumerate}
\end{GDfSG}

The first author \cite{ham:SATconf,ham:SAT} classified the tractability of $(2,\mathscr{F})$-robust satisfiability and other problems for Boolean constraint languages, in the form of a ``Gap Trichotomy Theorem''.   By employing the ANT \ref{thm:main} to the same family of critical problems to those examined in \cite{ham:SATconf,ham:SAT}, we can show that the same trichotomy holds with $2$ replaced by any $k\geq 0$ and ``separation condition'' replaced by quasivariety membership.  Using Theorem \ref{thm:gap} it is further possible to classify the precise fine level complexity of  $(Y_{(k,\mathscr{F}),\mathsf{Q}},N_{\CSP})$ for sufficiently large $k$ and $\mathscr{F}$: it is either trivial, or nontrivial but in $\texttt{AC}^0$, or is $\oplus L$-complete or \texttt{NP}-complete.  The details will be given in subsequent work.  Further results that will be given include a complete classification of when the relational clone of a finite semigroup has tractable quasivariety membership problem.

To complete this section we give an overview of how the proof of the ANT develops across the remaining sections.  Part (1) of the ANT is quite straightforward and follows a simpler but similar argument in \cite{jac:SAT}. We sketch the details in Section \ref{sec:proof}.  The proof of the ANT part (2) involves taking the usual proof that $\CSP(\mathbb{A})$ is \texttt{NP}-complete when $\mathbb{A}$ has no WNU, but at each step explaining how the infinitely stronger promise $(Y_{(k,\mathscr{F}),\mathsf{Q}},N_{\CSP{}})$ can be carried through for some suitably constructed $\mathscr{F}$.  There are five main steps which are developed as separate sections once we have introduced some further preliminary development.  The various stages of the proof are unified in Section \ref{sec:proof}.  An outline of the proof of Theorem \ref{thm:gap} is given in Section \ref{sec:BW}, while Section \ref{sec:discuss} gives some ideas for future work, including an example demonstrating the limits to which the NO promise provided by the PCP Theorem can be incorporated in the ANT.

\section{Preliminary development: $\mathscr{F}$-types and claw formul{\ae}}\label{sec:Ftype}
We now establish some useful preliminary constructions relating to pp-formul{\ae} and $(k,\mathscr{F})$-robustness.  Throughout, $\mathbb{A}$ and $\mathscr{F}\subseteq\pp(\mathscr{R})$ are fixed and $\mathbb{B}$ is an input $\mathscr{R}$-structure; all are~finite.

Let $x_1,\dots,x_n$ denote the free variables in some pp-formula $\phi(x_1,\dots,x_n)\in \mathscr{F}$ and let $k$ be a nonnegative integer.  For any function $\iota:\{x_1,\dots,x_n\}\to\{x_1,\dots,x_k\}$ we let $\phi^\iota(x_1,\dots,x_k)$ denote the formula $\phi(\iota(x_1),\dots,\iota(x_n))$.  We let $\mathscr{F}_k$ denote the set of all formul{\ae} obtained in this way.  The following is immediate.
\begin{lem}\label{lem:Fk}
Let $\mathbb{A}$ and $\mathbb{B}$ be $\mathscr{R}$-structures and consider a subset~$\{b_1,\dots,b_k\}$ of~$B$.  A function $\nu\colon\{b_1,\dots,b_k\}\to A$ is $\mathscr{F}$-compatible if and only if for every  $\phi(x_1,\dots,x_k)\in\mathscr{F}_k$, if $\mathbb{B}\models\phi(b_1,\dots,b_k)$ then $\mathbb{A}\models \phi(\nu(b_1),\dots,\nu(b_k))$.
\end{lem}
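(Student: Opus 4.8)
The statement is a direct translation between two formulations, so the plan is to unwind both definitions and observe that they describe the very same family of conditions, whence both implications come out at once.

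First I would recall that $\nu\colon\{b_1,\dots,b_k\}\to A$ is $\mathscr{F}$-compatible precisely when it is a homomorphism from the induced substructure $\mathbb{S}$ of $\mathbb{B}_\mathscr{F}$ on $\{b_1,\dots,b_k\}$ into $\mathbb{A}_\mathscr{F}$. Since $\mathbb{S}$ is the \emph{induced} substructure, for each $\phi(x_1,\dots,x_n)\in\mathscr{F}$ the relation $r_\phi^{\mathbb{S}}$ is exactly $r_\phi^{\mathbb{B}}\cap\{b_1,\dots,b_k\}^n$. Hence $\nu$ being a homomorphism unwinds to the assertion: for every $\phi\in\mathscr{F}$ of arity $n$ and every tuple $(b_{i_1},\dots,b_{i_n})$ with each $i_j\in\{1,\dots,k\}$, if $(b_{i_1},\dots,b_{i_n})\in r_\phi^{\mathbb{B}}$ then $(\nu(b_{i_1}),\dots,\nu(b_{i_n}))\in r_\phi^{\mathbb{A}}$.

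Next I would match this against the right-hand side of the lemma. A tuple of indices $(i_1,\dots,i_n)\in\{1,\dots,k\}^n$ is exactly the data of a function $\iota\colon\{x_1,\dots,x_n\}\to\{x_1,\dots,x_k\}$, namely $\iota(x_j)=x_{i_j}$; and by the definition of $\phi^\iota$ we have $(b_{i_1},\dots,b_{i_n})\in r_\phi^{\mathbb{B}}$ iff $\mathbb{B}\models\phi^\iota(b_1,\dots,b_k)$, and likewise $(\nu(b_{i_1}),\dots,\nu(b_{i_n}))\in r_\phi^{\mathbb{A}}$ iff $\mathbb{A}\models\phi^\iota(\nu(b_1),\dots,\nu(b_k))$. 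As $\phi$ ranges over $\mathscr{F}$ and $\iota$ over all such substitutions, $\phi^\iota$ ranges over precisely $\mathscr{F}_k$ by definition. So the homomorphism condition of the previous paragraph is word-for-word the condition that for every $\phi(x_1,\dots,x_k)\in\mathscr{F}_k$, if $\mathbb{B}\models\phi(b_1,\dots,b_k)$ then $\mathbb{A}\models\phi(\nu(b_1),\dots,\nu(b_k))$, establishing the equivalence in both directions simultaneously.

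The only points that need a little care—and the closest thing to an obstacle here, though it is genuinely minor—are bookkeeping ones: permitting $\iota$ to be non-injective, so that tuples with repeated entries (equivalently, formulae whose free variables have been identified) are included in $\mathscr{F}_k$; and being explicit that $\mathbb{S}$ is the induced substructure, so that no hyperedge of $\mathbb{B}_\mathscr{F}$ lying within $\{b_1,\dots,b_k\}$ is lost in passing to $\mathbb{S}$. Once these are noted there is nothing further to verify, which is why the lemma can be asserted as immediate.
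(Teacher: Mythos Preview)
Your proposal is correct and is precisely the definition-unfolding that the paper gestures at when it says the lemma is ``immediate'' (the paper gives no proof beyond that remark). Your care in noting that $\mathbb{S}$ is the \emph{induced} substructure of $\mathbb{B}_\mathscr{F}$ and that $\iota$ may be non-injective is appropriate; there is nothing to add.
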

The following gives a natural restriction of the model theoretic ``$k$-type'' to pp-formul{\ae}.
\begin{defn}
Let $\mathscr{R}$ be a finite relational signature and $\mathscr{F}$ a set of pp-formul{\ae} in $\mathscr{R}$.  \begin{enumerate}
\item A \emph{$(k,\mathscr{F})$-type} is any conjunction of distinct $k$-ary formul{\ae} in $\mathscr{F}_k$. The set of all $(k,\mathscr{F})$-types is denoted by $\type_k(\mathscr{F})$.  
\item The \emph{$(k,\mathscr{F})$-type of a tuple $\vec{b}\in B^k$} is the conjunction 
\(\displaystyle
\Bigand_{
\footnotesize\begin{matrix}
\phi(x_1,\dots,x_k)\in \mathscr{F}_k\\
\mathbb{B}\models \phi(\vec{b})
\end{matrix}
}
\phi(\vec{x})
\)
and is denoted $\tau_{\vec{b}}^\mathscr{F}(x_1,\dots,x_k)$ (with $\mathbb{B}$ implicit).
\item For $\ell\leq k$ we let $\mathscr{F}{\mid}_\ell$ denote
\(
\{\exists x_{\ell+1}\dots \exists x_{k}\ \tau(x_1,\dots,x_{k})\mid \tau\in \type_k(\mathscr{F})\}.
\)
\end{enumerate}
\end{defn}
The following follows immediately from Lemma \ref{lem:Fk} and the definition of $(k,\mathscr{F})$-types.
\begin{lem}\label{lem:ktype}
Let $\mathbb{A}$ and $\mathbb{B}$ be $\mathscr{R}$-structures and consider a $k$-element subset $\{b_1,\dots,b_k\}$ of $B$.
A partial map $\nu\colon\{b_1,\dots,b_k\}\to A$ is $\mathscr{F}$-compatible if and only if  $\mathbb{A}\models \tau^\mathscr{F}_{\vec{b}}(\nu(b_1),\dots,\nu(b_k))$.
\end{lem}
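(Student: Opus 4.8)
The plan is to prove the equivalence by unfolding the definition of the $(k,\mathscr{F})$-type of $\vec b$ and then quoting Lemma~\ref{lem:Fk}; as the statement itself advertises, this is a matter of bookkeeping rather than a substantive argument. Fix once and for all an enumeration $\vec b=(b_1,\dots,b_k)$ of the $k$-element subset $\{b_1,\dots,b_k\}$, so that $\tau^{\mathscr F}_{\vec b}(x_1,\dots,x_k)$ is defined.

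First I would record that $\mathscr{F}_k$ is finite: $\mathscr{F}$ is finite, and for each $\phi(x_1,\dots,x_n)\in\mathscr{F}$ there are only finitely many maps $\iota\colon\{x_1,\dots,x_n\}\to\{x_1,\dots,x_k\}$, hence only finitely many formul{\ae} $\phi^\iota$. Consequently $\tau^{\mathscr F}_{\vec b}$, being the conjunction of exactly those $\phi\in\mathscr{F}_k$ with $\mathbb{B}\models\phi(\vec b)$, is a genuine (finite) pp-formula, and by the semantics of conjunction we have $\mathbb{A}\models\tau^{\mathscr F}_{\vec b}(\nu(b_1),\dots,\nu(b_k))$ if and only if $\mathbb{A}\models\phi(\nu(b_1),\dots,\nu(b_k))$ for every $\phi\in\mathscr{F}_k$ satisfying $\mathbb{B}\models\phi(\vec b)$. (In the degenerate case, where no $\phi\in\mathscr{F}_k$ holds of $\vec b$ in $\mathbb{B}$, the formula $\tau^{\mathscr F}_{\vec b}$ is an empty conjunction, which is vacuously satisfied, matching the vacuous truth of the condition on the other side.)

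The second and last step is to observe that ``$\mathbb{A}\models\phi(\nu(\vec b))$ for every $\phi\in\mathscr{F}_k$ with $\mathbb{B}\models\phi(\vec b)$'' is just a restatement of ``for every $\phi(x_1,\dots,x_k)\in\mathscr{F}_k$, if $\mathbb{B}\models\phi(\vec b)$ then $\mathbb{A}\models\phi(\nu(\vec b))$'', which is precisely the characterisation of $\mathscr{F}$-compatibility furnished by Lemma~\ref{lem:Fk}. Chaining the two equivalences yields the claim.

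I do not expect any genuine obstacle here. The only points deserving a sentence of care are the finiteness of $\mathscr{F}_k$ (needed so that $\tau^{\mathscr F}_{\vec b}$ is a first-order formula in the ordinary sense and the conjunction semantics apply) and the trivial but necessary passage between the ``bounded-implication over all $\phi$'' form of Lemma~\ref{lem:Fk} and the ``conclusion over the $\phi$ true in $\mathbb{B}$'' form obtained by reading off the conjuncts of the type.
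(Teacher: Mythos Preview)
Your proposal is correct and matches the paper's own treatment: the paper simply states that the lemma ``follows immediately from Lemma~\ref{lem:Fk} and the definition of $(k,\mathscr{F})$-types,'' which is exactly the unfolding you carry out. Your added remarks on the finiteness of $\mathscr{F}_k$ and the empty-conjunction case are harmless elaborations of the same argument.
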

The next lemma has a straightforward proof, which is given in the appendix.
\begin{lem}\label{lem:leqk}
Let $\mathbb{A}$ and $\mathbb{B}$ be finite $\mathscr{R}$-structures and let $\mathscr{F}$ be a finite set of pp-formul{\ae} in terms of $\mathscr{R}$.  If $\mathbb{B}$ is $(k,\mathscr{F})$-robustly satisfiable into $\mathbb{A}$ and $\ell\leq k$, then $\mathbb{B}$ is $(\ell,\mathscr{F}{\mid}_\ell)$-robustly satisfiable.  In particular, if $\mathbb{B}$ is $(k,\mathscr{F})$-robustly satisfiable for some finite set of pp-formul{\ae}~$\mathscr{F}$, then $\mathbb{B}$ is $({\leq k},\bigcup_{0\leq i\leq k}\mathscr{F}{\mid}_i)$-robustly satisfiable, where $\bigcup_{0\leq i\leq k}\mathscr{F}{\mid}_i$ is also a finite set of pp-formul{\ae}.
\end{lem}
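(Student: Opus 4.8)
The plan is to establish the first assertion directly from the definitions of Section~\ref{sec:Ftype}, and then to read off the ``in particular'' clause by two soft observations. So fix $\ell\leq k$. Since $(k,\mathscr{F})$-robust satisfiability already includes that $\mathbb{B}$ is a YES instance of $\CSP(\mathbb{A})$, it suffices to show that every $\mathscr{F}{\mid}_\ell$-compatible map $\mu\colon T\to A$ defined on an $\ell$-element subset $T=\{b_1,\dots,b_\ell\}$ of $B$ extends to a homomorphism $\mathbb{B}\to\mathbb{A}$. The idea is to enlarge $T$ to a $k$-element subset $S=\{b_1,\dots,b_\ell,b_{\ell+1},\dots,b_k\}$ of $B$ and to extend $\mu$ to an $\mathscr{F}$-compatible map $\nu\colon S\to A$; then $(k,\mathscr{F})$-robust satisfiability of $\mathbb{B}$ supplies a full solution $g$ extending $\nu$, and since $g\rest{T}=\nu\rest{T}=\mu$ this proves the claim.

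To build $\nu$, I would set $\vec b=(b_1,\dots,b_k)$ and take $\tau=\tau^{\mathscr{F}}_{\vec b}(x_1,\dots,x_k)$, the $(k,\mathscr{F})$-type of $\vec b$ in $\mathbb{B}$, so that $\mathbb{B}\models\tau(\vec b)$. Then $b_{\ell+1},\dots,b_k$ witness $\mathbb{B}\models(\exists x_{\ell+1}\dots\exists x_k\,\tau)(b_1,\dots,b_\ell)$, and the $\ell$-ary formula $\exists x_{\ell+1}\dots\exists x_k\,\tau$ belongs to $\mathscr{F}{\mid}_\ell$ since $\tau\in\type_k(\mathscr{F})$. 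As $\mu$ is $\mathscr{F}{\mid}_\ell$-compatible, Lemma~\ref{lem:Fk} (applied with $\mathscr{F}{\mid}_\ell$ and $\ell$ in place of $\mathscr{F}$ and $k$) yields $\mathbb{A}\models(\exists x_{\ell+1}\dots\exists x_k\,\tau)(\mu(b_1),\dots,\mu(b_\ell))$; choosing witnesses $a_{\ell+1},\dots,a_k\in A$ and setting $\nu(b_i)=\mu(b_i)$ for $i\leq\ell$ and $\nu(b_i)=a_i$ for $\ell<i\leq k$ gives $\mathbb{A}\models\tau^{\mathscr{F}}_{\vec b}(\nu(b_1),\dots,\nu(b_k))$, whence $\nu$ is $\mathscr{F}$-compatible by Lemma~\ref{lem:ktype}, as required.

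For the ``in particular'' clause, the first assertion gives, for every $i$ with $0\leq i\leq k$, that $\mathbb{B}$ is $(i,\mathscr{F}{\mid}_i)$-robustly satisfiable. Enlarging a family of pp-formul{\ae} only shrinks the corresponding class of compatible partial maps, so from $\mathscr{F}{\mid}_i\subseteq\bigcup_{0\leq j\leq k}\mathscr{F}{\mid}_j$ it follows that $\mathbb{B}$ is $(i,\bigcup_{0\leq j\leq k}\mathscr{F}{\mid}_j)$-robustly satisfiable for each such $i$, which is exactly $({\leq k},\bigcup_{0\leq j\leq k}\mathscr{F}{\mid}_j)$-robust satisfiability. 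Finiteness is clear: $\mathscr{F}_k$ is finite because $\mathscr{F}$ is finite and each of its formul{\ae} admits only finitely many identifications of its variables into $\{x_1,\dots,x_k\}$; hence $\type_k(\mathscr{F})$ is finite, hence each $\mathscr{F}{\mid}_j$ is finite, and so is the union over $j\leq k$.

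The argument is largely bookkeeping, and the one step that really needs care is checking $\mathscr{F}$-compatibility of the extension $\nu$. This works precisely because $\mathscr{F}{\mid}_\ell$ is assembled from the full $(k,\mathscr{F})$-\emph{types} in $\type_k(\mathscr{F})$ rather than from the individual formul{\ae} in $\mathscr{F}_k$: routing through $\tau^{\mathscr{F}}_{\vec b}$ is what forces the single existential witness $(a_{\ell+1},\dots,a_k)$ to be compatible with \emph{all} formul{\ae} of $\mathscr{F}_k$ satisfied by $\vec b$ simultaneously, so that Lemma~\ref{lem:ktype} can be applied. A minor technical point is the assumption $|B|\geq k$ needed to pad $T$ out to a $k$-element subset; this is harmless and can simply be imposed, as the small instances play no role in the later uses of the lemma.
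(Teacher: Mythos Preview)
Your proof is correct and follows essentially the same route as the paper's own argument: pad the $\ell$-set to a $k$-set, use that the existential closure of the $(k,\mathscr{F})$-type lies in $\mathscr{F}{\mid}_\ell$ to push the partial assignment across, choose witnesses in $\mathbb{A}$, and invoke Lemma~\ref{lem:ktype}. Your treatment of the ``in particular'' clause and the finiteness check is more explicit than the paper's (which simply states the conclusion), and your remark on the tacit assumption $|B|\geq k$ is a fair observation that the paper leaves implicit.
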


Recall from Section \ref{sec:poly} that when  $\mathscr{R}$ is pp-definable from a set of relations $\mathscr{S}$ on a set $A$ then there is logspace reduction from $\CSP(\langle A;\mathscr{R}^A\rangle)$ to $\CSP(\langle A;\mathscr{S}^A\rangle)$.  Assume then that each relation symbol $r\in \mathscr{R}$ has been matched to some fixed defining $\mathscr{S}$-formula $\rho_r(x_1,\dots,x_n)$ of the same arity $n$ as $r$:
\[
\exists y_1\dots \exists y_m\ \Bigand_{1\leq i\leq k}\alpha_i(x_{i,1},\dots,x_{i,n_i},y_{i,1},\dots,
y_{i,m_i}),\tag{$\dagger$}\label{eqn:pp}
\]
where each $\alpha_i$ is an atomic formula in $\mathscr{S}\cup\{=\}$, and  
$\bigcup_{1\leq i\leq k}\{x_{i,1},\dots,x_{i,n_i}\}=\{x_1,\dots,x_n\}$ and $\bigcup_{1\leq i\leq k}\{y_{i,1},\dots,y_{i,m_i}\}=\{y_1,\dots,y_m\}$.  Let $\rho_r^\flat$ denote the underlying open formula obtained from $\rho_r$ by removing quantifiers: variables of $\rho_r^\flat$ that are quantified in $\rho_r$ will be called \emph{existential variables} (or $\exists$-variables: the $y_i$ in \eqref{eqn:pp}) and the other variables will be referred to as \emph{open variables}.

Each pp-formula $\psi(x_1,\dots,x_\ell)$ in the signature $\mathscr{R}$  becomes a pp-formula $\psi^\mathscr{S}(x_1,\dots,x_\ell)$ in the signature $\mathscr{S}$: replace each conjunct in $\psi$---an atomic formula $r(x_1,\dots,x_n)$ in for some $r\in\mathscr{R}$---by the defining formula $\rho_r$ as in \eqref{eqn:pp}, and then apply the usual logical rules for moving quantifiers to the front (including renaming quantified variables where necessary).  

\begin{definition}\label{defn:claw}
Let $\mathscr{S}$ define $\mathscr{R}$ by pp-formul{\ae} $\{\rho_r\mid r\in \mathscr{R}\}\subseteq \pp(\mathscr{S})$.
Let $k,\ell$ be fixed non-negative integers and $\mathscr{F}$ a finite set of pp-formul{\ae} in $\mathscr{R}$.  A \emph{claw formula} for $\mathscr{F}$ of arity $k$ and bound $\ell$ is any pp-formula in $\mathscr{S}$ of the form constructed in the third step below:
\begin{enumerate}
\item (The \emph{talon}.) Let $\gamma$ denote any conjunction $\bigand_{1\leq i\leq k'}\rho_{r_i}^\flat$, where $r_i\in\mathscr{R}$ and $k'\leq k$.  We allow some identification between open variables, but not between existential variables.
\item (The \emph{wrist}.) Let $\sigma$ be an $(\ell',\mathscr{F})$-type in $\mathscr{R}$ for some $\ell'\leq \ell$.  Some of the $\ell'$ free variables in $\sigma$ may be identified with open variables in $\gamma$, but not with existential variables.
\item (The \emph{claw}.) Existentially quantify all but $k$ of the unquantified variables in the conjunction $\gamma\And \sigma^\mathscr{S}$.
\end{enumerate}
\end{definition}

\section{Step 1.  Reflection}\label{sec:reflection1}
\begin{defn}
Let $\mathbb{A}$, $k$ and $\mathscr{F}$ be fixed.  For an input $\mathscr{R}$-structure $\mathbb{B}$, let $\mathbb{B}^\downarrow$ be the result of adjoining all hyperedges to $\mathbb{B}$ that are implied by $\mathscr{F}$-compatible assignments from subsets of $\mathbb{B}$ on at most $k$ elements.  The structure $\mathbb{B}^\downarrow$ will be called the \emph{1-step $(k,\mathscr{F})$-reflection} of $\mathbb{B}$.
\end{defn}

Note that implied constraints can involve equality: so $\mathbb{B}^\downarrow$ is in general a proper quotient of $\mathbb{B}$.  Under the promise $(Y_{k,\mathscr{F}},N_{\CSP{}})$ it is possible to show that there is a first order query that defines~$\mathbb{B}^\downarrow$; again, this is given in Section \ref{sec:reflection}, where some further development of reflections is given.   To prove the basic results with respect to polynomial time (or even logspace reductions) we need only the following  lemma and the observation that $\mathbb{B}^\downarrow$ can be constructed in polynomial time from $\mathbb{B}$, because there are only polynomially many $\mathscr{F}$-compatible assignments from subsets of size at most $k$. 
\begin{lem}\label{lem:reflect}
If $\mathbb{B}$ is $({\leq} k,\mathscr{F})$-robustly satisfiable with respect to $\mathbb{A}$, and $k\geq\arity(\mathscr{R})$, then $\mathbb{B}^\downarrow$ lies in the quasivariety of $\mathbb{A}$ and is also $({\leq} k,\mathscr{F})$-robustly satisfiable.  If $\mathbb{B}$ is a NO instance of $\CSP(\mathbb{A})$ then so also is $\mathbb{B}^\downarrow$.
\end{lem}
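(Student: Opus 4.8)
The plan is to organise everything around one structural fact: every homomorphism $h\colon\mathbb{B}\to\mathbb{A}$ factors through the canonical map $q\colon\mathbb{B}\to\mathbb{B}^\downarrow$. Indeed, homomorphisms preserve all pp-formul{\ae}, so by Lemma~\ref{lem:Fk} the restriction of $h$ to every subset of $B$ is $\mathscr{F}$-compatible; consequently $h$ identifies any two elements that all $\mathscr{F}$-compatible assignments on a common set of at most $k$ elements identify, and it sends any hyperedge adjoined in forming $\mathbb{B}^\downarrow$ into the corresponding relation of $\mathbb{A}$. Hence $\bar h([v]):=h(v)$ is a well-defined homomorphism $\mathbb{B}^\downarrow\to\mathbb{A}$ with $h=\bar h\circ q$. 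Since $q$ itself is a homomorphism $\mathbb{B}\to\mathbb{B}^\downarrow$ by construction, the last assertion of the lemma is immediate: a homomorphism $\mathbb{B}^\downarrow\to\mathbb{A}$ composes with $q$ to give one from $\mathbb{B}$, so a NO instance $\mathbb{B}$ forces $\mathbb{B}^\downarrow$ to be a NO instance (and this needs no robustness hypothesis).

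For $({\leq}k,\mathscr{F})$-robust satisfiability of $\mathbb{B}^\downarrow$, fix $\ell\le k$, an $\ell$-element subset $S'\subseteq B^\downarrow$, and an $\mathscr{F}$-compatible $\nu'\colon S'\to A$. Pick a section of $q$ over $S'$: elements $s_1,\dots,s_\ell\in B$ with the $q(s_i)$ being the distinct elements of $S'$, and set $\nu(s_i):=\nu'(q(s_i))$ on $S:=\{s_1,\dots,s_\ell\}$. Since $q$ preserves pp-formul{\ae}, Lemma~\ref{lem:Fk} (used once for $\mathbb{B}$ and once for $\mathbb{B}^\downarrow$) shows that $\nu$ is $\mathscr{F}$-compatible with respect to $\mathbb{B}$; as $\ell\le k$, the $(\ell,\mathscr{F})$-robust satisfiability of $\mathbb{B}$ gives a homomorphism $g\colon\mathbb{B}\to\mathbb{A}$ extending $\nu$, and its factor $\bar g\colon\mathbb{B}^\downarrow\to\mathbb{A}$ extends $\nu'$. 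That $\mathbb{B}^\downarrow$ is a YES instance of $\CSP(\mathbb{A})$ is the special case where one simply factors any solution of $\mathbb{B}$.

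It remains to show $\mathbb{B}^\downarrow\in\mathsf{Q}(\mathbb{A})$, that is, that $\mathbb{B}^\downarrow$ has no implied constraints. Let $(v'_1,\dots,v'_n)\notin r^{B^\downarrow}$, where $n=\arity(r)\le\arity(\mathscr{R})\le k$, and lift the $v'_i$ to representatives $s_1,\dots,s_n\in B$, so $S:=\{s_1,\dots,s_n\}$ has at most $n\le k$ elements. The relation $r^{B^\downarrow}$ consists precisely of the $q$-images of the hyperedges of the structure obtained from $\mathbb{B}$ by adjoining the implied hyperedges, so $(s_1,\dots,s_n)$ is not among those hyperedges (otherwise $(v'_1,\dots,v'_n)$ would lie in $r^{B^\downarrow}$); by the definition of $\mathbb{B}^\downarrow$ this means there is an $\mathscr{F}$-compatible $\nu\colon S\to A$ with $(\nu(s_1),\dots,\nu(s_n))\notin r^A$. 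Extending $\nu$ to $g\colon\mathbb{B}\to\mathbb{A}$ via $({\leq}k,\mathscr{F})$-robust satisfiability and passing to $\bar g$ produces a homomorphism $\mathbb{B}^\downarrow\to\mathbb{A}$ mapping $(v'_1,\dots,v'_n)$ to a nonhyperedge. Implied equalities are handled the same way, now with a two-element subset $\{s_1,s_2\}\subseteq B$ lifting two distinct elements of $B^\downarrow$: these were not identified, so $\{s_1,s_2\}$ was not an adjoined implied equality, whence some $\mathscr{F}$-compatible assignment separates $s_1$ and $s_2$, and its extension descends to separate the two elements of $\mathbb{B}^\downarrow$.

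I do not anticipate a genuine obstacle; the proof is essentially bookkeeping, and the only delicate point is that $\mathbb{B}^\downarrow$ is simultaneously a quotient and an enrichment of $\mathbb{B}$, so the factorisation of homomorphisms through $q$ and the passage between subsets of $\mathbb{B}^\downarrow$ and subsets of $\mathbb{B}$ (via a section of $q$) must be set up cleanly before the three assertions drop out. The conceptual heart is just that robust satisfiability promotes a locally witnessed failure of an implied constraint to an honest homomorphism, and that such a homomorphism automatically respects the reflection $\mathbb{B}^\downarrow$.
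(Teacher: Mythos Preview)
Your argument is correct and matches the paper's own proof (given in the appendix via Lemmas~\ref{lem:equiv}, \ref{lem:inQ}, and~\ref{lem:fullref}): both hinge on the factorisation of every homomorphism $\mathbb{B}\to\mathbb{A}$ through the quotient $q$, then lift along a section of $q$ to transfer $\mathscr{F}$-compatible assignments back to $\mathbb{B}$ and invoke robust satisfiability. The only cosmetic difference is that the paper routes quasivariety membership through the intermediate notion of ``completely $(k,\mathscr{F})$-unfrozen'' (Lemma~\ref{lem:inQ}), whereas you verify the absence of implied constraints directly; the underlying computation is identical.
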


\section{Step 2.  Stability of robustness over primitive-positive reductions}\label{sec:ppstable}
We prove the following promise problem variant of the usual pp-reduction for CSPs.
\begin{thm}\label{thm:ppGAP}
Assume that $\mathbb{A}_1=\langle A,\mathscr{R}^A\rangle$ and $\mathbb{A}_2=\langle A,\mathscr{S}^A\rangle$ are two relational structures on the same finite set $A$, with $\mathscr{R}^A\subseteq \pp(\mathbb{A}_2)$ finite and $\ell:=\arity(\mathscr{R})$.  Let $\mathscr{F}$ be a finite set of pp-formul{\ae} in the language of $\mathscr{R}$.  Then, for any $k$,  the standard pp-reduction of  $\CSP(\mathbb{A}_1)$ to $\CSP(\mathbb{A}_2)$ takes $({\leq} k \ell,\mathscr{F})$-robustly satisfiable instances of $\CSP(\mathbb{A}_1)$ to $(k,\mathscr{G})$-robustly satisfiable instances of $\CSP(\mathbb{A}_2)$, where $\mathscr{G}$ denotes the $k$-ary claw formul{\ae} for $\mathscr{F}$ of bound~$k \ell$.
\end{thm}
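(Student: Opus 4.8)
The plan is to run the standard pp-reduction and carry the promise through it. Recall that this reduction sends an input $\mathscr{R}$-structure $\mathbb{B}$ of $\CSP(\mathbb{A}_1)$ to the $\mathscr{S}$-structure $\mathbb{B}^\mathscr{S}$ whose universe is $B$ together with, for each hyperedge $e=\langle(v_1,\dots,v_n),r\rangle$ of $\mathbb{B}$, a fresh set $Y_e$ of elements indexed by the $\exists$-variables of $\rho_r^\flat$; the $\mathscr{S}$-relations of $\mathbb{B}^\mathscr{S}$ are exactly those forced by instantiating every conjunct of $\rho_r^\flat$ over each such $e$ (open variables $\mapsto v_i$, $\exists$-variables $\mapsto$ the members of $Y_e$), with equality conjuncts merely imposing identifications. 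I will use the two standard facts about the translation $\psi\mapsto\psi^\mathscr{S}$ of pp-$\mathscr{R}$-formul{\ae}: (i) on the common domain $A$, $\mathbb{A}_1\models\psi(\vec{a})$ iff $\mathbb{A}_2\models\psi^\mathscr{S}(\vec{a})$, since each $\rho_r$ defines $r^A$ in $\mathbb{A}_2$; and (ii) $\mathbb{B}\models\psi(\vec{b})$ implies $\mathbb{B}^\mathscr{S}\models\psi^\mathscr{S}(\vec{b})$, witnessed by the $Y_e$'s (every satisfied $\mathscr{R}$-atom in $\mathbb{B}$ corresponds to a genuine hyperedge). Since a $({\le}k\ell,\mathscr{F})$-robustly satisfiable $\mathbb{B}$ is in particular satisfiable, $\mathbb{B}^\mathscr{S}$ is a YES instance of $\CSP(\mathbb{A}_2)$, so the work is entirely in the robustness clause.

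So fix a $k$-element set $S=\{c_1,\dots,c_k\}\subseteq B^\mathscr{S}$ together with a $\mathscr{G}$-compatible map $\nu\colon S\to A$; I must extend $\nu$ to a homomorphism $\mathbb{B}^\mathscr{S}\to\mathbb{A}_2$. Assign to each $c_i$ a set of ``parents'' $P_i\subseteq B$: put $P_i=\{c_i\}$ when $c_i\in B$, and $P_i=\{v_1,\dots,v_n\}$ (the entries of $e$) when $c_i\in Y_e$. Let $T=\bigcup_i P_i$, so $|T|\le k\ell$, and let $e_1,\dots,e_{k'}$ (with $k'\le k$) list the distinct hyperedges meeting $S$ in their fresh part. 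From this local data I assemble a claw formula $\chi$ in the sense of Definition~\ref{defn:claw}: the talon $\gamma$ is one fresh copy of $\rho_{r_j}^\flat$ for each $e_j$, with open variables identified across copies exactly when the corresponding entries of $\mathbb{B}$ coincide; the wrist is $\sigma=\tau_{\vec{t}}^\mathscr{F}$, the $(|T|,\mathscr{F})$-type of an enumeration $\vec{t}$ of $T$, with each free variable of $\sigma$ identified with the matching open variable of $\gamma$ whenever that element of $T$ is an entry of some $e_j$; and then all variables of the conjunction of $\gamma$ with $\sigma^\mathscr{S}$ are existentially quantified except the $k$ that name the elements of $S$ (for $c_i\in Y_{e_j}$ this is an $\exists$-variable of the $j$-th copy, for $c_i\in S\cap B$ a free variable of $\sigma$). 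One then checks, first, that $\chi$ is one of the $k$-ary claw formul{\ae} for $\mathscr{F}$ of bound $k\ell$ and so, having exactly $k$ free variables, belongs to $\mathscr{G}_k$; second, that $\mathbb{B}^\mathscr{S}\models\chi(c_1,\dots,c_k)$ --- the $Y_{e_j}$'s witness the talon, and fact (ii) together with $\mathbb{B}\models\tau_{\vec{t}}^\mathscr{F}(\vec{t})$ witnesses $\sigma^\mathscr{S}(\vec{t})$. Hence, by Lemma~\ref{lem:Fk} applied to $\mathscr{G}$ and the $\mathscr{G}$-compatibility of $\nu$, we obtain $\mathbb{A}_2\models\chi(\nu(c_1),\dots,\nu(c_k))$.

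The remainder is extension. A witnessing assignment $g$ for $\mathbb{A}_2\models\chi(\nu(\vec{c}))$ yields, on reading off its values on the $\sigma$-variables, a map $\mu\colon T\to A$ agreeing with $\nu$ on $S\cap B$ (each such element is named by a free variable of $\chi$); because $g$ satisfies $\sigma^\mathscr{S}$ in $\mathbb{A}_2$ and $\sigma=\tau_{\vec{t}}^\mathscr{F}$, fact (i) gives $\mathbb{A}_1\models\tau_{\vec{t}}^\mathscr{F}(\mu(\vec{t}))$, so $\mu$ is $\mathscr{F}$-compatible by Lemma~\ref{lem:ktype} and $|T|\le k\ell$. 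Now $({\le}k\ell,\mathscr{F})$-robust satisfiability of $\mathbb{B}$ extends $\mu$ to a homomorphism $\phi\colon\mathbb{B}\to\mathbb{A}_1$, which I extend to $G\colon\mathbb{B}^\mathscr{S}\to\mathbb{A}_2$ by taking $G\rest{B}=\phi$, sending each $Y_{e_j}$ to the values $g$ assigns to the $\exists$-variables of the $j$-th copy (consistent, as $g$ satisfies $\rho_{r_j}^\flat$ in $\mathbb{A}_2$ and, on the entries of $e_j$, agrees with $\phi$ via $\mu$), and sending every other $Y_e$ to any witnesses for $\rho_r$ over $(\phi(v_1),\dots,\phi(v_n))\in r^A$. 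Then $G$ satisfies every instantiated conjunct of every $\rho_r^\flat$, so it is a homomorphism; and $G$ restricts to $\nu$ on $S$ --- on $S\cap B$ through $\phi\supseteq\mu$ and $\mu\rest{S\cap B}=\nu\rest{S\cap B}$, and on each $c_i\in Y_{e_j}$ because the $\exists$-variable naming $c_i$ was kept free in $\chi$ and $g$ maps it to $\nu(c_i)$. This gives $(k,\mathscr{G})$-robust satisfiability.

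The main obstacle is the verification in the middle paragraph that the $\chi$ assembled from an arbitrary $S$ is genuinely one of the permitted claw formul{\ae}: that the talon uses at most $k$ copies (it uses $k'\le|S|$), that the wrist is an $(\ell',\mathscr{F})$-type with $\ell'=|T|\le k\ell$, that exactly $k$ variables are kept free, and that only the identifications allowed by Definition~\ref{defn:claw} occur (open-to-open for coinciding hyperedge entries, and free-$\sigma$-variable-to-open), while the $\exists$-variables of distinct copies --- and those naming distinct elements of $S$ --- stay distinct. The bookkeeping around equality conjuncts in the $\rho_r^\flat$ (and the identifications they induce inside $\mathbb{B}^\mathscr{S}$) is a secondary technicality, but it is inert here since $g$, $\phi$, and all chosen witnesses respect the forced equalities.
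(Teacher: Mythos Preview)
Your proof is correct and follows essentially the same approach as the paper: build a claw formula from the talon (the $\rho_{r_j}^\flat$'s for hyperedges whose fresh part meets $S$) and the wrist (the $(\lvert T\rvert,\mathscr{F})$-type of the open ``parent'' set $T$), use $\mathscr{G}$-compatibility to pull witnesses back in $\mathbb{A}_2$, apply Lemma~\ref{lem:ktype} to get an $\mathscr{F}$-compatible map on $T$, extend by robustness of $\mathbb{B}$, and then re-use the witnessing $\exists$-values for the $Y_{e_j}$'s. Your bookkeeping is slightly more explicit than the paper's (you compute $\lvert T\rvert\le k\ell$ directly via $\lvert P_i\rvert\le\ell$, whereas the paper separates $O_B$ and $D_B$ and uses $k'+\lvert D_B\rvert\le k$), but the argument is the same.
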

First briefly recall the precise nature of the ``standard reduction'' described in Theorem~\ref{thm:ppGAP}.  Recall that each $r\in \mathscr{R}$ corresponds to an $\mathscr{S}$ formula $\rho_r$, as in~\eqref{eqn:pp}.
For an instance $\mathbb{B}=\langle B;\mathscr{R}^B\rangle$ of $\CSP(\mathbb{A}_1)$, an instance $\mathbb{B}^\sharp$ of $\CSP(\mathbb{A}_2)$ is constructed in the following way.  For each hyperedge $(b_1,\dots,b_n)\in r$ in $\mathbb{B}$ (and adopting the generic notation of \eqref{eqn:pp}), new elements $c_1,\dots,c_m$ are added to the universe of $B$, and the hyperedge $(b_1,\dots,b_n)\in r$ is replaced by the hyperedges $\alpha_i(b_{i,1},\dots,b_{i,n_k},c_{i,1},\dots,c_{i,m_i})$ for each $i=1,\dots,k$.  
Note that new elements $c_1,\dots,c_m$ are introduced for every instance of a hyperedge.  The new elements will be referred to as \emph{existential elements} (or $\exists$-elements), and for any $D\subseteq B^\sharp$ we let $D_\exists$ denote $\exists$-elements in $D$.   Elements of $B$ will be referred to as \emph{open elements}, and we write $D_B$ for $D\cap B=D\backslash D_\exists$. 

It is easy to see that there is a homomorphism from $\mathbb{B}$ to $\mathbb{A}_1$ if and only if there is one from $\mathbb{B}^\sharp$ to $\mathbb{A}_2$: this is the usual logspace CSP reduction, which is a first order reduction when none of the $\rho_r$ formul{\ae} involve equality \cite{lartes}.  Now assume that $\mathbb{B}$ is $(\leq k\ell,\mathscr{F})$-robustly satisfiable with respect to $\mathbb{A}_1$ and consider a $k$-element subset $D\subseteq B^\sharp$, for which there is a $\mathscr{G}$-compatible assignment into $A$.  The following arguments will refer back to the 3-step construction of claw formul{\ae} in Definition \ref{defn:claw}.

Each $c\in D_\exists$ was introduced in replacing a hyperedge of $\mathbb{B}$ in signature $\mathscr{R}$ by a \emph{family} of hyperedges in the signature $\mathscr{S}$, according to the pp-definition as in \eqref{eqn:pp}.  Each element of $D_\exists$ appears in at most one such family of $\mathscr{S}$-hyperedges, so the number of these, $k'$, is at most $|D_\exists|\leq k$.  
Observe that these hyperedge families correspond to an interpretation of a conjunction $\gamma$ of $k'$ many formul{\ae} as in step 1 of Definition \ref{defn:claw}: there is no identification of $\exists$-elements, but there may be of open elements.  
Each of these families involves at most~$\ell$ open elements, so that at most $k'\times \ell$ open elements appear in these hyperedge families.  
Let $O_B$ denote these elements.  Because $k'+|D_B|\leq |D_\exists|+ |D_B|=k$ and $|O_B|\leq k'\ell$, we have $|O_B\cup D_B|\leq k'\ell+|D_B|\leq k\ell$.  Let $\sigma$ denote the $(\ell',\mathscr{F})$-type of $O_B\cup D_B$ in $\mathbb{B}$, as in the second step of Definition \ref{defn:claw}.  (Here we treat $O_B\cup D_B$ as a tuple ordered in any fixed way.)  Observe that some elements $b$ of $D_B$ may also lie in $O_B$, and  we will assume then that the variable in $\sigma$ corresponding to $b$ has been identified with the variable in $\gamma$  corresponding to $b$.  Let $U$ be the set of all unquantified variables in $\gamma\And \sigma^\mathscr{S}$ that do not correspond to elements of $D$.  
The claw formula $\exists U\ \gamma\And \sigma^\mathscr{S}$ is in $\mathscr{G}$ and is satisfied by $\mathbb{B}^\sharp$ at $D$ (again, arbitrarily treated as a tuple).  
Hence $\exists U\ \gamma\And \sigma^\mathscr{S}$ is preserved by $\nu$.  In particular then, in $\mathbb{A}_2$ we can find values for the variables corresponding to the elements of $O_B$ that witness the satisfaction of  $\exists U\ \gamma\And \sigma^\mathscr{S}$ at $\nu(D)$.  
Let $\nu':O_B\cup D\to A$ be the extension of $\nu$ obtained by giving elements of $O_B\backslash D_B$ these witnessing values.  
Because $\sigma$ is the $(\ell',\mathscr{F})$-type of $O_B\cup D_B$, it follows from Lemma~\ref{lem:ktype} that $\nu'{\mid}_{O_B\cup D_B}$ is $\mathscr{F}$-compatible, so by the assumed $({\leq} k \ell,\mathscr{F})$-robust satisfiability of $\mathbb{B}$ it follows that $\nu'{\mid}_{O_B\cup D_B}$ extends to a homomorphism $\nu^+$ from $\mathbb{B}$ to $\mathbb{A}_1$.  
By the usual pp-reduction, $\nu^+$ extends to a homomorphism $\nu^\sharp$ from $\mathbb{B}^\sharp$ to $\mathbb{A}_2$.  
Now $\nu^\sharp$ agrees with $\nu$ on $D_B$, but also, we may assume that it agrees with $\nu$ on $D_\exists$, because the values given $O_B$ by $\nu'$ (and hence $\nu^\sharp$) were such that $\gamma$ held.
Thus we have extended $\nu$ to a homomorphism, as required.  (An example of this argument is given in Appendix \ref{eg:ppgap}.)

\section{Step 3. $(k,\varnothing)$-robustness of $(3k+3)\SAT$}\label{sec:3k3SAT}
Gottlob \cite[Lemma 1]{got} showed that the standard Yes/No decision problem $3\SAT$ reduces to the promise $(Y_{(k,\varnothing)},N_{\CSP{}})$ for $(3k+3)\SAT$.  For the sake of completeness of our sketch, we recall the basic idea.  The construction is to replace in a $3\SAT$ instance $\mathbb{B}$, each element~$b$ by $2k+1$ copies $b_{1},\dots,b_{2k+1}$ and then each clause $(b\vee c\vee d)$ by all $\binom{2k+1}{k+1}^3$ clauses of the form $({b}_{i_1}\vee\dots \vee {b}_{i_{k+1}}\vee {c}_{i_1'}\vee\dots \vee {c}_{i_{k+1}'}\vee {d}_{i_1''}\vee\dots \vee {d}_{i_{k+1}''})$ where the $i_j, i_j',i_j''$ are from $\{1,\dots,2k+1\}$.  No assignment on $k$ elements covers all of the $k+1$ copies of any element in a clause it appears, which enables the flexibility for such assignments to always extend to a solution, provided (and only when) $\mathbb{B}$ is a YES instance.  It is routine to achieve this via a first order reduction: the argument is given in Section \ref{appsec:3k3SAT}.

\section{Step 4. $(k,\mathscr{F})$-robustness of $3\SAT$}\label{sec:3SAT}
We now establish the following theorem by reduction from the result in Step 3.  Critically, the value of $k$ is arbitrary, but the constraint language ($3\SAT$) has fixed arity $3$.
\begin{thm}\label{thm:3SAT}
Fix any $k\geq 0$ and let $\mathscr{F}$ be the set of all claw formul{\ae} for $\varnothing$ of arity $k$ and with bound $k$.  Then  $(Y_{(k,\mathscr{F})},N_{\CSP})$ for $3\SAT$ is \texttt{NP}-complete via first order reductions.
\end{thm}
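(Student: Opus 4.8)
The plan is to reduce from the promise problem of Step~3.  By Step~3 (Section~\ref{sec:3k3SAT}) there is a first-order reduction from $3\SAT$, hence from an \texttt{NP}-complete problem, to $(Y_{(K,\varnothing)},N_{\CSP})$ for $(3K+3)\SAT$, valid for every~$K$; I would use it with $K$ a suitable multiple of~$k$.  Composing with the standard ``chain'' pp-reduction, which replaces each $n$-ary clause $\ell_1\vee\dots\vee\ell_n$ (here $n=3K+3$) by the $3$-clauses $(\ell_1\vee\ell_2\vee z_2)$, the clauses $(\bar z_i\vee\ell_{i+1}\vee z_{i+1})$ for $2\le i\le n-3$, and $(\bar z_{n-2}\vee\ell_{n-1}\vee\ell_n)$ over fresh auxiliary variables $z_2,\dots,z_{n-2}$, produces an overall reduction to $\CSP(3\SAT)$ that is first order (no equalities appear) and carries no-instances to no-instances.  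Everything then hinges on showing that, when the $3\SAT$ input is satisfiable, the resulting $3\SAT$ instance is $(k,\mathscr{F})$-robustly satisfiable.

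The point I would exploit is that the auxiliary variables of a chain are only loosely constrained by its literals.  If $\ell_1,\dots,\ell_n$ satisfy $\ell_1\vee\dots\vee\ell_n$, then $z_a$ is forced to~$1$ exactly when $\ell_1=\dots=\ell_a=0$, forced to~$0$ exactly when $\ell_{a+1}=\dots=\ell_n=0$, and a downward transition of the $z$'s between indices $a<a'$ is available as soon as some literal among $\ell_{a+1},\dots,\ell_{a'}$ is true.  A short case analysis then shows that a prescribed $0/1$-pattern on any $t$ of a chain's auxiliary variables extends to a full satisfying assignment of that chain precisely when $\ell_1,\dots,\ell_n$ carries a true literal in each of a list of at most $t+2$ explicitly computable index-ranges; realizing such a pattern therefore pins down only $O(t)$ of the $n$ literals, although the chain mentions all~$n$ of them.

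With this in hand the argument proceeds as follows.  Take a $k$-element subset $D$ of the chained instance, split into literal-copies $D_B$ and auxiliary variables $D_\exists$, and an $\mathscr{F}$-compatible $\nu\colon D\to\{0,1\}$.  Since $|D_B|+|D_\exists|=k$, one can fix at most $k$ chains of the reduction so that every element of $D$ lies in one of them---at most one chain per variable of $D_\exists$, plus one further chain through each copy of $D_B$ not already covered.  Apply $\mathscr{F}$-compatibility to the claw formula built from the $3$-clause parts of these chains, with their shared literal-copies identified and the $k$ variables of~$D$ kept free (a claw formula for~$\varnothing$ of arity~$k$---its bound is immaterial, since for~$\varnothing$ the ``wrist'' is the trivial type): $\nu$ preserves it, and this yields a globally consistent Boolean assignment $\mu^\ast$ on the variables of these chains that agrees with $\nu$ on~$D$.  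From each chain that carries a variable of $D_\exists$, read off the true literal-copies witnessing its at most $t_i+2$ required ranges ($t_i$ being the number of that chain's auxiliary variables in $D$, so $\sum_i t_i\le|D_\exists|$): this gives at most $\sum_i(t_i+2)\le 3k$ literal-copies with prescribed values, which together with $\nu$ on $D_B$ form a consistent partial assignment on at most $4k$ copies of the Step~3 blow-up.  Since $K$ was chosen so that the blow-up is $(4k,\varnothing)$-robustly satisfiable, this extends to a full satisfying assignment $\mu_0$ of the blow-up.  Finally extend $\mu_0$ over all auxiliary variables: on each chain carrying a variable of $D_\exists$ the loose-constraint analysis supplies auxiliary values that fit $\mu_0$ and agree with $\nu$ on $D_\exists$, while every other chain is completed arbitrarily, yielding a full solution of the chained instance extending~$\nu$.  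As under the promise the YES instances are exactly the satisfiable ones, \texttt{NP}-membership is immediate, so the promise problem is \texttt{NP}-complete with respect to first-order reductions.

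The step I expect to be the main obstacle is the second paragraph: pinning down exactly how a partial assignment on $t$ of a chain's auxiliary variables constrains its literals, so that only $O(t)$ of them---rather than all $\Theta(K)$ of them---need be controlled.  This is what makes it possible to route the Step~3 instance, whose $\varnothing$-robustness parameter is only linear in its arity, through the chain reduction and still emerge with $(k,\mathscr{F})$-robustness for claw formul{\ae} of bound merely~$k$; applying Theorem~\ref{thm:ppGAP} directly to the clause relation of arity $3K+3$ would instead demand robustness on $\Theta(kK)$ variables and break down.  A secondary point needing care is verifying that claw formul{\ae} of bound~$k$ really supply exactly the cross-chain consistency needed for $\mu^\ast$ to exist.
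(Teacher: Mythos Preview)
Your proposal is correct and follows essentially the same strategy as the paper: reduce from Gottlob's $(K,\varnothing)$-robust $(3K+3)\SAT$ via the chain pp-reduction, then use a claw formula over the relevant chains to locate witnessing true literals in the intervals (your ``ranges'') created by the values $\nu$ gives to $D_\exists$, and finally invoke robustness of the source instance on those literals together with $D_B$. The only difference is quantitative: the paper's arrow bookkeeping shows that the number of convergent intervals across all chains is at most $|D_\exists|$ (not your looser $\sum_i(t_i+2)$), so $|E\cup D_B|\le k$ and $K=k$ already suffices---no inflation of Gottlob's parameter is needed.
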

The usual reduction of $n\SAT$ to $3\SAT$ (as in \cite{garjoh} for example) is an example of a pp-reduction, because the $n\SAT$ clause relation $(x_1\vee\dots\vee x_n)$  (where the $x_i$ can be negated variables if need be) is equivalent to the following pp-formula over $n-2$ clause relations of  $3\SAT$:
\[
\exists y_1\dots \exists y_{n-4}\ (x_1\vee x_2\vee y_1)\And\left(\bigand_{3\leq i\leq n-2} (\neg y_{i-2}\vee x_i\vee y_{i-1})\right)\And (\neg y_{n-3}\vee x_{n-1}\vee x_n)\tag{$\ddagger$}\label{eqn:3SAT}
\]
As we are dealing with the standard pp-reduction, an instance $\mathbb{B}$ of $(3k+3)\SAT$ is satisfiable if and only if the constructed instance $\mathbb{B}^\sharp$ of $3\SAT$  is satisfiable.

Now assume that $\mathbb{B}$ is a $(k,\varnothing)$-robustly satisfiable instance of $(3k+3)\SAT$.  Assume  $D$ is a $k$-set from $\mathbb{B}^\sharp$ and $\nu:D\to\{0,1\}$ an $\mathscr{F}$-compatible partial assignment.  As in the proof of Theorem \ref{thm:ppGAP}, there are $k'\leq |D_\exists|$ different clause families involving elements from~$D_\exists$; let $F$ denote this set of families of clauses (each family arising by the replacement of a $(3k+3)\SAT$ clause by the $3k+1$ distinct $3\SAT$ clauses).   Let $\gamma$ denote the conjunction of~$k'$ many pp-formul{\ae} corresponding to these $F$: it is a conjunction of $k'$ distinct copies of the underlying open formula of~\eqref{eqn:3SAT}, possibly with some of the open variables in different copies identified.  Let $U$ be the variables of $\gamma$ that do not correspond to an element of $D$.  Then $\exists U\ \gamma$ is a claw formul{\ae} in the sense of Definition \ref{defn:claw} because the only $(\ell,\varnothing)$-types (as detailed in step 2 of Definition \ref{defn:claw}) are empty formul{\ae}.  This formula $\exists U\ \gamma$ is obviously satisfied at~$D$ in~$\mathbb{B}^\sharp$, so is preserved by $\nu$.  Now the proof deviates from Theorem \ref{thm:ppGAP}.  We show how to assign values to at most $k$ of the remaining open elements of $F$ such that any extension to a full solution on $\mathbb{B}$ extends to one for $\mathbb{B}^\sharp$ in a way consistent with the values given to $D_\exists$ by~$\nu$.

We introduce an arrow notation to help identify how to select the new open elements. 
\begin{itemize}
\item Above the leftmost bracket of the clause family we place a right arrow $\mapsto$, and dually a $\mapsfrom$ over the rightmost bracket.
\item Place a left arrow $\mapsfrom$ above a consecutive pair of brackets ``$)($'' if the  $\exists$-element immediately preceding it is given $0$ by~$\nu$, and dually, $\mapsto$ if the $\exists$-element is assigned $1$.
\end{itemize}
Let us say that two such arrows are \emph{convergent} if they point toward one another.  To extend~$\nu$ to a full solution, it suffices to find, within each pair of convergent arrows, a open-literal to assign the value $1$.  We first give  an example, consisting of a clause family, an assignment to some elements (say, $D_\exists=\{b_1,b_2,b_3\}$ and $D_B=\{a_1\}$) and the arrows placed as determined by the rules:
\[
\null\hspace{1cm}\begin{tabular*}{\columnwidth}{@{\extracolsep{0pt}}ccccccccccccccccc}
${(}$&$a_1$&$a_2$&$b_1$&$)($&$\neg b_1$&$a_3$&$b_2$&$)($& $\neg b_2$&$a_4$&$b_3$&$)($& $\neg b_3$&$a_5$&$a_6$&$)$\\
${(}$&$0$&$a_2$&$0$&$)($&$1$&$a_3$&$1$&$)($& $0$&$a_4$&$0$&$)($& $1$&$a_5$&$a_6$&$)$\\
\rule{0cm}{.55cm}$\stackrel{\mapsto}{(}$&&$a_2$&&$\stackrel{\mapsfrom}{)(}$&&$a_3$&& $\stackrel{\mapsto}{)(}$&&$a_4$&&$\stackrel{\mapsfrom}{)(}$&&$a_5$&$a_6$&$\stackrel{\mapsfrom}{)}$\\
\end{tabular*}
\]
By calling on witnesses to preservation of $\mathscr{F}$ by $\nu$ we can select open literals and values (here $\nu(a_4)=1$ and $\nu(a_2)=1$) that are consistent with the values assigned to~$D_\exists$.  

In the general case: because $\nu$ preserves the claw formula $\exists U\ \gamma$, the $2$-element template for $3\SAT$ has witnesses to all quantified variables.  For each pair of convergent arrows under the assignment by $\nu$ for $D_\exists$, there is a witness to one of the open variables in $\gamma$ taking the value $1$; only one such witness is required for each pair of convergent arrows.  Let~$E$ consist of the  open elements in $F$ corresponding to the selected  witnesses, and extend $\nu$ to~$E$ by giving them the witness values.  Note that  $|E|\leq |D_\exists|$, so that $|E\cup D_B|\leq k$.  Thus $\nu{\mid}_{E\cup D_B}$ extends to a solution for $\mathbb{B}$.  This solution extends to a solution for $\mathbb{B}^\sharp$ in a way that is consistent with the values given elements of $D_\exists$ by $\nu$.

%


\section{Step 5: Idempotence and the algebraic method}
A key development in the algebraic method for CSP complexity was  restriction to idempotent polymorphisms \cite{BJK}.  We now sketch how this works for the $(Y_{(k,\mathscr{F})},N_{\CSP{}})$ promise.

Let $\mathscr{R}_{\Con}$ be the signature obtained by adding a unary relation symbol $\underline{a}$ for each element $a$ of $A$, and let $\mathbb{A}_{\Con}$ denote the structure $\langle A;\mathscr{R}_{\Con}\rangle$, with $\underline{a}$ interpreted as $\{a\}$.
\begin{thm}\label{thm:unaryred}
Let $\mathbb{A}$ be a core and $\mathscr{F}$ be a finite set of pp-formul{\ae} in the language of~$\mathscr{R}_{\Con}$.  Then for any $k$, there exists a finite set $\mathscr{G}$ of pp-formul{\ae} in the language of~$\mathscr{R}$ such that the standard reduction from $\CSP(\mathbb{A}_{\Con})$ to $\CSP(\mathbb{A})$ takes $(k, \mathscr{F})$-robustly satisfiable instances of $\CSP(\mathbb{A}_{\Con})$ to the $(k, \mathscr{G})$-robustly satisfiable instances of $\CSP(\mathbb{A})$. 
\end{thm}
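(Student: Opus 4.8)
The plan is to promise-ify the classical gadget reduction, in the same spirit as Theorems~\ref{thm:ppGAP} and~\ref{thm:3SAT}. Recall that the standard reduction forms $\mathbb{B}^\sharp$ from an instance $\mathbb{B}$ of $\CSP(\mathbb{A}_{\Con})$ (with variable set $V$) by taking the $\mathscr{R}$-reduct of $\mathbb{B}$, adjoining $|A|$ fresh variables $\{z_a : a\in A\}$ together with a hyperedge $(z_{a_1},\dots,z_{a_n})\in r$ for each $(a_1,\dots,a_n)\in r^A$, and then identifying each $x\in V$ with $z_a$ whenever $\mathbb{B}$ contains the constraint $x\in\underline{a}$. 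Because $\mathbb{A}$ is a core we have $\operatorname{End}(\mathbb{A})=\operatorname{Aut}(\mathbb{A})$, and the homomorphisms $\mathbb{B}^\sharp\to\mathbb{A}$ are precisely the maps $z_b\mapsto\pi(b)$, $x\mapsto\pi(h(x))$ where $\pi\in\operatorname{Aut}(\mathbb{A})$ and $h\colon\mathbb{B}\to\mathbb{A}_{\Con}$ is a homomorphism (in particular the reduction is correct). The second ingredient I would use is that the orbit relation $R:=\{(\pi(b))_{b\in A}\mid\pi\in\operatorname{Aut}(\mathbb{A})\}\subseteq A^{|A|}$ is pp-definable in $\mathbb{A}$: since a composite of a polymorphism with endomorphisms is an endomorphism, hence (as $\mathbb{A}$ is a core) an automorphism, $R$ is a subuniverse of $\mathbb{A}^{|A|}$, and it is clearly the least one containing the tuple $(b)_{b\in A}$. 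Fix a pp-formula $\varrho(z_{b_1},\dots,z_{b_{|A|}})$ over $\mathscr{R}$ defining $R$; evaluating $\varrho$ at $z_b=b$ for all $b$ pins every coordinate, so over $\mathbb{A}$ the formula $\varrho\wedge(w=z_a)$ restricted to $z_b=b$ is equivalent to $\underline{a}(w)$.

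Now define $\mathscr{G}$. By Lemma~\ref{lem:leqk} I may first replace $\mathscr{F}$ by $\bigcup_{i\le k}\mathscr{F}{\mid}_i$ and so assume $\mathbb{B}$ is $({\le}k,\mathscr{F})$-robustly satisfiable with respect to $\mathbb{A}_{\Con}$. For each $S\subseteq A$ and each $\tau\in\type_{k'}(\mathscr{F})$ with $k'+|S|\le k$, let $\tau^\flat$ be the $\mathscr{R}$-formula obtained from $\tau$ by replacing every unary atom $\underline{a}(w)$ by the equality $w=z_a$ (renaming $\tau$'s bound variables away from the $z$'s) and set
\[
\Xi_{S,\tau}\ :=\ \exists(z_b : b\in A\setminus S)\ \bigl(\varrho(z_{b_1},\dots,z_{b_{|A|}})\wedge\tau^\flat\bigr),
\]
a pp-formula over $\mathscr{R}$ whose free variables are $\{z_a : a\in S\}$ together with the free variables of $\tau$; let $\mathscr{G}$ be the (finite, since $A$ and $\mathscr{F}$ are finite) set of all such $\Xi_{S,\tau}$.

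For the verification: if $\mathbb{B}$ is $({\le}k,\mathscr{F})$-robustly satisfiable it is satisfiable, so $\mathbb{B}^\sharp$ is a YES instance of $\CSP(\mathbb{A})$. Given a $k$-element $D\subseteq B^\sharp$ and a $\mathscr{G}$-compatible $\nu\colon D\to A$, split $D=D_1\sqcup D_0$ where $D_1$ collects the classes meeting $\{z_a\mid a\in A\}$ — these are pairwise distinct, as otherwise $\mathbb{B}$ is unsatisfiable — put $S=\{a\mid[z_a]\in D_1\}$, let $V_0\subseteq V$ be the open variables whose classes form $D_0$ (so $|V_0|=k-|S|=:k'$) and let $\tau=\tau^{\mathscr{F}}_{\vec v_0}$ be the $(k',\mathscr{F})$-type in $\mathbb{B}$ of an enumeration $\vec v_0$ of $V_0$. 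The crucial transfer claim is that $\mathbb{B}^\sharp\models\Xi_{S,\tau}$ at $\bigl(([z_a])_{a\in S},[\vec v_0]\bigr)$: the $[z_b]$ satisfy $\varrho$ because the marked copy supplies (at least) all the hyperedges of $\mathbb{A}$; and pushing witnesses for the existentials of $\tau$ from $\mathbb{B}$ through the quotient sends every $\mathscr{R}$-atom and equality to a true one, while each unary atom $\underline{a}(w)$ of $\tau$ becomes the identity $[w]=[z_a]$ in $\mathbb{B}^\sharp$, which is exactly the conjunct $w=z_a$ of $\tau^\flat$. Now $\mathscr{G}$-compatibility of $\nu$ (Lemma~\ref{lem:Fk}) yields values $(d_b)_{b\in A\setminus S}$ such that, with $d_a:=\nu([z_a])$ for $a\in S$, both $\mathbb{A}\models\varrho\bigl((d_b)_{b\in A}\bigr)$ and $\mathbb{A}\models\tau^\flat$ under $z_b\mapsto d_b$ and the free variables of $\tau$ mapped according to $\nu$. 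The first fact says $(d_b)_{b\in A}=(\pi(b))_{b\in A}$ for a genuine $\pi\in\operatorname{Aut}(\mathbb{A})$, and automatically $\pi(a)=\nu([z_a])$ for $a\in S$; applying the automorphism $\pi^{-1}$ to the second and then interpreting $\tau^\flat$ with $z_b:=b$ — which recovers $\tau$ over $\mathscr{R}_{\Con}$ — gives $\mathbb{A}_{\Con}\models\tau\bigl(\pi^{-1}\nu([\vec v_0])\bigr)$. Hence $\mu\colon V_0\to A$, $v\mapsto\pi^{-1}(\nu([v]))$, is $\mathscr{F}$-compatible with respect to $\mathbb{A}_{\Con}$ (Lemma~\ref{lem:ktype}), and since $|V_0|\le k$ the $({\le}k,\mathscr{F})$-robustness of $\mathbb{B}$ extends it to a homomorphism $h\colon\mathbb{B}\to\mathbb{A}_{\Con}$. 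Finally $z_b\mapsto\pi(b)$, $x\mapsto\pi(h(x))$ is a homomorphism $\mathbb{B}^\sharp\to\mathbb{A}$ (well defined since $x\in\underline{a}$ forces $h(x)=a$), and it agrees with $\nu$ on $D_1$ by the choice of $\pi$ and on $D_0$ because $\pi(\mu(v))=\nu([v])$; so $\mathbb{B}^\sharp$ is $(k,\mathscr{G})$-robustly satisfiable.

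I expect the main obstacle to be arranging $\mathscr{G}$ so that one \emph{single} automorphism can be recovered that is simultaneously forced to agree with $\nu$ on the $z$-part of $D$ and to transport the entire type $\tau$ of $V_0$; this is exactly why $\varrho$ must be bundled \emph{inside} the same existential block as $\tau^\flat$ (so that its witnesses furnish a coherent $\pi$ for free) rather than being listed as a separate member of $\mathscr{G}$. A secondary nuisance is the ``wrong-direction'' transfer $\mathbb{A}\models\varrho\Rightarrow\mathbb{B}^\sharp\models\varrho$, which is not an instance of homomorphism-preservation of pp-formulæ but rests on the marked copy of $\mathbb{A}$ inside $\mathbb{B}^\sharp$ already carrying all the needed hyperedges; and, routinely, one must expand $\varrho$ and move quantifiers to the front to see that each $\Xi_{S,\tau}$ is an honest pp-formula over $\mathscr{R}$.
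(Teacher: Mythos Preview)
Your argument is correct and essentially identical to the paper's: there the formul{\ae} in $\mathscr{G}$ are built by conjoining $(k',\mathscr{F})$-types (with each atom $\underline{a}(w)$ rewritten as $w=v_a$) with the equality-free positive atomic diagram $\diag(\mathbb{A})$ on variables $\{v_a:a\in A\}$---which is precisely a quantifier-free pp-definition of your orbit relation $R$, since for a core the solutions of $\diag(\mathbb{A})$ in $\mathbb{A}$ are exactly the automorphisms---and the verification recovers an automorphism from the witnesses and twists by its inverse exactly as you do. Incidentally, your ``wrong-direction'' transfer is in fact ordinary forward preservation of pp-formul{\ae} along the homomorphism $a\mapsto[z_a]$ from $\mathbb{A}$ into $\mathbb{B}^\sharp$, so there is no nuisance there.
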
 
\begin{proof}[Proof sketch]
Let $\mathbb{B}$ be an instance of $\CSP(\mathbb{A}_{\Con})$.
The standard reduction---known to be first order \cite[Lemma~2.5]{lartes}---involves adjoining a copy of $\mathbb{A}$ to  the instance $\mathbb{B}$, and replacing all hyperedges $b\in \underline{a}$ by identifying $b$ with the adjoined copy of $a$; call this $\mathbb{B}^\sharp$.  (A $(k,\mathscr{F})$-reflection, via the first order version of Lemma \ref{lem:reflect}, can be used to circumvent some technical issues regarding identification of elements.)  Our task is to show how to construct $\mathscr{G}$.  Let $\diag({\mathbb A})$ denote the equality-free positive atomic diagram of ${\mathbb{A}}$ on some set of variables $\{v_a\mid a\in A\}$: the conjunction of all hyperedges of $\mathbb{A}$ considered as atomic formul{\ae}.  Every ${k}$-ary pp-formula $\rho$ in $\mathscr{R}_{\Con}$ becomes a pp-formula $\rho^{\mathscr{R}}$ in $\mathscr{R}$ by identifying variables in the following way.  First take the conjunction of $\rho$ with $\diag({\mathbb A})$.  Then, replace each conjunct of the form $x\in \underline{a}$ in $\rho$, by $x=v_a$.  

Assume $\mathbb{B}$ is $(k,\mathscr{F})$-robustly satisfiable with respect to $\mathbb{A}_{\Con}$ and consider a $\mathscr{G}$-compatible assignment $\nu$ from some $k$-set in $\mathbb{B}^\sharp$. Because $\mathbb{A}$ is a core, there is an automorphism $\alpha$ of $\mathbb{A}$ mapping witnesses to $\diag({\mathbb A})$ to their named location  (that is, taking $v_a$ to $a$).  Then $\alpha\circ \nu$ is $\mathscr{F}$-compatible into $\mathbb{A}_{\Con}$, hence extends to a homomorphism $\psi$ from $\mathbb{B}$.  Then $\alpha^{-1}\circ\psi$ is a homomorphism from $\mathbb{B}^\sharp$ to $\mathbb{A}$ extending $\nu$.
\end{proof}
\section{Proof of ANT~\ref{thm:main} and corollaries}\label{sec:proof} 
\begin{proof}[Proof of ANT] For part (1), we extend an idea from \cite{jac:SAT}. Our proof will use only the assumption that  $\CSP(\mathbb{A}_{\Con})$ is tractable.  This is always true if $\mathbb{A}$ is a core with $\CSP(\mathbb{A})$ tractable.  Now observe that an $\mathscr{F}$-compatible partial assignment $\nu: b_i\mapsto a_i$ from a subset $\{b_1,\dots,b_k\}$ of an instance $\mathbb{B}$ into $\mathbb{A}$ extends to a solution if and only if the structure obtained from $\mathbb{B}$ by adjoining the constraints $\{b_i\in\{a_i\}\mid i=1,\dots,k\}$ is a YES instance of $\CSP(\mathbb{A}_{\Con})$.  Thus after polynomially many calls on the tractable problem $\CSP(\mathbb{A}_{\Con})$, we can decide the $(k,\mathscr{F})$-robust satisfiability of~$\mathbb{B}$.  An almost identical argument will determine if $\mathbb{B}$ has no implied constraints, thus deciding $\CSP_\infty(\mathbb{A})$.

Now to prove ANT part (2).  Let ${\bf A}$ denote the polymorphism algebra of~$\mathbb{A}_{\Con}$.  One of the fundamental consequences of the algebraic method is that if~$\mathbb{A}$ has no WNU polymorphism, then the polymorphism algebra of $3\SAT$ is a homomorphic image of a subalgebra of ${\bf A}$ (direct powers are not required; see \cite[Prop 3.1]{val}).  For CSPs, these facts will give a first order reduction from $3\SAT$ to some finite set of relations $\mathscr{S}^A$ in $\pp(\mathbb{A}_{\Con})$: see \cite{lartes}.  The first step of this reduction is to reduce through homomorphic preimages and subalgebras.  Ham~\cite[\S8]{ham:SAT} showed that these initial reductions also preserve the $(Y_{(\ell,\mathscr{F})},N_{\CSP{}})$ promise, with only minor modification to $\mathscr{F}$.  Combining this with Theorem \ref{thm:3SAT} then Lemma \ref{lem:leqk} we find that for all $\ell$ there exists an $\mathscr{F}_2$ such that $(Y_{({\leq}\ell,\mathscr{F}_2)},N_{\CSP{}})$ is \texttt{NP}-complete for $\langle A,\mathscr{S}^A\rangle$.  Then (using $\ell=\arity(\mathscr{S})\times k$) we can use Theorem \ref{thm:ppGAP} then Lemma \ref{lem:leqk} to find that for every $k$ there exists $\mathscr{F}_3$ such that $(Y_{(\leq k,\mathscr{F}_3)},N_{\CSP{}})$ is \texttt{NP}-complete for $\mathbb{A}_{\Con}$ with respect to first order reductions.  By Theorem \ref{thm:unaryred} the same is true for $\mathbb{A}$, with an amended compatibility condition $\mathscr{F}$ depending on $k$.  Lemma \ref{lem:reflect} then extends the promise to $(Y_{(k,\mathscr{F}),\mathsf{Q}},N_{\CSP{}})$, as required.  \end{proof}
\begin{proof}[Proof of Corollary \ref{cor:qvar}]
Let $\mathbb{A}$ be a finite relational structure without a quasi WNU polymorphism.  By Chen and Larose \cite[Lemma~6.4]{chelar} the core retract $\mathbb{A}^\flat$ of $\mathbb{A}$ has no WNU. Hence the ANT~\ref{thm:main} applies to~$\mathbb{A}^\flat$.  Now $\mathsf{Q}^+(\mathbb{A})$ contains $\mathsf{Q}^+(\mathbb{A}^\flat)$, which contains the YES promise in the ANT~\ref{thm:main} and is disjoint from the NO promise.  Hence membership in $\mathsf{Q}^+(\mathbb{A})$ is \texttt{NP}-complete with respect to first order reductions, and hence is also not finitely axiomatisable in first order logic, even at the finite level.  The same argument using Theorem \ref{thm:gap}(2) implies non-finite axiomatisability in the case that $\mathbb{A}^\flat$ does not have bounded width.
\end{proof}
\begin{proof}[Proof of the Gap Dichotomy for Simple Graphs  \ref{thm:GDfSG}]
If $\mathbb{G}$ is bipartite, then $\CSP(\mathbb{G})$ is tractable and so is deciding membership in $\mathsf{Q}^+(\mathbb{G})$: there are only five distinct quasivarieties \cite{cai,nespul}.  Otherwise, $\mathbb{G}$ is not bipartite and so neither is its core retract.  Hence $\mathbb{G}$ has no quasi WNU; see \cite{BKN}.  Then apply Corollary \ref{cor:qvar}.
\end{proof}
\section{Proof of Theorem \ref{thm:gap}}\label{sec:BW}
Due to space constraints we give only a brief overview of the method.  
A CSP has bounded width provided that there exists $j$ such that the existence of a homomorphism from $\mathbb{B}$ to~$\mathbb{A}$ is equivalent to a family of partial homomorphisms on all subsets of size at most $j+1$, with the family satisfying a compatibility condition, known as a \emph{$(j,j+1)$-strategy}; see \cite{barkoz:BW}.  When $k>j$ and input $\mathbb{B}$ satisfies the $Y_{(k,\mathscr{F}), \mathsf{Q}}$ promise, there is an obvious choice for a $(j,j+1)$-strategy: the family of all maps that can extend to $\mathscr{F}$-compatible assignments on $k$ points.  The property that this family forms a $(j,j+1)$-strategy can be expressed as a first order sentence $\xi$.  When $\CSP(\mathbb{A})$ has bounded width (so that NO instances do not have $(j,j+1)$-strategies) the sentence $\xi$ must fail on instances satisfying the $N_{\CSP}$ promise, and must hold on those satisfying the $Y_{(k,\mathscr{F}), \mathsf{Q}}$ promise.

Now assume that $\mathbb{A}$ does not have bounded width. Following Theorem \ref{thm:unaryred} we may assume throughout that the signature of $\mathbb{A}$ contains all singleton unary relations.   If $\mathbb{A}$ fails to have a WNU, then the ANT~\ref{thm:main} applies (as $\texttt{Mod}_p(\texttt{L})\subseteq \texttt{NP}$).    Otherwise, it is known that a direct analogue of the arguments of Section \ref{sec:proof} lead back to a structure $\mathbb{C}$ whose CSP is $\texttt{Mod}_p(\texttt{L})$-complete: indeed $\mathbb{C}$ can be chosen as the template for systems of ternary linear equations over an abelian group structure ${\bf C}$ on the universe $C$; see proof of \cite[Theorem 4.1]{lartes}.   
We then give an analogue to the construction in Section~\ref{sec:3k3SAT}, by reducing the standard YES/NO decision problem of linear equations of size $3$ over ${\bf C}$ to the $(Y_{(2,\varnothing)},N_{\CSP})$ promise problem for linear equations of size $9$.  Here the trick is to replace each variable $x$ by a sum $x_1+x_2+x_3$.  A series of carefully selected regroupings of these equations simplifies the equation length $9$ back down to $3$; each of these simplifications are pp-reductions and careful application of claw formul{\ae} reminiscent of the analysis in Section \ref{sec:3SAT} is required in order to achieve just $(1,\mathscr{F})$-robustness on the YES promise.  This same inflation from $3$-ary equations to $9$-ary equations and then back to $3$-ary equations is repeated inductively, with the achieved level of robust satisfiability growing  exponentially. 
\section{Discussion and extensions}\label{sec:discuss}
We have shown in the ANT~\ref{thm:main} that the fundamental intractability result of~\cite{BJK} can be replaced by an unbounded hierarchy of intractable promise problems, and demonstrated in Theorem~\ref{thm:gap} a collapse in several intermediate complexity classes for these problems.  We feel these results are just the beginning of exciting new applications to ideas relating to the detection of implied constraints (including excluded constraints) as in \cite{beacul}, minimal networks, as in \cite{got}, as well as to other areas of mathematics and computer science, such as the quantum-theoretic applications in \cite{AGK} and the semigroup-theoretic applications of \cite{jac:SAT}.

Some specific new directions this work should be taken include the extension of ANT~\ref{thm:main} to noncore templates and to infinite templates, where a much wider array of important computational problems can be found.  Another difficult question: can the promise supplied by the PCP Theorem be added as a restriction to $N_{\CSP}$ in the ANT~\ref{thm:main}? (We write $N_{\varepsilon\CSP}$ for this condition:  $\varepsilon$ proportion of the constraints must fail.)  The answer is nearly yes, but not quite.  It is quite routine to carry through the failure of a positive fraction of constraints through steps 1--5 of the proof of the ANT~\ref{thm:main}(2), and through step 6 with more difficulty, thereby achieving the \texttt{NP}-completeness of $(Y_{(k,\mathscr{F})},N_{\varepsilon\CSP})$ for core templates without a WNU.  \ Surprisingly though, $N_{\varepsilon\CSP}$ does not in general survive reflection, as the following example demonstrates.  Let $\mathbbold{2}^+$ denote the template on $\{0,1\}$ with the fundamental ternary relation $r$ of +1-in-3$\SAT$ and the $4$-ary total relation $s:=\{0,1\}^4$.  This has no WNU, as +1-in-3$\SAT$ has no WNU, so the  ANT~\ref{thm:main}(2) and claims just made imply that both $(Y_{(k,\mathscr{F})},N_{\varepsilon\CSP})$ and $(Y_{(k,\mathscr{F}),\mathsf{Q}},N_{\CSP})$ are \texttt{NP}-complete.  Yet  $(Y_{(k,\mathscr{F}),\mathsf{Q}},N_{\varepsilon\CSP})$ for $\mathbbold{2}^+$ falls into~$\texttt{AC}^0$!  Indeed the first order property $\tau$ stating that $s$ is total must hold on instances without implied constraints, and fail on any large enough instance $\mathbb{B}$ satisfying $N_{\varepsilon\CSP}$: the number of $r$-constraints is at most $|B|^3$ compared to the $|B|^4$-many $s$-constraints  required by $\tau$, and no $s$-constraint can fail into $\mathbbold{2}^+$.    
For +1-in-3$\SAT$ itself we can show that  $(Y_{(k,\mathscr{F}),\mathsf{Q}},N_{\varepsilon\CSP})$ remains \texttt{NP}-complete.  

\newpage

\section{Appendix: $(k,\mathscr{F})$-types}
\begin{proof}[Proof of Lemma \ref{lem:leqk}]
Assume $\mathbb{B}$ is $(k,\mathscr{F})$-robustly satisfiable and let $\nu:\{b_1,\dots,b_\ell\}\to A$ be a partial assignment compatible with $\mathscr{F}{\mid}_\ell$.  Our goal is to show that $\nu$ extends to a full solution.  For this it suffices to show that $\nu$ extends to an $\mathscr{F}$-compatible assignment on $k$ elements (as we are assuming all such assignments extend to full solutions).  So, let $b_{\ell+1},\dots,b_k$ be \emph{any} $k-\ell$ elements of $B$ not in $\{b_1,\dots,b_\ell\}$ and let $\tau(x_1,\dots,x_k)$ be the $(k,\mathscr{F})$-type of $(b_1,\dots,b_k)$.  So $(b_1,\dots,b_\ell)$ witnesses the truth of the formula $\exists x_{\ell+1}\dots\exists x_{k}\ \tau(x_1,\dots,x_k)$,
a formula in~$\mathscr{F}{\mid}_\ell$.  By the assumed $\mathscr{F}{\mid}_\ell$-compatibility of $\nu$, we have that 
\[
\mathbb{A}\models \exists x_{\ell+1}\dots\exists x_{k}\ \tau(\nu(b_1),\dots,\nu(b_\ell),x_{\ell+1},\dots,x_k).
\]
For each $i=\ell+1,\dots,k$, extend $\nu$ to $b_i$ by letting $\nu(b_i)$ be a witness in $A$ for $\exists x_i$ in the satisfaction of the formula.  So $\mathbb{A}\models \tau(\nu(b_1),\dots,\nu(b_k))$.  Hence, by Lemma~\ref{lem:ktype}, we have an $\mathscr{F}$-compatible extension of $\nu$ to $k$ points, as required.
\end{proof}

\section{Appendix: reflections redux}\label{sec:reflection}
Each quasivariety $K$ is a reflective subcategory of the category of all $\mathscr{R}$-structures (with homomorphisms as morphisms); see Maltsev \cite[V.11.3]{mal} (where the term \emph{replica} is used in place of reflection).  This means that with every $\mathscr{R}$-structure $\mathbb{B}$ one may associate a unique (up to isomorphism) structure $\mathbb{B}^\flat\in K$, the \emph{reflection of $\mathbb{B}$ in $K$}, such that there is a surjective homomorphism $\flat\colon\mathbb{B}\to \mathbb{B}^\flat$, such that for every $\mathbb{C}\in K$ if $\phi\colon\mathbb{B}\to \mathbb{C}$ is a homomorphism, then there is a (unique) map $\phi^\flat\colon\mathbb{B}^\flat\to \mathbb{C}$ such that $\phi^\flat\circ\flat=\phi$.  

In general the construction of $\mathbb{B}^\flat$ from $\mathbb{B}$ is computationally challenging.  In this section we introduce an intermediate notion: the $(k,\mathscr{F})$-reflection, which is the natural concept associated with the construction in Definition \ref{defn:ref1}.  The key ideas behind $(k,\mathscr{F})$-reflection are as follows.
\begin{itemize}
\item There is a first order query such that if $\mathbb{B}$ is $(k,\mathscr{F})$-robustly satisfiable, then $\phi(\mathbb{B})$ is the $(k,\mathscr{F})$-reflection of $\mathbb{B}$ and is $(k,\mathscr{F})$-robustly satisfiable; if $\mathbb{B}$ is a NO instance of $\CSP(\mathbb{B})$, then so also is $\phi(\mathbb{B})$.
\item  If $\mathbb{B}$ is $(k,\mathscr{F})$-robustly satisfiable the $(k,\mathscr{F})$-reflection coincides with the actual reflection in $\mathsf{Q}(\mathbb{A})$.  Thus in the special case of a $(k,\mathscr{F})$-robustly satisfiable instance $\mathbb{B}$, the reflection $\mathbb{B}^\flat$ into $\mathsf{Q}(\mathbb{A})$ can be constructed in low complexity.
\end{itemize}
The remainder of this appendix section gives the required details for constructing $\phi$ and establishing these properties, culminating in Lemma \ref{lem:FOreflection}, which is Lemma \ref{lem:reflect} but with the first order reduction made explicit.

Recall that a quasiequation is a universally quantified sentence of the form 
\[
\alpha_1\wedge\dots \wedge \alpha_n\rightarrow \alpha_0
\]
in which each $\alpha_i$ is an atomic formula.  

\begin{defn}\label{defn:kFtheory}
The \emph{$(k,\mathscr{F})_q$-theory} of $\mathbb{A}$ is the set of first order sentences true on~$\mathbb{A}$ that are of the form
\[
\forall x_1\dots\forall x_k\Big(\sigma(x_1,\dots,x_k)\rightarrow \alpha(x_{i_1},\dots,x_{i_p})\Big)
\]
where $\sigma$ is an $(k,\mathscr{F})$-type and $\alpha$ is an atomic formul{\ae} in $\mathscr{R}\cup\{=\}$, of arity $p$ and with $\{i_1,\dots,i_p\}\subseteq \{1,\dots,k\}$.  
\end{defn}
Up to logical equivalence, the $(k,\mathscr{F})_q$-theory of $\mathbb{A}$ is a subset of the quasi-equational theory of $\mathbb{A}$ due to the following basic rearrangement of quantifiers:   
\[
\left((\exists x\ \phi(x))\rightarrow \psi\right)
\leftrightarrow \forall x\left(\phi(x)\rightarrow \psi\right).
\]

The $(k,\mathscr{F})$-reflection concept builds on a notion of \emph{local reflection} developed in Ham~\cite{ham:SAT}.

\begin{defn}[Local unfrozenness]\label{defn:locallyunfrozen}
Let $\mathbb{A}$ and $\mathbb{B}$ be finite $\mathscr{R}$-structures. Let $k>0$ and a finite set $\mathscr{F}$ of pp  formul{\ae} in $\mathscr{R}$.  The following concepts refer to properties of $\mathbb{B}$ relative to~$\mathbb{A}$.
\begin{itemize}
\item A tuple $(b_1,\dots,b_n)\in B^n$ is said to be $(k,\mathscr{F})$-\emph{frozen} to a relation $r$ of arity $n$ in the signature if $(b_1,\dots,b_n)\notin r^\mathbb{B}$ but every $\mathscr{F}$-compatible partial assignment $\nu$ from a $k$-set containing $\{b_1,\dots,b_n\}$ into $\mathbb{A}$ has $(\nu(b_1),\dots,\nu(b_n))\in r^\mathbb{A}$.
\item The relation $r\in\mathscr{R}\cup\{=\}$ is $(k,\mathscr{F})$-unfrozen on $\mathbb{B}$ \up(with respect to $\mathbb{A}$\up) if no tuple outside of $r^\mathbb{B}$ is $(k,\mathscr{F})$-frozen.
\item $\mathbb{B}$ is \emph{completely $(k,\mathscr{F})$-unfrozen} if all relations in $\mathscr{R}\cup\{=\}$ are unfrozen.
\end{itemize}
\end{defn}
Note that Lemma \ref{lem:leqk} implies that in the definition of a $(k,\mathscr{F})$-\emph{frozen} tuple $(b_1,\dots,b_n)\in B^n$, we could equivalently have used $\mathscr{F}{\mid}_{n}$-compatible assignments from $\{b_1,\dots,b_n\}$ in place of $\mathscr{F}$-compatible assignments from $k$-element sets containing  $\{b_1,\dots,b_n\}$.

Definition \ref{defn:locallyunfrozen} implies that $(k,\mathscr{F})$-frozen tuples correspond exactly to implied constraints that can be detected by examining only $\mathscr{F}$-compatible solutions from $k$-element subsets of $\mathbb{B}$.
Thus ``$(k,\mathscr{F})$-frozen'' is a local version of what Beacham~\cite{bea} calls ``frozen-in'': tuples  implied by  solutions for $\mathbb{B}$.  Beacham and Culberson~\cite{beacul} use the phrase ``frozen''  for a concept dual to frozen-in, which in Beacham~\cite{bea} is referred to as ``frozen-out''.  The phrase ``frozen'' is more widely used to refer to variables specifically: a variable $x$ is ``frozen'' if it takes precisely one value in every solution (it is often implicit that there is at least one solution in this case); see Jonsson and Krokhin~\cite{jonkro} for example.  In the language of Beacham~\cite{bea}, the property that ``$x$ is frozen to the value $a$'' is equivalent to the statement that ``the constraint $x\in\{a\}$ is frozen-in''.  For a general YES instance~$\mathbb{B}$ of $\CSP(\mathbb{A})$ it is possible that a frozen-in constraint $x\in\{a\}$ might not be detectable by examining the $\mathscr{F}$-compatible partial assignments.  However, if the instance $\mathbb{B}$ is $(k,\mathscr{F})$-robustly satisfiable, then every $\mathscr{F}$-compatible assignment must assign $x$ to $a$, because all such partial assignments extend to solutions for~$\mathbb{B}$. Thus, that the constraint $x\in\{a\}$ would also be $(k,\mathscr{F})$-frozen.

\begin{lem}\label{lem:inQ}
Let $\mathbb{A}$ and $\mathbb{B}$ finite $\mathscr{R}$-structures.  Assume that $k\geq \arity(\mathscr{R})$ and that~$\mathbb{B}$ is both $(k,\mathscr{F})$-robustly satisfiable into $\mathbb{A}$ and completely $(k,\mathscr{F})$-unfrozen with respect to $\mathbb{A}$. Then $\mathbb{B}\in \mathsf{Q}^+(\mathbb{A})$.
\end{lem}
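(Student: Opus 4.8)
The plan is to exhibit an explicit embedding of $\mathbb{B}$ into a direct power of $\mathbb{A}$. Let $I:=\operatorname{Hom}(\mathbb{B},\mathbb{A})$ and consider the canonical map $e\colon\mathbb{B}\to\mathbb{A}^{I}$, $e(b):=(\phi(b))_{\phi\in I}$. Since $\mathbb{B}$ is $(k,\mathscr{F})$-robustly satisfiable it is in particular a YES instance of $\CSP(\mathbb{A})$, so $I\ne\varnothing$ and $\mathbb{A}^{I}$ is a genuine (nonzero) direct power; thus once we verify that $e$ is an embedding we get $\mathbb{B}\in\mathsf{IS}(\mathbb{A}^{I})\subseteq\mathsf{Q}^{+}(\mathbb{A})$. (Equivalently, this is just the separation condition of Maltsev \cite{mal} --- see also \cite[Theorem~2.1]{jac:SAT} and the discussion in Section~\ref{sec:CSP} --- namely that every nonhyperedge of $\mathbb{B}$ maps to a nonhyperedge under some homomorphism into $\mathbb{A}$, and every pair of distinct elements of $B$ is separated by such a homomorphism; working with the power rather than the quasivariety directly is convenient because the ``zeroth power'' issue distinguishing $\mathsf{Q}(\mathbb{A})$ from $\mathsf{Q}^{+}(\mathbb{A})$ cannot arise.)

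The map $e$ is automatically a homomorphism, each coordinate $\phi$ being one, so it remains to check that $e$ is injective and reflects all fundamental relations; both reduce to the single claim that for every $r\in\mathscr{R}\cup\{=\}$, of arity $n$ say, and every tuple $(b_1,\dots,b_n)\notin r^{B}$ there is some $\phi\in I$ with $(\phi(b_1),\dots,\phi(b_n))\notin r^{A}$. To prove the claim, first treat the main case $|B|\ge k$. Since $n\le\max(\arity(\mathscr{R}),2)\le k$, choose a $k$-element subset $S\subseteq B$ with $\{b_1,\dots,b_n\}\subseteq S$. By complete $(k,\mathscr{F})$-unfrozenness, $(b_1,\dots,b_n)$ is not $(k,\mathscr{F})$-frozen to $r$, so there is an $\mathscr{F}$-compatible assignment $\nu\colon S\to A$ with $(\nu(b_1),\dots,\nu(b_n))\notin r^{A}$; by $(k,\mathscr{F})$-robust satisfiability $\nu$ extends to a homomorphism $\phi\colon\mathbb{B}\to\mathbb{A}$, and since $\phi$ agrees with $\nu$ on $\{b_1,\dots,b_n\}$ we are done. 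Applying this with $r\in\mathscr{R}$ yields reflection of the fundamental relations, and applying it with the logical relation $r={=}$ (arity $2$, which is included in the ``completely unfrozen'' hypothesis) yields injectivity of $e$, so $e$ is an embedding.

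The substantive content is therefore a short transfer: ``unfrozenness supplies a witnessing partial assignment, robust satisfiability promotes it to a witnessing homomorphism''. The one point requiring care is that both the definition of a $(k,\mathscr{F})$-frozen tuple and the definition of $(k,\mathscr{F})$-robust satisfiability quantify over $k$-element subsets of $B$, so that the object produced by the first hypothesis is of exactly the shape consumed by the second; the bound $k\ge\arity(\mathscr{R})$ (with arity $2$ for $=$) is precisely what lets each nonhyperedge and each distinct pair sit inside such a subset. The main remaining obstacle is the degenerate situation in which no $k$-element subset can contain a given nonhyperedge or distinct pair --- this happens only if $|B|<k$, or if $k<2$ for the relation $=$ --- and here complete $(k,\mathscr{F})$-unfrozenness already forces every fundamental relation (including $=$) to be full on $\mathbb{B}$, whence $|B|\le 1$ and $\mathbb{B}$ is empty or is the trivial structure ${\bf 1}_\mathscr{R}$; in either case membership in $\mathsf{Q}^{+}(\mathbb{A})$ follows directly from $\mathbb{B}$ being a YES instance of $\CSP(\mathbb{A})$ (for ${\bf 1}_\mathscr{R}$ this forces $\mathbb{A}$ to have an element lying in every fundamental relation, which induces a copy of ${\bf 1}_\mathscr{R}$ inside $\mathbb{A}$). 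Combining the cases gives $\mathbb{B}\in\mathsf{Q}^{+}(\mathbb{A})$.
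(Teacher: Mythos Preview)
Your proof is correct and follows essentially the same approach as the paper: unfrozenness supplies an $\mathscr{F}$-compatible partial assignment separating each nonhyperedge, and robust satisfiability promotes it to a full homomorphism, which is exactly the separation condition characterising $\mathsf{Q}(\mathbb{A})$; the existence of at least one homomorphism then upgrades this to $\mathsf{Q}^+(\mathbb{A})$. The only difference is presentational---the paper invokes the ``no implied constraints'' characterisation of $\mathsf{Q}(\mathbb{A})$ directly, whereas you write out the canonical embedding $e\colon\mathbb{B}\to\mathbb{A}^{I}$ explicitly and are somewhat more careful about the degenerate cases $|B|<k$ and $k<2$, which the paper leaves implicit.
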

\begin{proof}
As $\mathbb{B}$ is  $(k,\mathscr{F})$-robustly satisfiable it has at least one solution by definition.  Thus every $k$-set in $B$ has at least one $\mathscr{F}$-compatible assignment into $\mathbb{A}$.
Now observe that the definition of being completely $(k,\mathscr{F})$-unfrozen implies that for every $r\in \mathscr{R}\cup\{=\}$ (of arity $i\leq k$) and every tuple $(b_1,\dots,b_i)\in B^i\backslash r^\mathbb{B}$, that there is an $\mathscr{F}$-compatible assignment from some $k$-element subset containing $\{b_1,\dots,b_i\}$ placing $(b_1,\dots,b_i)$ outside of $r^\mathbb{A}$.  Thus $\mathbb{B}$ has no implied constraints with respect to $\mathbb{A}$, so lies in $\mathsf{Q}(\mathbb{A})$.  Because $\mathbb{B}$ has at least one homomorphism into $\mathbb{A}$ it then lies in $\mathsf{Q}^+(\mathbb{A})$
\end{proof}

We let the \emph{$(k,\mathscr{F})$-equality relation} $=_{(k,\mathscr{F})}$ denote the relation 
\[
\{(b,b')\in B^2\mid b=b'\text{ is $(k,\mathscr{F})$-frozen}\}.
\]  
The relation $=_{(k,\mathscr{F})}$ is always reflexive and symmetric, but not in general transitive.  We let $\equiv_{(k,\mathscr{F})}$ be the transitive closure of $=_{(k,\mathscr{F})}$.

The following definition is a more careful restatement and extension of Definition \ref{defn:ref1}.
\begin{defn}[$(k,\mathscr{F})$-reflection of $\mathbb{B}$ relative to $\mathbb{A}$]\label{defn:reflection}
Let $\mathbb{B}'$ be the structure obtained from $\mathbb{B}$, but with each relation $r^\mathbb{B}$ for $r\in \mathscr{R}$ enlarged to include all $(k,\mathscr{F})$-frozen tuples for $r$.  The structure $\mathbb{B}^\downarrow$ is defined to be the quotient of $\mathbb{B}'$ by  $\equiv_{(k,\mathscr{F})}$.  We refer to $\mathbb{B}^\downarrow$ as the \emph{$1$-step $(k,\mathscr{F})$-reflection of $\mathbb{B}$ relative to~$\mathbb{A}$}.  

The \emph{$(k,\mathscr{F})$-reflection} of $\mathbb{B}$ relative to $\mathbb{A}$ is denoted by $\mathbb{B}^\Downarrow$ and is obtained from~$\mathbb{B}$ by iteratively applying $1$-step $(k,\mathscr{F})$-reflections until there are no $(k,\mathscr{F})$-frozen tuples.  This will occur after at most a polynomial number of applications of ${}^\downarrow$, because the maximal number of non-hyperedges in each relation $r\in \mathcal{R}\cup\{=\}$ is at most~$|B|^{\arity(r)}$.
\end{defn}
In other words: $\mathbb{B}^\downarrow$ is constructed by examining all $\mathscr{F}$-compatible assignments from $k$-element subsets of $B$ and adding any frozen tuples that are discovered, and then identifying equivalence classes of points that are frozen together by some chain of frozen equalities.  This can be performed in logspace and is what is described Definition \ref{defn:ref1}.  Thus the structure $\mathbb{B}^\Downarrow$ is obtained in polynomial time.  In the case where $\mathbb{B}$ is $(k,\mathscr{F})$-robustly satisfiable  we will prove that $\mathbb{B}^\downarrow=\mathbb{B}^\Downarrow$ (see Lemma \ref{lem:fullref} below), which is why we are able to manage with Definition \ref{defn:ref1} in Section \ref{sec:reflection1} (for polynomial time reductions).

Typically we will drop the phrase ``with respect to $\mathbb{A}$'', when the context of $\mathbb{B}$ as an instance of $\CSP(\mathbb{A})$ is already clear.

\begin{lem}\label{lem:equiv}
If $k\geq 2$ and $\mathbb{B}$ is $(k,\mathscr{F})$-robustly satisfiable into $\mathbb{A}$, then $=_{(k,\mathscr{F})}$ is equal to $\equiv_{(k,\mathscr{F})}$.
\end{lem}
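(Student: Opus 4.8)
The plan is to show that, under the hypotheses, the relation $=_{(k,\mathscr{F})}$ is already \emph{transitive}; since it is reflexive and symmetric by construction and $\equiv_{(k,\mathscr{F})}$ is its transitive closure, this is exactly what is needed.

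First I would dispose of a degenerate case: if $|B|<k$ there is no $k$-element subset of $B$, so every pair of distinct elements of $B$ is vacuously $(k,\mathscr{F})$-frozen to $=$ and $=_{(k,\mathscr{F})}=B^2$, which is transitive. So assume $|B|\ge k\ge 2$. Now take pairs $(b,b')$ and $(b',b'')$ in $=_{(k,\mathscr{F})}$; the goal is $(b,b'')\in{=_{(k,\mathscr{F})}}$. If $b=b''$ this is reflexivity, so assume $b\ne b''$, and it remains to check that every $\mathscr{F}$-compatible assignment $\nu$ from a $k$-set $S\supseteq\{b,b''\}$ into $\mathbb{A}$ has $\nu(b)=\nu(b'')$.

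Fix such an $S$ and $\nu$. By $(k,\mathscr{F})$-robust satisfiability of $\mathbb{B}$, the map $\nu$ extends to a homomorphism $\phi\colon\mathbb{B}\to\mathbb{A}$. The idea is to route the desired equality through the middle element $b'$: I claim $\phi(b)=\phi(b')$ and $\phi(b')=\phi(b'')$. For the first, if $b=b'$ it is trivial; otherwise, since $|B|\ge k\ge 2$, there is a $k$-element subset $S_1\supseteq\{b,b'\}$ of $B$ (when $k=2$, take $S_1=\{b,b'\}$). The restriction $\phi\rest{S_1}$ preserves every pp-formula, hence is $\mathscr{F}$-compatible (Lemma~\ref{lem:Fk}); and since the pair $b=b'$ is $(k,\mathscr{F})$-frozen, every $\mathscr{F}$-compatible assignment from a $k$-set containing $\{b,b'\}$ must send $b$ and $b'$ to a common value, so $\phi(b)=\phi(b')$. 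The identical argument applied to $\{b',b''\}$ gives $\phi(b')=\phi(b'')$. Therefore $\nu(b)=\phi(b)=\phi(b'')=\nu(b'')$, which is what was required; hence $=_{(k,\mathscr{F})}$ is transitive.

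\textbf{Where the difficulty lies.} The only real subtlety — and the place where the hypothesis $k\ge 2$ is used — is that one cannot in general fit all three of $b,b',b''$ into a single $k$-set, so transitivity cannot be read off from a single application of frozenness; instead one applies frozenness to the two two-element ``legs'' $\{b,b'\}$ and $\{b',b''\}$ separately, and uses a \emph{single} common extension $\phi$ of $\nu$ to a full solution to tie the three values together. Supplying that common extension is exactly the role of $(k,\mathscr{F})$-robust satisfiability. (Alternatively one could invoke the remark following Definition~\ref{defn:locallyunfrozen} to reformulate frozenness of each leg directly in terms of $\mathscr{F}{\mid}_2$-compatible maps on $\{b,b'\}$ and $\{b',b''\}$, but passing through a genuine solution $\phi$ seems cleanest.) Beyond this bookkeeping there is no substantive obstacle.
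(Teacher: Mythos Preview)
Your proof is correct and follows essentially the same route as the paper's: both arguments show transitivity by observing that any full homomorphism $\phi$ must identify each $=_{(k,\mathscr{F})}$-related pair (your steps 3--6 unpack what the paper phrases as ``every homomorphism preserves $=_{(k,\mathscr{F})}$ as equality''), and then use $(k,\mathscr{F})$-robust satisfiability to extend the given $\mathscr{F}$-compatible $\nu$ to such a $\phi$. Your version is simply more explicit, handling the degenerate case $|B|<k$ and spelling out why $\phi\rest{S_1}$ is $\mathscr{F}$-compatible, where the paper leaves these as understood.
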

\begin{proof}
Assume $\mathbb{B}$ is $(k,\mathscr{F})$-robustly satisfiable into $\mathbb{A}$ and that $b_1=_{(k,\mathscr{F})}b_2=_{(k,\mathscr{F})}b_3$.  To show that $=_{(k,\mathscr{F})}$ is transitive we prove that $b_1=_{(k,\mathscr{F})}b_3$.

Every homomorphism from $\mathbb{B}$ to $\mathbb{A}$ preserves $=_{(k,\mathscr{F})}$ as the relation $=$ of equality.  Thus every solution identifies $b_1$ with $b_3$.  As every $\mathscr{F}$-compatible assignment on $k$ elements extends to a solution, it follows that every $\mathscr{F}{\mid}_2$-compatible assignment on $\{b_1,b_3\}$ identifies $b_1$ and $b_3$.  Hence $b_1=_{(k,\mathscr{F})}b_3$, as required.
\end{proof}
\begin{lem}\label{lem:typekF}
Let $\mathbb{B}$ be $(k,\mathscr{F})$-robustly satisfiable into $\mathbb{A}$.  For $r\in \mathscr{R}$ of arity $j\leq k$ and a tuple $(b_1,\dots,b_j)\in B^j$, the following are equivalent\up:
\begin{itemize}
\item $(b_1,\dots,b_j)\in r^\mathbb{B}$ or $(b_1,\dots,b_j)\in r$ is $(k,\mathscr{F})$-frozen\up;
\item $\mathbb{A}\models \tau^\mathscr{F}_{b_1,\dots,b_j}(x_1,\dots,x_j)\rightarrow (x_1,\dots,x_j)\in r$, where $\tau^\mathscr{F}_{b_1,\dots,b_j}$ is the $(j,\mathscr{F}{\mid}_j)$-type of $(b_1,\dots,b_j)$.
\end{itemize}
\end{lem}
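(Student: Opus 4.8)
The plan is to factor the stated equivalence through the auxiliary condition $(\star)$: \emph{for every $\mathscr{F}{\mid}_j$-compatible map $\mu\colon\{b_1,\dots,b_j\}\to A$ one has $(\mu(b_1),\dots,\mu(b_j))\in r^\mathbb{A}$}. I would show that the second bullet point is a verbatim restatement of $(\star)$, and that the first bullet point is also equivalent to $(\star)$.

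For the second bullet: since $\mathscr{F}{\mid}_j$ is a finite set of $j$-ary pp-formul{\ae} in $\mathscr{R}$, Lemma~\ref{lem:ktype} applied with $j$ in place of $k$ and $\mathscr{F}{\mid}_j$ in place of $\mathscr{F}$ says that a map $\mu\colon\{b_1,\dots,b_j\}\to A$ is $\mathscr{F}{\mid}_j$-compatible if and only if $\mathbb{A}\models\tau^\mathscr{F}_{b_1,\dots,b_j}(\mu(b_1),\dots,\mu(b_j))$, the formula $\tau^\mathscr{F}_{b_1,\dots,b_j}$ being exactly the $(j,\mathscr{F}{\mid}_j)$-type named in the statement. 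So the tuples $(\mu(b_1),\dots,\mu(b_j))$ realised by $\mathscr{F}{\mid}_j$-compatible $\mu$ are precisely the $\vec a\in A^j$ with $\mathbb{A}\models\tau^\mathscr{F}_{b_1,\dots,b_j}(\vec a)$, whence $(\star)$ is equivalent to $\mathbb{A}\models\tau^\mathscr{F}_{b_1,\dots,b_j}(x_1,\dots,x_j)\rightarrow(x_1,\dots,x_j)\in r$, the second bullet.

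For the first bullet: in the forward direction, if $(b_1,\dots,b_j)\in r^\mathbb{B}$ then by Lemma~\ref{lem:leqk} the structure $\mathbb{B}$ is $(j,\mathscr{F}{\mid}_j)$-robustly satisfiable, so every $\mathscr{F}{\mid}_j$-compatible $\mu$ extends to a homomorphism $\phi\colon\mathbb{B}\to\mathbb{A}$, and $\phi$ carries $(b_1,\dots,b_j)\in r^\mathbb{B}$ to $(\mu(b_1),\dots,\mu(b_j))=(\phi(b_1),\dots,\phi(b_j))\in r^\mathbb{A}$, establishing $(\star)$; if instead $(b_1,\dots,b_j)\in r$ is $(k,\mathscr{F})$-frozen, then $(\star)$ is immediate from the reformulation of frozenness recorded just after Definition~\ref{defn:locallyunfrozen}, which (using Lemma~\ref{lem:leqk}) rephrases $(k,\mathscr{F})$-frozenness in terms of $\mathscr{F}{\mid}_j$-compatible assignments on $\{b_1,\dots,b_j\}$ alone. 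Conversely, assume $(\star)$: if $(b_1,\dots,b_j)\in r^\mathbb{B}$ we are in the first bullet, and if $(b_1,\dots,b_j)\notin r^\mathbb{B}$ then that same reformulation of Definition~\ref{defn:locallyunfrozen} says exactly that $(b_1,\dots,b_j)\in r$ is $(k,\mathscr{F})$-frozen. (One checks $(\star)$ is not vacuous: $\mathbb{B}$ has a solution by hypothesis, and its restriction to $\{b_1,\dots,b_j\}$ is $\mathscr{F}{\mid}_j$-compatible.)

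I do not expect a real obstacle; the content is bookkeeping. The points needing attention are that Lemma~\ref{lem:ktype}, the post-Definition~\ref{defn:locallyunfrozen} reformulation of frozenness, and Lemma~\ref{lem:leqk} are all invoked at arity $j$ with the formula set $\mathscr{F}{\mid}_j$; that the object the statement calls $\tau^\mathscr{F}_{b_1,\dots,b_j}$ really is the $(j,\mathscr{F}{\mid}_j)$-type those results refer to; and that if the tuple $(b_1,\dots,b_j)$ has repeated coordinates one should work with its underlying set (of size at most $j$), which Lemma~\ref{lem:leqk} still covers. The rest is a straightforward unwinding of the definitions of compatibility, robust satisfiability and frozenness, plus the fact that a homomorphism maps hyperedges to hyperedges.
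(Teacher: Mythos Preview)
Your proposal is correct and follows essentially the same approach as the paper, whose proof is the single line ``This follows from Lemma~\ref{lem:ktype}.'' Your auxiliary condition $(\star)$ is exactly the bridge that one-liner is hiding, and you have correctly identified the supplementary ingredients (Lemma~\ref{lem:leqk} and the post-Definition~\ref{defn:locallyunfrozen} reformulation of frozenness) needed to make it precise, including where the robust-satisfiability hypothesis is actually used.
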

\begin{proof}
This follows from Lemma \ref{lem:ktype}.
\end{proof}
Note that the implication in Lemma \ref{lem:typekF} is in fact a quasi-equation in the $(k,\mathscr{F})_q$-theory of~$\mathbb{A}$.  This gives the following lemma; we omit full details as it is included only to justify the nomenclature.
\begin{lem}
The $(k,\mathscr{F})$-reflection $\mathbb{B}^\Downarrow$ of a structure $\mathbb{B}$ with respect to~$\mathbb{A}$ is the reflection of~$\mathbb{B}$ into the quasivariety defined by the $(k,\mathscr{F})_q$-theory of $\mathbb{A}$.
\end{lem}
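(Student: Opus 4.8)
The plan is to check directly that the canonical surjection $q\colon\mathbb{B}\to\mathbb{B}^\Downarrow$ --- the composite of the successive $1$-step $(k,\mathscr{F})$-reflection maps --- exhibits $\mathbb{B}^\Downarrow$ as a reflection of $\mathbb{B}$ into the quasivariety $K$ axiomatised by the $(k,\mathscr{F})_q$-theory $T$ of $\mathbb{A}$. Since reflections into quasivarieties exist and are unique up to isomorphism (Maltsev; see the start of Section~\ref{sec:reflection}), it suffices to establish the two defining properties: (A) $\mathbb{B}^\Downarrow\in K$, and (B) every homomorphism $\phi\colon\mathbb{B}\to\mathbb{C}$ with $\mathbb{C}\in K$ factors (necessarily uniquely, as $q$ is onto) through $q$.

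The engine of both parts is the following reformulation of freezing, valid for an arbitrary finite structure $\mathbb{B}'$ and obtained by combining Lemma~\ref{lem:ktype}, Lemma~\ref{lem:leqk} and the definition of $\mathscr{F}{\mid}_n$ (this is the observation recorded in the note after Lemma~\ref{lem:typekF}): a tuple $\vec b$ of $\mathbb{B}'$ is $(k,\mathscr{F})$-frozen to $r\in\mathscr{R}\cup\{=\}$ if and only if the relevant atom $\alpha$ fails at $\vec b$ in $\mathbb{B}'$ yet $\mathbb{B}'$ realises the premise $\sigma$ of some quasi-equation $\forall\vec x\,(\sigma\to\alpha)$ lying in $T$ at a tuple extending $\vec b$ (here one rewrites the $(n,\mathscr{F}{\mid}_n)$-type of $\vec b$ as an existential projection of a genuine $(k,\mathscr{F})$-type, and pushes the quantifiers across the implication as in the displayed equivalence following Definition~\ref{defn:kFtheory}). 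Consequently, $\mathbb{B}'$ is completely $(k,\mathscr{F})$-unfrozen with respect to $\mathbb{A}$ exactly when $\mathbb{B}'\models T$, and the $1$-step operation ``adjoin a frozen tuple'' / ``identify a frozen-equal pair'' is precisely the step of enforcing a currently-violated quasi-equation of $T$.

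With this in hand, part (A) is immediate: by Definition~\ref{defn:reflection} the iteration is run until no $(k,\mathscr{F})$-frozen tuples remain (this terminates, since relations only grow and the number of $\equiv_{(k,\mathscr{F})}$-classes only shrinks), so $\mathbb{B}^\Downarrow$ is completely $(k,\mathscr{F})$-unfrozen and hence $\mathbb{B}^\Downarrow\models T$, i.e.\ $\mathbb{B}^\Downarrow\in K$. For part (B), argue by induction along the finite sequence of $1$-step operations producing $\mathbb{B}^\Downarrow$, maintaining the claim that the current (factored) map into $\mathbb{C}$ is still a homomorphism. Adjoining a tuple $\vec b$ that is $(k,\mathscr{F})$-frozen to $r$ does no harm: the corresponding quasi-equation $\sigma\to\alpha$ holds in $\mathbb{C}\in K$, its premise $\sigma$ is a primitive positive formula and so is preserved by the map, whence $\vec b$ was already sent into $r^{\mathbb{C}}$. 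Quotienting by $\equiv_{(k,\mathscr{F})}$ is likewise harmless: each generating pair of $=_{(k,\mathscr{F})}$ is frozen to $=$, so by the same reasoning the map already equates it; thus it factors through the quotient by $=_{(k,\mathscr{F})}$ and therefore through the quotient by its transitive closure. Composing over all steps, $\phi$ factors through $q$. Then (A) and (B) say exactly that $q$ is a reflection of $\mathbb{B}$ into $K$, so $\mathbb{B}^\Downarrow$ is that reflection.

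I expect the only real friction to be the bookkeeping hidden inside the ``reformulation of freezing'' above: matching the $(n,\mathscr{F}{\mid}_n)$-types (on $n$ variables) that govern freezing of an arity-$n$ relation against the $(k,\mathscr{F})$-types (on $k$ variables) appearing in Definition~\ref{defn:kFtheory}, and the attendant padding when a frozen tuple has fewer than $k$ distinct entries. None of this is deep --- it is precisely what the remark after Definition~\ref{defn:locallyunfrozen} and the note after Lemma~\ref{lem:typekF} are set up to absorb --- and it is why the write-up can be kept to a sketch; the substantive content is the two-line correspondence ``completely $(k,\mathscr{F})$-unfrozen $\Leftrightarrow$ models $T$'' together with the pp-preservation argument for the universal property.
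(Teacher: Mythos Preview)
Your proposal is correct and follows the same line the paper gestures at: the paper omits full details entirely, recording only that the implication in Lemma~\ref{lem:typekF} is a quasi-equation in the $(k,\mathscr{F})_q$-theory and that the lemma follows. Your argument makes this explicit by proving the equivalence ``completely $(k,\mathscr{F})$-unfrozen $\Leftrightarrow$ models $T$'' and then verifying the universal property via pp-preservation, which is exactly the intended content; you have simply supplied the details the paper leaves out.
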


\begin{lem}\label{lem:fullref}
Let $k\geq \arity(\mathscr{R}\cup\{=\})$.  If $\mathbb{B}$ is $(k,\mathscr{F})$-robustly satisfiable into $\mathbb{A}$, then $\mathbb{B}^\downarrow=\mathbb{B}^\Downarrow$ and is the reflection of $\mathbb{B}$ into $\mathsf{Q}(\mathbb{A})$.  Moreover, $\mathbb{B}^\Downarrow$ is $(k,\mathscr{F})$-robustly satisfiable.
\end{lem}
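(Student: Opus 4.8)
The plan is to establish all the asserted properties for the one-step object $\mathbb{B}^\downarrow$ and then observe that the iteration defining $\mathbb{B}^\Downarrow$ must halt after a single step. Write $q\colon\mathbb{B}\to\mathbb{B}^\downarrow$ for the canonical surjection, and recall (Lemma~\ref{lem:equiv}, using $k\ge 2$) that $\equiv_{(k,\mathscr{F})}$ coincides with $=_{(k,\mathscr{F})}$, so $\mathbb{B}^\downarrow$ is simply the quotient of the relation-enlarged structure $\mathbb{B}'$ by $=_{(k,\mathscr{F})}$. Throughout I may use the equivalent reformulation of ``$(k,\mathscr{F})$-frozen'' noted after Definition~\ref{defn:locallyunfrozen} (equivalently, Lemma~\ref{lem:typekF}): for a $(k,\mathscr{F})$-robustly satisfiable structure, a non-hyperedge $\vec b$ of a relation $r$ of arity $j\le k$ is $(k,\mathscr{F})$-frozen for $r$ iff every $\mathscr{F}{\mid}_j$-compatible partial assignment on $\{b_1,\dots,b_j\}$ maps $\vec b$ into $r$. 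The recurring device is the \emph{factoring observation}: every homomorphism $h\colon\mathbb{B}\to\mathbb{A}$ factors uniquely as $h=\bar h\circ q$. Indeed, restricting $h$ to $\{b_1,\dots,b_j\}$ gives an $\mathscr{F}{\mid}_j$-compatible map, so by the reformulation $h$ sends every $(k,\mathscr{F})$-frozen tuple for $r$ into $r^\mathbb{A}$ and identifies every $(k,\mathscr{F})$-frozen equality; thus $h$ respects both the relation-enlargement and the congruence $=_{(k,\mathscr{F})}$. Applied coordinatewise the same factoring holds for homomorphisms from $\mathbb{B}$ into any power $\mathbb{A}^I$, hence into any member of $\mathsf{Q}(\mathbb{A})$.

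First I would show $\mathbb{B}^\downarrow$ is $(k,\mathscr{F})$-robustly satisfiable. It is a YES instance of $\CSP(\mathbb{A})$, since a solution of $\mathbb{B}$ pushes down through $q$. Given a $k$-element subset $S\subseteq B^\downarrow$ and an $\mathscr{F}$-compatible $\nu\colon S\to A$, choose representatives to lift $S$ to a $k$-element $S'\subseteq B$ with $q{\mid}_{S'}$ a bijection onto $S$. Because $q$ preserves all pp-formul{\ae}, any $\phi\in\mathscr{F}_k$ holding at a listing of $S'$ in $\mathbb{B}$ holds at the corresponding (repetition-free) listing of $S$ in $\mathbb{B}^\downarrow$, hence at its $\nu$-image in $\mathbb{A}$ by Lemma~\ref{lem:Fk}; so $\nu\circ q{\mid}_{S'}\colon S'\to A$ is $\mathscr{F}$-compatible with respect to $\mathbb{B}$. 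By $(k,\mathscr{F})$-robust satisfiability of $\mathbb{B}$ this extends to a homomorphism $h\colon\mathbb{B}\to\mathbb{A}$, and the induced $\bar h\colon\mathbb{B}^\downarrow\to\mathbb{A}$ restricts to $\nu$ on $S$.

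The step I expect to be the main obstacle is showing that $\mathbb{B}^\downarrow$ is completely $(k,\mathscr{F})$-unfrozen. The naive route---comparing the $(j,\mathscr{F}{\mid}_j)$-type of a tuple $\vec b$ in $\mathbb{B}$ with that of $q(\vec b)$ in $\mathbb{B}^\downarrow$ via Lemma~\ref{lem:typekF}---does not work, since quotienting and enlarging relations can make strictly more pp-formul{\ae} hold at $q(\vec b)$ than at $\vec b$, so the types need not agree. Instead I route witnesses through actual solutions of $\mathbb{B}$. Fix $r\in\mathscr{R}\cup\{=\}$ of arity $j\le k$ and a tuple $\vec c\notin r^{\mathbb{B}^\downarrow}$; lift it to $\vec b$ in $\mathbb{B}$. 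Since $r^{\mathbb{B}^\downarrow}$ is the $q$-image of $r^{\mathbb{B}'}$, we get $\vec b\notin r^\mathbb{B}$ and $\vec b$ not $(k,\mathscr{F})$-frozen for $r$ in $\mathbb{B}$ (the case $r={=}$ is where Lemma~\ref{lem:equiv} enters, converting $q(b_1)\ne q(b_2)$ into ``$b_1=b_2$ not $(k,\mathscr{F})$-frozen''). By the reformulation there is an $\mathscr{F}{\mid}_j$-compatible $\mu\colon\{b_1,\dots,b_j\}\to A$ with $(\mu(b_1),\dots,\mu(b_j))\notin r^\mathbb{A}$, and since $\mathbb{B}$ is also $(j,\mathscr{F}{\mid}_j)$-robustly satisfiable (Lemma~\ref{lem:leqk}), $\mu$ extends to a solution $g\colon\mathbb{B}\to\mathbb{A}$. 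Pushing $g$ down to $\bar g\colon\mathbb{B}^\downarrow\to\mathbb{A}$ yields $\bar g(c_i)=\mu(b_i)$, so $\bar g{\mid}_{\{c_1,\dots,c_j\}}$---an $\mathscr{F}{\mid}_j$-compatible partial assignment, being the restriction of a homomorphism---witnesses that $\vec c$ is not $(k,\mathscr{F})$-frozen for $r$ in $\mathbb{B}^\downarrow$.

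It remains to assemble the conclusions. Having no $(k,\mathscr{F})$-frozen tuples at all, $\mathbb{B}^\downarrow$ is fixed by ${}^\downarrow$ (no relation is enlarged and $=_{(k,\mathscr{F})}$ is the identity on it), so the iteration defining $\mathbb{B}^\Downarrow$ halts at once and $\mathbb{B}^\Downarrow=\mathbb{B}^\downarrow$. By Lemma~\ref{lem:inQ}, applied to $\mathbb{B}^\downarrow$ (which is $(k,\mathscr{F})$-robustly satisfiable, completely $(k,\mathscr{F})$-unfrozen, and has $k\ge\arity(\mathscr{R})$), we obtain $\mathbb{B}^\downarrow\in\mathsf{Q}^+(\mathbb{A})\subseteq\mathsf{Q}(\mathbb{A})$. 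Finally, for the universal property of the reflection: given $\mathbb{C}\in\mathsf{Q}(\mathbb{A})$ and a homomorphism $\psi\colon\mathbb{B}\to\mathbb{C}$, the factoring observation produces a homomorphism $\psi^\flat\colon\mathbb{B}^\downarrow\to\mathbb{C}$ with $\psi^\flat\circ q=\psi$, necessarily unique since $q$ is onto. Hence $q\colon\mathbb{B}\to\mathbb{B}^\downarrow$ is the reflection of $\mathbb{B}$ into $\mathsf{Q}(\mathbb{A})$, and $\mathbb{B}^\Downarrow=\mathbb{B}^\downarrow$ is $(k,\mathscr{F})$-robustly satisfiable by the second step.
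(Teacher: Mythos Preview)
Your proof is correct and follows essentially the same approach as the paper's: lift tuples from $\mathbb{B}^\downarrow$ to $\mathbb{B}$, use that the quotient map preserves pp-formul{\ae} to transfer $\mathscr{F}$-compatibility, extend via robust satisfiability, and push the resulting homomorphism back down through the factoring. The only notable differences are cosmetic: you establish robust satisfiability of $\mathbb{B}^\downarrow$ before unfrozenness (the paper does the reverse), and you verify the universal property of the reflection explicitly via the coordinatewise factoring into powers of $\mathbb{A}$, whereas the paper simply asserts that Lemma~\ref{lem:inQ} gives the reflection (which, strictly speaking, only gives membership in $\mathsf{Q}^+(\mathbb{A})$, so your version is slightly more complete on this point).
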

\begin{proof}
We first show that if $\mathbb{B}$ is $(k,\mathscr{F})$-robustly satisfiable, then the $1$-step $(k,\mathscr{F})$-reflection of $\mathbb{B}$ is already the $(k,\mathscr{F})$-reflection $\mathbb{B}^\Downarrow$, which is itself $(k,\mathscr{F})$-robustly satisfiable.
  
Let $\mathbb{B}$ be $(k,\mathscr{F})$-robustly satisfiable, and assume for contradiction, that $\mathbb{B}^\downarrow$ is not completely $(k,\mathscr{F})$-unfrozen.  
So there is a tuple $(b_1,\dots,b_\ell)\notin r^{\mathbb{B}^\downarrow}$ (of arity $\ell\leq k$) such that every $\mathscr{F}$-compatible assignment into $\mathbb{A}$ takes $(b_1,\dots,b_\ell)$ inside~$r^\mathbb{A}$.  
Recall that $\mathbb{B}^\downarrow$ is a quotient of~$\mathbb{B}$ by the relation $=_{(k,\mathscr{F})}$, which is an equivalence relation by Lemma \ref{lem:equiv}.  Select any $b_1',\dots,b_\ell'$ from the $=_{(k,\mathscr{F})}$-equivalence classes of $b_1,\dots,b_\ell$ respectively (there is no requirement that $b_i'=b_j'$ just because $b_i=b_j$).  
Now, in $\mathbb{B}$ we must have $(b_1',\dots,b_\ell')\notin r^\mathbb{B}$, because $(b_1,\dots,b_\ell)\notin r^{\mathbb{B}^\downarrow}$ and $(b_1',\dots,b_\ell')$ maps onto $(b_1,\dots,b_\ell)$ under the quotient map.  Moreover, because $(b_1,\dots,b_\ell)\notin r^{\mathbb{B}^\downarrow}$ we must have that $(b_1',\dots,b_\ell')\notin r^\mathbb{B}$ is not $(k,\mathscr{F})$-frozen.  Therefore there is an $\mathscr{F}$-compatible assignment $\nu:\{b_1',\dots,b_\ell'\}\to A$ that maps $(b_1',\dots,b_\ell')$ outside of $r^\mathbb{A}$.  
However, using the $(k,\mathscr{F})$-robustly satisfiability of $\mathbb{B}$, this assignment extends to a homomorphism $\nu^+:\mathbb{B}\to \mathbb{A}$.  
Every homomorphism from $\mathbb{B}$ into $\mathbb{A}$ factors through the $1$-step $(k,\mathscr{F})$-reflection, and when restricted to $\{b_1,\dots,b_\ell\}$, this contradicts the assumption that  every $\mathscr{F}$-compatible assignment into $\mathbb{A}$ takes $(b_1,\dots,b_\ell)$ inside $r^\mathbb{A}$.  
Hence $\mathbb{B}^\downarrow$ is completely $(k,\mathscr{F})$-unfrozen, so is equal to $\mathbb{B}^\Downarrow$.  Lemma \ref{lem:inQ} now immediately shows that $\mathbb{B}^\Downarrow$ is the reflection of $\mathbb{B}$ into $\mathsf{Q}(\mathbb{A})$.  

Finally, we must prove that $\mathbb{B}^\Downarrow$ is $(k,\mathscr{F})$-robustly satisfiable.  Let $\phi$ be an $\mathscr{F}$-compatible partial assignment from some $k$-element subset $D=\{d_1,\dots,d_k\}$ of $\mathbb{B}^\Downarrow$.  For each $d\in D$, let $d'\in B$ be an element that maps to $d$ under the natural quotient map $\nu$ from $\mathbb{B}$ to $\mathbb{B}^\Downarrow$.  Now any pp-formula $\phi(x_1,\dots,x_k)$ in $\mathscr{F}$ that holds at the tuple $(d_1',\dots,d_k')$ also holds for $(d_1,\dots,d_k)$ in $\mathbb{B}^\Downarrow$.  So  $\phi\circ\nu$ is an $\mathscr{F}$-compatible partial assignment from $\{d_1',\dots,d_k'\}$ to $\mathbb{A}$.  Using the $(k,\mathscr{F})$-robust satisfiability of $\mathbb{B}$, we may extend $\phi\circ\nu$ to a homomorphism from $\mathbb{B}$ to~$\mathbb{A}$.  As all homomorphisms from $\mathbb{B}$ to $\mathbb{A}$ factor through the natural map $\nu$, it follows that~$\phi$ extends to a homomorphism from $\mathbb{B}^\Downarrow$ to $\mathbb{A}$, as required.  As $D$ was arbitrary, we have shown that $\mathbb{B}^\Downarrow$ is $(k,\mathscr{F})$-robustly satisfiable.
\end{proof}
It would follow from Lemma \ref{lem:fullref} that when $\mathbb{B}$ is $(k,\mathscr{F})$-robustly satisfiable, then the reflection into $\mathsf{Q}(\mathbb{A})$ can be constructed in logspace.  The following lemma improves this, and is critical in achieving the \texttt{NP}-completeness results with respect to first order reductions.  Note that lemma makes no claim that a general YES instance $\mathbb{B}$ of $\CSP(\mathbb{A})$ will have $\phi(\mathbb{B})$ a YES instance.
\begin{lem}\label{lem:FOreflection}
If $k\geq \arity(\mathscr{R})$ then there is a $1$-ary first order query $\phi$, such that if $\mathbb{B}$ is $(k,\mathscr{F})$-robustly satisfiable with respect to $\mathbb{A}$, then $\phi(\mathbb{B})$ is $(k,\mathscr{F})$-robustly satisfiable with respect to $\mathbb{A}$ and lies in $\mathsf{Q}^+(\mathbb{A})$, while if $\mathbb{B}$ is a NO instance of $\CSP(\mathbb{A})$, then $\phi(\mathbb{B})$ is a NO instance of $\CSP(\mathbb{A})$.
\end{lem}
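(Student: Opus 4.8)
The plan is to realise the map $\mathbb{B}\mapsto\mathbb{B}^\downarrow$ of Definition~\ref{defn:reflection}---the $1$-step $(k,\mathscr{F})$-reflection---by a $1$-ary first order query $\phi$, after which Lemmas~\ref{lem:equiv}, \ref{lem:typekF} and \ref{lem:fullref} supply the semantic content exactly as they do for the polynomial time statement Lemma~\ref{lem:reflect}. Concretely, $\phi$ is to have domain formula $x=x$, so that $\phi(\mathbb{B})$ has universe $B$; for each $r\in\mathscr{R}$ a relation formula $\rho_r(x_1,\dots,x_n)$ defining the enlargement of $r^\mathbb{B}$ by its $(k,\mathscr{F})$-frozen tuples; and an equality formula $\epsilon$ that on ``good'' inputs defines the relation $=_{(k,\mathscr{F})}$ (its exact shape is pinned down in the last paragraph). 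The hypothesis $k\geq\arity(\mathscr{R})$ is used precisely to guarantee $\arity(r)\leq k$ for all relevant $r$, so that $\mathscr{F}{\mid}_{\arity(r)}$ exists and the frozen predicate is expressible through it. The corner $k\leq 1$ (which, under the hypothesis, forces $\arity(\mathscr{R})\leq 1$) admits no frozen equalities, so $\phi$ performs no quotient there and the claims follow by a direct argument; I take $k\geq 2$ below.

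First I would check that $\rho_r$ and $\epsilon$ really are first order. Put $n=\arity(r)$ and $\vec{x}=(x_1,\dots,x_n)$. By the remark following Definition~\ref{defn:locallyunfrozen} together with Lemma~\ref{lem:ktype} (applied with $k,\mathscr{F}$ replaced by $n,\mathscr{F}{\mid}_n$), a tuple $(b_1,\dots,b_n)$ is $(k,\mathscr{F})$-frozen to $r$ iff $(b_1,\dots,b_n)\notin r^\mathbb{B}$ and every $\mathscr{F}{\mid}_n$-compatible $\nu\colon\{b_1,\dots,b_n\}\to A$ maps it into $r^\mathbb{A}$, which depends only on the $(n,\mathscr{F}{\mid}_n)$-type $\tau$ of $(b_1,\dots,b_n)$ in $\mathbb{B}$: one needs $\mathbb{A}\models\tau(\vec{x})\rightarrow\vec{x}\in r$, a fixed truth value for each of the finitely many such $\tau$. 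Since each member of $\mathscr{F}{\mid}_n$ is a pp-formula, hence first order, ``the $(n,\mathscr{F}{\mid}_n)$-type of $\vec{x}$ in $\mathbb{B}$ is $\tau$'' is expressed by a first order formula $\theta_\tau(\vec{x})$, and one may take
\[
\rho_r(\vec{x})\ :=\ \big(\vec{x}\in r\big)\ \vee\ \bigvee_{\substack{\tau\ \text{an }(n,\mathscr{F}{\mid}_n)\text{-type}\\ \mathbb{A}\,\models\,\tau(\vec{x})\rightarrow\vec{x}\in r}}\theta_\tau(\vec{x}),
\]
which is first order in the language of $\mathscr{R}$; the case $r={=}$ (arity $2$) gives a first order formula $\epsilon_0(x,y)$ defining $=_{(k,\mathscr{F})}$. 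As $\phi$ is plainly computable in logspace, this already yields Lemma~\ref{lem:reflect}; the remaining work is the first order refinement.

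For correctness, suppose first $\mathbb{B}$ is $(k,\mathscr{F})$-robustly satisfiable. Lemma~\ref{lem:equiv} gives that $\epsilon_0$ defines the equivalence relation $=_{(k,\mathscr{F})}=\equiv_{(k,\mathscr{F})}$, and a routine argument---every $\mathscr{F}$-compatible assignment extends to a solution, while every solution identifies $=_{(k,\mathscr{F})}$-related points---shows $\epsilon_0$ is a congruence for the $\rho_r$. Hence $\phi(\mathbb{B})$ is well defined and is, by construction, the structure $\mathbb{B}^\downarrow$ of Definition~\ref{defn:reflection}. By Lemma~\ref{lem:fullref} (whose hypothesis $k\geq\arity(\mathscr{R}\cup\{=\})$ holds, as $k\geq 2$) this equals $\mathbb{B}^\Downarrow$, is the reflection of $\mathbb{B}$ into $\mathsf{Q}(\mathbb{A})$, and is $(k,\mathscr{F})$-robustly satisfiable; having a solution, it then lies in $\mathsf{Q}^+(\mathbb{A})$. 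Suppose instead $\mathbb{B}$ is a NO instance of $\CSP(\mathbb{A})$. The universe-collapsing surjection $\mathbb{B}\twoheadrightarrow\phi(\mathbb{B})$---relations of $\phi(\mathbb{B})$ only enlarge those of $\mathbb{B}$, then are quotiented---is a homomorphism, so a homomorphism $\phi(\mathbb{B})\to\mathbb{A}$ would compose to one $\mathbb{B}\to\mathbb{A}$, which is impossible; hence $\phi(\mathbb{B})$ is a NO instance.

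The step I expect to be the real obstacle is making $\phi$ a legitimate query on \emph{every} input, not just on the two promise classes. In general $=_{(k,\mathscr{F})}$ is neither transitive nor a congruence (it is so only when $\mathbb{B}$ is $(k,\mathscr{F})$-robustly satisfiable, by Lemma~\ref{lem:equiv} and the congruence argument above), so on an arbitrary input---in particular on a NO instance---the naive choice $\epsilon=\epsilon_0$ need not specify a valid quotient. The fix is to gate the quotient on a first order sentence: let $\chi$ assert that $\epsilon_0$ is transitive and is a congruence for each $\rho_r$ (a first order sentence, being a bounded boolean combination of the first order formulas $\epsilon_0$ and $\rho_r$) and set $\epsilon(x,y):=(x=y)\vee\big(\epsilon_0(x,y)\wedge\chi\big)$, which always defines an equivalence relation that is a congruence for the $\rho_r$, so $\phi$ is now total. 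By the above $\chi$ holds whenever $\mathbb{B}$ is $(k,\mathscr{F})$-robustly satisfiable, so there $\epsilon=\epsilon_0$ and $\phi(\mathbb{B})=\mathbb{B}^\downarrow$ as analysed; when $\chi$ fails, $\epsilon$ collapses to plain equality, $\phi(\mathbb{B})$ is merely the relation-enlargement of $\mathbb{B}$, which still receives $\mathbb{B}$ homomorphically and hence is still a NO instance whenever $\mathbb{B}$ is. Everything remaining---that the several formulations of ``$(k,\mathscr{F})$-frozen'' agree, and the bookkeeping for repeated entries in the quantified tuples---is routine.
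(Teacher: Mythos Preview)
Your proposal is correct and follows essentially the same strategy as the paper: realise $\mathbb{B}\mapsto\mathbb{B}^\downarrow$ by first-order formul{\ae} built from the finitely many $(n,\mathscr{F}{\mid}_n)$-types (using Lemma~\ref{lem:typekF} to reduce ``frozen'' to a finite disjunction), then invoke Lemmas~\ref{lem:equiv} and~\ref{lem:fullref} for the semantic conclusions.

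The implementations differ only in packaging. The paper exploits the linear order $<$ on the input (standard in descriptive complexity) to pick canonical representatives of the $\equiv$-classes and thereby realise the quotient on a first-order definable subset of $B$; you instead build the quotient via an equality formula $\epsilon$ in the interpretation. Where you gate on the sentence $\chi$ (``$\epsilon_0$ is transitive and a congruence'') and fall back to plain equality when $\chi$ fails, the paper gates only on ``$\equiv$ is an equivalence relation'' and falls back to outputting a fixed NO instance. Both devices make the query total and both keep NO instances NO; your fallback is arguably cleaner because the natural map $\mathbb{B}\to\phi(\mathbb{B})$ is always a homomorphism, whereas the paper's fixed-NO-instance trick requires a small case split. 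Your explicit inclusion of the disjunct $\vec{x}\in r$ in $\rho_r$ is a useful bit of care that makes $\rho_r\supseteq r^{\mathbb{B}}$ hold on \emph{all} inputs, not just robustly satisfiable ones; the paper's $\bar{r}$ relies on Lemma~\ref{lem:typekF} for this, which has robust satisfiability as a hypothesis.
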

\begin{proof}
As is usual for first order reductions, we are allowed access to a linear order $<$ on the set $B$.  We define $\mathbb{B}^\Downarrow$ on a subset of $B$.
First define a binary relation $\equiv$ by 
\[
x\equiv y\Leftrightarrow x\approx y\vee \bigvee_{\mathbb{A}\models \sigma(u,v)\rightarrow u\approx v}\sigma(x,y)
\] 
where in the disjunction, $\sigma$ ranges over all $(2,\mathscr{F}{\mid}_2)$-types.
By Lemma \ref{lem:typekF}, the relation $\equiv$ coincides with the relation $=_{(k,\mathscr{F})}$, so that Lemma~\ref{lem:equiv} shows that for  $(k,\mathscr{F})$-robustly satisfiable instances it is an equivalence relation.  If~$\equiv$ is not an equivalence relation then we output some fixed NO instance using a first order query.  From now we proceed assuming that $\equiv$ is an equivalence relation.  Strictly this bifurcation in the output possibilities is to be built into the following definition, but it is simpler to treat them as separate cases.
We define the universe of $\phi(\mathbb{B})$ to be $\{b\in B\mid \forall b'\ b\equiv b'\Rightarrow b\leq b'\}$.  

Now we define the relations.  In analogy to $\equiv$, we first introduce an intermediate predicate. For each $r\in \mathscr{R}$, with arity $n$, define $\bar{r}$ by 
\[
(x_1,\dots,x_n)\in \bar{r}\Leftrightarrow \bigvee_{\mathbb{A}\models\sigma(x_1,\dots,x_n)\rightarrow (x_1,\dots,x_n)\in r}\sigma(x_1,\dots,x_n),
\]
where again $\sigma$ ranges over all $(n,\mathscr{F}{\mid}_n)$-types (noting that $n\leq k$ by assumption).  Lemma \ref{lem:typekF} shows that $(x_1,\dots,x_n)\in \bar{r}$ logically captures the notion that either $(x_1,\dots,x_n)\in r$ or $(x_1,\dots,x_n)\in {r}$ is $(k,\mathscr{F})$-frozen.

Now define the relation $r^{\mathbb{B}^\Downarrow}$ by
\[
(x_1,\dots,x_n)\in r^{\mathbb{B}^\Downarrow}\Leftrightarrow \exists x_1'\dots\exists x_n'\ \bigg(\Big(\bigwedge_{1\leq i\leq n}x_i\equiv x_i'\Big)\wedge (x_1',\dots,x_n')\in \bar{r}\bigg).
\]
Given that we output a fixed NO instance when $\equiv$ is not an equivalence relation, in all other cases we have output the 1-step $(k,\mathscr{F})$-reflection $\mathbb{B}^\downarrow$.  When $\mathbb{B}$ is $(k,\mathscr{F})$-robustly satisfiable we have $\mathbb{B}^\downarrow=\mathbb{B}^\Downarrow$, which lies in $\mathsf{Q}(\mathbb{A})$ and is $(k,\mathscr{F})$-robustly satisfiable with respect to $\mathbb{A}$ by Lemma \ref{lem:fullref}.  When $\mathbb{B}$ is a NO instance, we have that $\mathbb{B}^\downarrow$ is a NO instance, as any homomorphism from $\mathbb{B}^\downarrow$ to $\mathbb{A}$ would yield a homomorphism from $\mathbb{B}$ to $\mathbb{A}$ by way of composition with the quotient map from $\mathbb{B}$ to $\mathbb{B}^\downarrow$.
\end{proof}

\section{Appendix: example for the proof of Theorem \ref{thm:ppGAP}}\label{eg:ppgap}
We here give an example demonstrating the proof of Theorem \ref{thm:ppGAP} using the pp-definability of 3SAT from 1-in-3SAT (in other words: the reduction of 3SAT to 1-in-3SAT) via the formula
\begin{equation}
\exists y_1\exists y_2\exists y_3\exists y_4\ r(\neg x_1, y_1, y_2)\And r(y_2, x_2, y_3)\And r(y_3, y_4,\neg x_3).\tag{$*$}\label{eq:egSAT}
\end{equation}
where $r(x,y,z)$ denotes the 1-in-3 constraint for $x,y,z$.
Technically, when considered as constraint problems, both these problems involve a number of distinct relations, corresponding to each combination of negations in a clause.  We have abused notation slightly, by retaining negations of variables within clauses and hyperedges: we refer to ``open literals'' as well as ``open elements'' (in case they are negated within a given clause).   Assume then that we have transformed a 3SAT instance $\mathbb{B}$ into a 1-in-3SAT instance according to formula \eqref{eq:egSAT} by replacing each clause $(b_1\vee b_2\vee b_3)$ by the ``family''
\[
r(\neg b_1, c_1, c_2), r(c_2, b_2, c_3), r(c_3, c_4,\neg b_3),
\]
 where $c_1,c_2,c_3,c_4$ are $\exists$-elements introduced specifically for the clause $(b_1\vee b_2\vee b_3)$.  

We consider the case of $k=4$ and note that $\arity(3\SAT)=3$.  Assume that for some~$\mathscr{F}$, the structure $\mathbb{B}$ is a $({\leq}12,\mathscr{F})$-robustly satisfiable 3SAT instance.  Assume that we have been given a partial assignment $\nu$ on a subset $D=\{b,b',c,c'\}$ of $\mathbb{B}^\sharp$, where $b,b'$ are open literals and $c,c'$ are $\exists$-elements.  The goal is to carefully identify at most $|D_\exists|\times 3$ many new open elements $O_B$, find an $\mathscr{F}$-compatible solution to the $\leq 12$ elements in $O_B\cup D_B$, extend it using the $({\leq}12,\mathscr{F})$-robust satisfiability of $\mathbb{B}$ and argue that the extension of this solution to $\mathbb{B}^\sharp$ agrees with the values already given to $D_\exists$.  

To make the example concrete (but not too complicated), assume further that $c,c'$ happen to lie in the same  family of 1-in-3$\SAT$ hyperedges and correspond to the quantified variables $y_2$ and $y_4$ in \eqref{eq:egSAT}, and the open element $b$ has an occurrence in this family (say, in the position $x_1$), but $b'$ does not.    Thus there are open elements $b_2,b_3$ and $\exists$-elements $c_1,c_3$ such that 
\[
r(\neg b, c_1, c), \ r(c, b_2, c_3) \text{ and }r(c_3, c', \neg b_3)
\] 
are hyperedges in $\mathbb{B}^\sharp$.  The set $O_B$ then equals $\{b,b_2,b_3\}$, so that 
$O_B\cup D_B=\{b,b',b_2,b_3\}$.
Let $\sigma(z_1,z_2,z_3,z_4)$ be the $(4,\mathscr{F}{\mid}_4)$-type of the tuple $(b,b_2,b_3,b')$.  Then the following pp-formula
\[
\exists y_1\exists x_2\exists y_3\exists x_3\ r(\neg v_1, y_1, v_2)\And r(v_2, x_2, y_3)\And r(y_3, v_3, \neg x_3)\And \sigma^{r}(v_1,x_2,x_3,v_4)
\]
 is a claw formula in $\mathscr{G}$ and is satisfied by $\mathbb{B}$ at the tuple $(v_1,v_2,v_3,v_4)=(b,c,c',b')$.  If we  assume that $\nu$ is $\mathscr{G}$-compatible then there exist witnesses to $y_1,x_2,y_3,x_3$ such that 
\begin{multline*}
\exists y_1\exists x_2\exists y_3\exists x_3\ r\big(\neg \nu(b), y_1, \nu(c)\big){\And}  r\big(\nu(c), x_2, y_3\big)
\\
{\And} r\big(y_3, \nu(c'), \neg x_3\big){\And} \sigma\big(\nu(b),x_2,x_3,\nu(b')\big)
\end{multline*}
is true in the template for 1-in-3$\SAT$.   Extend $\nu$ from $D$ to include $c_1,b_2,c_3,b_3$ by giving them the chosen witnesses for $y_1$, $x_2$, $y_3$ and $x_3$ respectively.  Then 
\begin{multline*}
r\big(\neg \nu(b), \nu(c_1), \nu(c)\big)\& r\big(\nu(c), \nu(b_2), \nu(c_3)\big)
\& r\big(\nu(c_3), \nu(c'), \neg \nu(b_3)\big)\\\& \sigma^\rho\big(\nu(b),\nu(b_2),\nu(b_3),\nu(b')\big)
\end{multline*}
is true also.  Then it follows that $\nu{\mid}_{\{b,b_2,b_3,b'\}}$ has preserved the $\mathscr{F}$-type $\sigma$ of $(b,b_2,b_3,b')$, hence is $\mathscr{F}$-compatible.  Hence $\nu{\mid}_{\{b,b_2,b_3,b'\}}$ extends to a solution to $\mathbb{B}$, and then $\nu$ extends to $\mathbb{B}^\sharp$ in a way compatible with the given values $\nu(c)$ and $\nu(c')$ (using the already chosen witness values for $c_1,c_3$ to satisfy the clause family containing $c,c'$).

\section{Appendix: Step 3 as a first order reduction}\label{appsec:3k3SAT}
We being by recalling Gottlob's reduction \cite{got}, and show that it can be treated as a first order reduction.  Let $\mathbb{B}$ be an instance of 3SAT over the set of variables $B=\{b_1,\dots,b_n\}$.  For any $k$, consider the set $B_k=\{b_{i,j}\mid i\in\{1,\dots,n\} \text{ and }j\in\{1,\dots,2k+1\}\}$ and construct an instance of $(3k+3)$SAT as follows: each clause $(\bar{b}\vee \bar{c}\vee \bar{d})$ in $B$ (where $\bar{b},\bar{c},\bar{d}$ are elements of $\{b_1,\dots,b_n,\neg b_1,\dots,\neg b_n\}$) is replaced by the family of all $\binom{2k+1}{k+1}^3$ clauses of the form
\[
(\bar{b}_{i_1}\vee\dots \vee \bar{b}_{i_{k+1}}\vee \bar{c}_{i_1'}\vee\dots \vee \bar{c}_{i_{k+1}'}\vee \bar{d}_{i_1''}\vee\dots \vee \bar{d}_{i_{k+1}''})
\]
where $\{i_1,\dots,i_{k+1}\}$, $\{i_1',\dots,i_{k+1}'\}$, $\{i_1'',\dots,i_{k_{k+1}}''\}$ are $(k+1)$-element subsets of $\{1,\dots,2k+1\}$.  Here if $\bar{b}=b_i$ then by $\bar{b}_{i_j}$ we mean $b_{i,i_j}$ and similarly for $\bar{c}$ and $\bar{d}$.  And if $\bar{b}=\neg b_i$ then by $\bar{b}_{i_j}$ we mean $\neg b_{i,i_j}$, and so on.    The corresponding instance of $(3k+3)$-SAT will be denoted by $\mathbb{B}^\sharp$.  It is easy to see that this reduction can be performed in linear time (albeit with a fairly large constant) and in logspace.  We now observe that it can be constructed as a first order query with parameters.

Let the universe of $B^\sharp$ be the subset of the cartesian power $B^{2k+1}$ defined by the $2$-parameter formula $\lambda(x_1,\dots,x_{2k+1},c,d)$ asserting that a tuple is constant except for the $i^{\rm th}$ position (for some $i$) at which point it equals $c$ (or $d$ if the other $2k$ values are already $c$):
\[
\bigvee_{i\in\{1,\dots,2k+1\}}\left(\bigand_{j,j'\neq i} x_j=x_{j}'\And \left((x_j=c\rightarrow x_i=d)\And (x_j\neq c\rightarrow x_i=c)\right)\right)
\]
Notice that for any element $\underline{x}=x_1,\dots,x_{2k+1}$ of this universe, we can apply first order properties to the near unanimity element by referring to $x_1$ if $x_1=x_2$ and $x_3$ otherwise (noting that $2k+1\geq 3$ in the nontrivial case of $k\geq 1$).

Then we define the $(3k+3)$-ary relations of $(3k+3)\SAT$ on $B^\sharp$ (as $(3k+3)\times (2k+1)$-ary relations on $B$).  We use $\underline{x}_i$ to denote a $(2k+1)$ tuple $x_{i,1},\dots,x_{i,2k+1}$, which we take as implicitly satisfying $\lambda$.  Then we may assert that 
$\underline{x}_1,\dots,\underline{x}_{(3k+3)}$ are related to some clause structure if and only if all of $\underline{x}_1, \dots,\underline{x}_{k+1}$ have the same near unanimity value $x$, all of  $\underline{x}_{k+2}, \dots,\underline{x}_{2k+2}$ have the same near unanimity value $y$ and all of $\underline{x}_{2k+3}, \dots,\underline{x}_{3k+3}$ have the same near unanimity value $z$, and that $(x,y,z)$ are related by the corresponding clause structure in $\mathbb{B}$.

For later reference we state the Gottlob's result (and the first order construction here) as a theorem.
The following lemma follows quite easily from the above construction (see Gottlob \cite{got}, using the reduction shown above to be first order).
\begin{thm}\label{thm:got}
The following promise problem is \texttt{NP}-complete \up(for an instance $I$ of $(3k+3)\SAT$\up) with respect to first order reductions\up:
\begin{description}
\item[Yes] $I$ is a $(k,\varnothing)$-robustly satisfiable instance of $(3k+3)\SAT$: every assignment on $k$ variables extends to a solution.
\item[No] $I$ is a NO instance of $(3k+3)\SAT$.
\end{description}
\end{thm}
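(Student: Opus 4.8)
The plan is to derive this from Gottlob's reduction recalled above, together with the first order presentation of that reduction just given. Since $3\SAT$ is \texttt{NP}-complete with respect to first order reductions \cite{ABISV,lartes}, it suffices to check that the map $\mathbb{B}\mapsto\mathbb{B}^\sharp$ (replace each variable $b_i$ by the $2k+1$ copies $b_{i,1},\dots,b_{i,2k+1}$ and each clause $(\bar b\vee\bar c\vee\bar d)$ by the family of all $\binom{2k+1}{k+1}^3$ clauses built from $(k+1)$-subsets of the copies of $\bar b$, $\bar c$, $\bar d$) is a correct promise reduction and is first order definable. Membership of the promise problem in \texttt{NP} needs no work: every YES instance is satisfiable and every NO instance is unsatisfiable, so the standard nondeterministic guess-and-check algorithm for $(3k+3)\SAT$ already accepts on YES and rejects on NO.

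For promise correctness I would argue as follows. If $\mathbb{B}$ is satisfiable, say via $\psi\colon B\to\{0,1\}$, then $b_{i,j}\mapsto\psi(b_i)$ satisfies $\mathbb{B}^\sharp$; and given any partial assignment $\nu$ on $k$ variables of $\mathbb{B}^\sharp$, at most $k$ of the $2k+1$ copies of each $b_i$ are constrained by $\nu$, while every clause of $\mathbb{B}^\sharp$ contains all $k+1$ of its copies of some literal $\bar b$ satisfied by $\psi$, hence contains an unconstrained such copy; assigning $\psi(b_i)$ to every unconstrained copy of $b_i$ extends $\nu$ to a solution, so $\mathbb{B}^\sharp$ is $(k,\varnothing)$-robustly satisfiable. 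Conversely, if $\mathbb{B}^\sharp$ is satisfiable via $\phi$, let $\psi(b_i)$ be a majority value of $\phi(b_{i,1}),\dots,\phi(b_{i,2k+1})$; were some clause $(\bar b\vee\bar c\vee\bar d)$ of $\mathbb{B}$ falsified by $\psi$, each of $b,c,d$ would have a $(k+1)$-subset of copies all falsifying the respective literal under $\phi$, and the $\mathbb{B}^\sharp$-clause assembled from those three subsets would be falsified by $\phi$, a contradiction. Thus $\mathbb{B}$ is satisfiable if and only if $\mathbb{B}^\sharp$ is, so NO instances go to NO instances while, by the previous sentence, YES instances go to YES instances.

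Finally I would verify first-orderness, which is the step requiring the most care. For $k=0$ the promise problem is $3\SAT$ and there is nothing to prove, so assume $k\ge1$; a trivial $3\SAT$ instance with at most one variable can be sent to a fixed output after a first order test on $|B|$. Otherwise, using the linear order available to a first order reduction, fix distinct elements $c<d$ of $B$ as parameters and take the universe of $\mathbb{B}^\sharp$ to be the subset of $B^{2k+1}$ defined by the two-parameter formula $\lambda$ above with its last two arguments set to $c,d$; the copy $b_{i,j}$ is coded by the tuple that is constantly $b_i$ except at coordinate $j$, where it is $c$ (or $d$ when $b_i=c$), and this coding is a bijection onto that subset, injectivity using only $2k+1\ge3$. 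The ``near unanimity value'' of such a tuple (its value at coordinate $1$ if coordinates $1$ and $2$ agree, and at coordinate $3$ otherwise) is first order, so each $(3k+3)$-ary clause relation of $(3k+3)\SAT$ is definable on $(B^{2k+1})^{3k+3}$ by the first order formula, with parameters $c,d$, asserting that the first $k+1$ coordinate blocks share a near unanimity value $x$, the next $k+1$ share $y$, the last $k+1$ share $z$, and $(x,y,z)$ lies in the corresponding clause relation of $\mathbb{B}$. The main obstacle here is purely the bookkeeping: writing $\lambda$ so it describes exactly the image of the copy map, checking injectivity across the degenerate coordinate patterns, and confirming that first order reductions as used in this paper do accommodate a target universe living inside a fixed power of the source together with a larger relational signature defined by such parametrised formulas.
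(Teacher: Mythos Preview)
Your proposal is correct and follows essentially the same approach as the paper: the paper simply cites Gottlob \cite{got} for promise correctness and points to the first order construction just given (the parametrised formula $\lambda$ and the near-unanimity trick), and you have spelled out exactly those two pieces, including the majority-vote argument for the converse direction that Gottlob uses. The only difference is that you provide the details the paper leaves to the citation.
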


\section{Appendix: proof of Theorem \ref{thm:3SAT}}
We first give some further details on the selection of values between arrows.  For a $(3k+3)\SAT$ clause $c$ in $\mathbb{B}$, let $\family(c)$ denote the set of $3k+1$ distinct 3$\SAT$ clauses that replace it in $\mathbb{B}^\sharp$: a clause family. The following is asserted without justification in Section \ref{sec:3SAT}.
\begin{lem}\label{lem:arrowbracket}
An assignment $\nu$ satisfies a clause family if and only if between every pair of convergent arrows there is an open element assigned $1$ by $\nu$.  This in turn is equivalent to the property that every  pair of \emph{consecutive} convergent arrows brackets  an open element assigned $1$.
\end{lem}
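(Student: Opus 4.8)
The plan is to prove Lemma~\ref{lem:arrowbracket} by viewing a clause family as a \emph{path}: the clauses $C_1,\dots,C_{3k+1}$ produced by~\eqref{eqn:3SAT}, with $C_r$ and $C_{r+1}$ linked at the junction carrying $y_r$ (positive in $C_r$, negated in $C_{r+1}$), together with a phantom left-end junction carrying $\mapsto$ and a phantom right-end junction carrying $\mapsfrom$. Throughout, $\nu$ fixes every open literal of the family and fixes the $\exists$-elements at the arrow-junctions, and ``$\nu$ satisfies the family'' means the remaining ($\exists$-)elements can be chosen so that all $3k+1$ clauses hold. The first step is to observe that the arrow-junctions cut the path into \emph{segments} whose internal $\exists$-elements occur in no other clause of the family; hence $\nu$ satisfies the family if and only if each segment is independently satisfiable via its own internal $\exists$-elements, given the fixed open literals and the two fixed $\exists$-values at its ends.

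The heart of the argument is a four-way analysis of the ordered pair of arrows bounding a segment. If the left arrow is $\mapsfrom$, then setting all internal $\exists$-elements to $0$ gives every clause of the segment a true first literal $\neg y$; if the right arrow is $\mapsto$, then setting them all to $1$ gives every clause a true last literal $y$; so in the ``divergent'' and ``parallel'' cases the segment is unconditionally satisfiable. The one remaining case, $\mapsto$ on the left and $\mapsfrom$ on the right -- a pair of \emph{convergent} arrows -- is where the content is: once the fixed literals $\neg y$ (first conjunct) and $y$ (last conjunct) drop out, the segment is precisely a chain of the shape of~\eqref{eqn:3SAT} in the open literals it contains (with its two extreme clauses now binary), so by the correctness of that encoding it is satisfiable if and only if at least one of those open literals is assigned $1$ by $\nu$. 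Assembling the segments yields the first asserted equivalence in the form ``$\nu$ satisfies the family $\iff$ between every pair of \emph{consecutive} convergent arrows $\nu$ assigns $1$ to an open literal''.

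It remains to pass from consecutive convergent pairs to arbitrary ones. One direction is trivial. For the other, take any convergent pair: a $\mapsto$ at a junction $p$ and a $\mapsfrom$ at a junction $q>p$. Reading the arrows from $p$ to $q$ one starts at a $\mapsto$ and ends at a $\mapsfrom$, so there is a first place where a $\mapsto$ is immediately followed by a $\mapsfrom$; this consecutive convergent pair lies within $[p,q]$, and therefore so does the open literal it brackets. This gives the second equivalence.

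The main obstacle I expect is bookkeeping at the two ends of the family rather than anything conceptual: the end clauses $C_1=(x_1\vee x_2\vee y_1)$ and $C_{3k+1}=(\neg y_{3k}\vee x_{3k+2}\vee x_{3k+3})$ each carry two open literals instead of one and must be matched against the phantom end-junctions; and degenerate length-$1$ segments (two adjacent arrow-junctions, no internal $\exists$-element) must be verified directly inside each of the four cases. I would also state explicitly that ``open element assigned $1$'' is to be read as ``open literal taking value $1$'', so that the argument is unaffected if an original variable recurs, possibly negated, within a single family.
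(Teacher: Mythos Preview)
Your proof is correct. The underlying idea matches the paper's---that a clause is automatically satisfied whenever one of its bordering junctions has an arrow pointing \emph{away} from it, so only clauses trapped between inward-pointing arrows need an open literal equal to~$1$---but the executions differ. The paper gives a two-line semantic reading of the arrows (``head $=$ a $1$ is still needed in that direction; tail $=$ a $1$ has been contributed''), which is really a per-clause argument implicitly assuming every junction carries an arrow, so that consecutive convergent arrows bracket a single clause. You instead cut the family into segments at the arrow-junctions and handle each boundary pattern by an explicit choice of internal $\exists$-values (all $0$'s or all $1$'s in the non-convergent cases, and a reduction to the correctness of the chain encoding~\eqref{eqn:3SAT} in the convergent case). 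Your version is longer but buys two things: it directly covers the partial-assignment setting used in Section~\ref{sec:3SAT}, where only some junctions carry arrows and a convergent pair may bracket several clauses, and it makes the ``only if'' direction for convergent segments fully explicit rather than asserted. Your flagged edge cases (phantom end-junctions with two open literals, and length-$1$ segments) are genuine but routine and fit into your case split without change.
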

\begin{proof}
The head of an arrow indicates a $1$ needs to be found in the clause in the direction pointed.  The tail of an arrow $\mapsto$ or $\mapsfrom$ indicates a $1$ has been contributed to the clause pointed from.  Thus $\nu$ does not need to assign $1$ to any open element in a clause if it is the source of an arrow.   This leaves only pairs of consecutive convergent arrows, and here $\nu$ must assign an open element in such a clause the value~$1$.
\end{proof}

The literals lying between to convergent arrows will be refered to as an \emph{arrow interval}.  A pair of convergent arrows (or an arrow interval) will be said to have been \emph{stabilised} by a partial assignment if some open literal lying in the interval has been assigned the value $1$.

We mention that there is always a simpler family of formulas that could replace~$\mathscr{F}$  in the argument of Section \ref{sec:3SAT}.  In particular, instead of the talon $\gamma$ consisting of conjunctions of entire clause familes, we only need to include that part of the clause family that covers minimal unstabilised arrow intervals.  Such intervals are at worst one less in size than the entire clause family.  Even then there is redundancy.  As an example, we prove the result sketched in Abramsky, Gottlob and Kolaitis \cite{AGK}, where it is stated that $(3,\Gamma_{3\SAT})$-robustness is \texttt{NP}-complete (where $\Gamma_{3\SAT}$ denotes the quantifier free atomic formul{\ae} of the $3\SAT$ relations), by reduction from $(3,\varnothing)$-robustness of $12\SAT$.  In fact we observe that there is a technicality to the reduction, without which the argument fails.
\begin{thm}\label{thm:3clean}
The following promise problem is \texttt{NP}-complete.
\begin{description}
\item[YES] $\mathbb{B}$ is a $(3,\mathscr{F})$-robustly satisfiable instance of $3\SAT$, where $\mathscr{F}$ consists only of the fundamental relations of $3\SAT$
\item[NO] $\mathbb{B}$ is a NO instance of $3\SAT$
\end{description}
\end{thm}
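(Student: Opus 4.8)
This statement is a sharpening of the $k=3$ case of Theorem~\ref{thm:3SAT}: the claw formul{\ae} used there are shrunk all the way down to the atomic relations $\Gamma_{3\SAT}$. The plan is therefore to rerun the reduction of Section~\ref{sec:3SAT} with $(3k+3)\SAT=12\SAT$, and to observe that in this particular reduction the only minimal arrow intervals that cannot be stabilised ``by hand'' are single $3\SAT$ clauses already exhausted by the chosen $3$-set, and those are handled directly by $\Gamma_{3\SAT}$-compatibility. Concretely, I would reduce from the promise problem of Theorem~\ref{thm:got} with $k=3$, namely distinguishing $(3,\varnothing)$-robustly satisfiable instances of $12\SAT$ from NO instances of $12\SAT$, along the standard pp-reduction \eqref{eqn:3SAT} with $n=12$: each $12$-clause $C$ of an instance $\mathbb{B}$ is replaced by the family $\family(C)$ of ten chained $3\SAT$ clauses introducing nine new $\exists$-elements, yielding $\mathbb{B}^\sharp$. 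Since there are no equality conjuncts this is a first order reduction, it visibly sends NO instances of $12\SAT$ to NO instances of $3\SAT$, and \texttt{NP}-membership of the target promise problem is trivial (a YES instance is satisfiable, a NO instance is not). So the entire content is the implication: if $\mathbb{B}$ is $(3,\varnothing)$-robustly satisfiable, then $\mathbb{B}^\sharp$ is $(3,\Gamma_{3\SAT})$-robustly satisfiable.

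To prove that implication, fix a $3$-element $D\subseteq B^\sharp$ and a $\Gamma_{3\SAT}$-compatible assignment $\nu\colon D\to\{0,1\}$ --- equivalently, $\nu$ falsifies no $3\SAT$ clause of $\mathbb{B}^\sharp$ all of whose variables lie in $D$. Write $D=D_B\cup D_\exists$ as usual. In each family $F$ that meets $D_\exists$, place the arrows of Section~\ref{sec:3SAT} determined by $\nu$ on $D_\exists\cap F$ together with the two terminal arrows, and read off the minimal arrow intervals; by Lemma~\ref{lem:arrowbracket} it is enough to stabilise each such interval by setting one of its open literals to $1$. The core of the proof is a case analysis on how the (at most three) elements of $D$ split into $D_B$ and across the families containing $D_\exists$, showing that every minimal interval either (i)~already contains an open literal of $D_B$ that $\nu$ makes true, or (ii)~contains an open literal lying outside $D$, which we are then free to set true; and, crucially, that the total number of variables of $\mathbb{B}$ pinned down in this process --- the ones in $D_B$ together with the chosen ``fresh'' stabilising literals --- never exceeds three. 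The $(3,\varnothing)$-robust satisfiability of $\mathbb{B}$ then extends this partial assignment to a full solution $\phi$ of $\mathbb{B}$, and a straightforward propagation of values along each family's $\exists$-chain extends $\phi$ to a solution of $\mathbb{B}^\sharp$ that agrees with $\nu$ on $D_\exists$ as well (each need-interval now carries a true open literal, and the remaining $\exists$-variables are chosen to propagate satisfaction outward from it). Hence $\nu$ extends, which is what $(3,\Gamma_{3\SAT})$-robust satisfiability asks.

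The delicate point --- the ``technicality'' that a naive reduction would overlook --- is case~(i) when a minimal interval is a single $3\SAT$ clause $c$ all of whose variables already lie in $D$, for instance $D=\{y_j,x,y_{j+1}\}$ equal to the variable set of a clause of some $\family(C)$. Here there is no fresh literal to recruit, both $\exists$-literals of $c$ are false under $\nu$ precisely because of the arrow configuration that produced the interval, and $c$ is satisfied \emph{only} because $\Gamma_{3\SAT}$-compatibility forbids $\nu$ from falsifying it. This is exactly why $\mathscr{F}$ must contain the fundamental $3\SAT$ relations: one genuinely cannot achieve $(3,\varnothing)$-robust satisfiability on the $3\SAT$ side, since an assignment falsifying a clause inside the three chosen variables obviously cannot extend. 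Checking that in every \emph{other} configuration a fresh stabilising literal is available without pushing the pinned-down count past three is the main obstacle to completing the proof; the bound is tight, since three $\exists$-elements distributed one per family already force three distinct fresh stabilisers with no room to spare.
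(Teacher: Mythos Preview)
Your reduction and the arrow-interval framework are exactly what the paper uses, and your identification of the case where $D$ equals the variable set of a single clause is correct as one place where $\Gamma_{3\SAT}$-compatibility is genuinely invoked. However, you have misidentified the real ``technicality'' that makes the naive reduction fail, and this is a substantive gap.

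The actual obstruction occurs in the case $|D_B|=1$, $|D_\exists|=2$, say $D=\{b,c,c'\}$, when $c$ and $c'$ each sit near an endpoint of a clause family and create two \emph{separate} minimal intervals, each containing exactly two open literals. These two pairs of open literals can coincide as sets, say $\{b,u\}$, while carrying \emph{opposite} signs on $u$: one interval is the clause $(b\vee u\vee c)$, the other is $(b\vee \neg u\vee c')$. With $\nu(b)=\nu(c)=\nu(c')=0$, no clause has all three variables in $D$ (since $u\notin D$), so the assignment is perfectly $\Gamma_{3\SAT}$-compatible --- yet it cannot extend, because the two intervals demand $u=1$ and $u=0$ simultaneously. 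Your option (ii) (``pick a fresh literal outside $D$'') fails here: $u$ is the only fresh literal available, and no value for it works. This is not caught by any case in your sketch, and in a generic pp-reduction from $12\SAT$ it would genuinely break the argument.

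The paper's resolution is to observe that this bad configuration \emph{cannot arise in the specific instance $\mathbb{B}^\sharp$} produced by composing with Gottlob's construction (Section~\ref{sec:3k3SAT}): in every $12\SAT$ clause built there, the first four literals are all copies of a single original $3\SAT$ literal and hence carry the \emph{same} sign, and likewise for the last four. Thus the two open literals at an endpoint of any family always have matching signs, and the $(b\vee u)$ versus $(b\vee\neg u)$ mismatch is impossible. You need to invoke this sign-consistency property explicitly; without it the claim is false for arbitrary $12\SAT$ inputs. A secondary point: the $|D_\exists|=3$ case is also more delicate than ``three fresh stabilisers with no room to spare'' --- when the three intervals involve only two distinct open variables in total, one must argue via a pigeonhole on negation patterns (four assignments to two variables versus at most three sign combinations present) rather than by choosing fresh literals.
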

\begin{proof}
We use the previous argument, except that we take more care with the set~$\mathscr{F}$.  Assume we are given a partial assignment $\nu$ on the constructed $3\SAT$ instance $\mathbb{B}^\sharp$, and let $D_B$ and $D_\exists$ be as before.

If $|D_B|=3$ then there is nothing to do, because we are assuming $\mathbb{B}$ is $3$-robustly satisfiable.  If $|D_B|=2$, then the single $\exists$-element $c$ assigned by $\nu$ creates exactly one arrow interval (namely with one end of its family of clauses), which always has at least two distinct open elements in it.  If these are the variables in $D_B$ then local compatibility with $3\SAT$ ensures that $c$ is already stabilized and there is nothing to do.  Otherwise, there is an open element in the arrow interval created by $\nu(c)$ that does not appear in $D_B$.  Add this element to $D_B$ and assign it the value that stabilises the arrow interval of $c$.

Now assume there are two variables $c,c'\in D_\exists$ and one open element $b\in D_B$.  The variables $c,c'$ can create either one minimal arrow interval (if they appear in a common family of clauses, and they direct toward each other), or two arrow intervals (against endpoints of families of clauses).  In the first option, locally compatibility with 3\SAT\ clauses ensures that either $\nu(b)$ stabilises the arrow interval between $c$ and $c'$, or there is a open element $b'\in B\backslash\{b\}$ that lies in this interval, and can be used to stabilise the arrow interval.  

In the case where $c$ and $c'$ create two arrow intervals, then these intervals each contain at least two open elements, say $u_c,v_c$ and $u_{c'},v_{c'}$.  It is possibly however that $\{u_c,v_c\}$ and $\{u_{c'},v_{c'}\}$ are not disjoint, and also that $b$ could be in one or both of these sets.  Provided that we can find an unallocated open literal in both these arrow intervals, the values of $c,c'$ can be stabilised.  This is obviously possible in the case except when  $\{u_c,v_c\}=\{u_{c'},v_{c'}\}$ and contains $b$.  Indeed, it is not possible to extend the following assignment:
$b,c,c'\mapsto 0$ for the pair of clauses $(b\vee u\vee c), (b\vee \neg u\vee c')$.  Thus to recover the result claimed in \cite{AGK} we must show that this does not happen for the particular family of clauses we have constructed.  Indeed, if we examine the clauses of $(3\times 3+3)\SAT$ constructed in Section \ref{sec:3k3SAT}, it can be seen that all literals descending from one of the original variables in $3\SAT$ either appear all negated in a clause, or all nonnegated.  In the one problematic case observed here, we are considering open literals occurring at the extreme end of the family of clauses.  Thus we cannot have $b$ positive and $u$ negative in the same clause.  

Finally we consider the case where $|D_\exists|=3$, say $D_\exists=\{c,c',c''\}$.  In this case we get either one, two or three minimal arrow intervals.  One arrow interval is stabilised by selecting just one open element in the interval. In the case of two arrow intervals, observe that at least one must be against the endpoints of a family, hence include at least two open elements.  The the two intervals can be stabilised by selecting at most two open elements.  For the case of three intervals: observe that each of these intervals contains two open elements.  Even in the worst case, where there are in total only two open elements appearing in these intervals, there is always an assignment on at most $3$ elements that stabilise the values of $c,c',c''$.  In the case where two open elements $b,b'$ appear throughout, observe that there are four possible truth assignments to $b,b'$, yet only at most $3$ combinations of negation and non-negation present in the intervals required to stabilise.  Hence one combination is not present.  By fixing an assignment that fails this combination, all of the other three combinations are satisfied, hence stabilise $c,c',c''$.
\end{proof}
\begin{rem}\label{rem:24}
A very similar analysis when $k=2$ shows that $(Y_{(2,\varnothing)},N_{\CSP})$ is \texttt{NP}-complete for $3\SAT$.  In the case of $k=4$, only a little more effort shows that it is sufficient for $\mathscr{F}$ to include just the fundamental relations of $3\SAT$ and the one extra formula
\[
\exists y\ (x_1\vee y\vee x_2)\wedge (x_3\vee \neg y\vee x_4)
\]
(along with the various negations of variables and arrangements of literals within the clauses: so, each of $x_1,\dots,x_5$ are literals, possibly negated).
Note that this formula eliminates the tuple $(0,0,0,0)$ as a viable assignment for $(x_1,x_2,x_3,x_4)$.
\end{rem}
\section{Appendix: distillation and a problem of Beacham and Culberson}
We here make a slight diversion from the details of proofs to observe how to use the technique of the previous section to resolve one line of investigation instigated by Beacham and Culberson in \cite{beacul}.   
In \cite{beacul} it is shown that $(1,\varnothing)$-robust satisfiability is \texttt{NP}-complete for $n\SAT$ (when $n>2$) and that $(n,\varnothing)$-robust satisfiability is trivial: indeed $(n,\varnothing)$-robust satisfiability holds only for instances with no clauses, so is solvable in $\texttt{AC}^0$.  At what value of $k$ in $1\leq k\leq n$ does the complexity of $(k,\varnothing)$-robust satisfiability for $n\SAT$ transition downward from \texttt{NP}-complete?  Such questions are invited in \cite{beacul}, and we answer this by showing that the boundary is at $k=n-1$ and moreover holds at the level of the promise problem $(Y_{(k,\varnothing)}, N_{\CSP})$.  The result is not stated in the main body of the article.
\begin{thm}
For any $n>2$ the following promise problem $(Y_{(n-1,\varnothing)}, N_{\CSP})$ is \texttt{NP}-complete for $n\SAT$.\end{thm}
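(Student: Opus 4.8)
The plan is to obtain \texttt{NP}-hardness by composing two reductions already available to us and then running a counting argument of the kind used in the proof of Theorem~\ref{thm:3clean}, tuned to the parameters $n-1$ and $n$. Membership in \texttt{NP} is immediate: a satisfying assignment witnesses a YES instance, while a NO instance is unsatisfiable, so the usual $\SAT$-verifier separates the two promise classes. For hardness I would first invoke Gottlob's construction in the form of Theorem~\ref{thm:got} with the parameter choice that makes $3k+3=3n$, that is $k=n-1$: this gives a first order reduction witnessing that $(Y_{(n-1,\varnothing)},N_{\CSP})$ is \texttt{NP}-complete for $(3n)\SAT$, and, as in the proof of Theorem~\ref{thm:3clean}, the instances it produces have the \emph{uniform polarity} property that within each clause all copies of any one original variable occur with the same sign. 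Then I would compose this with the standard clause-chopping pp-reduction from $(3n)\SAT$ to $n\SAT$ generalising~\eqref{eqn:3SAT}: each $(3n)$-clause $(x_1\vee\dots\vee x_{3n})$ is replaced by a \emph{chain} of $n$-clauses linked by fresh existential (link) variables $y_1,y_2,\dots$, where the two end clauses of the chain each carry $n-1$ of the $x_i$ together with one link variable, while each interior clause carries $n-2$ of the $x_i$ together with two link literals $\neg y_{j-1},y_j$. This pp-definition is equality-free, hence the composite is a first order reduction (\cite{lartes}), and it preserves satisfiability exactly, so NO instances of the source are sent to unsatisfiable $n\SAT$ instances, as required.

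The substantive task is to show that if the source $(3n)\SAT$ instance $\mathbb B$ is $(n-1,\varnothing)$-robustly satisfiable then the constructed $n\SAT$ instance $\mathbb B^\sharp$ is $(n-1,\varnothing)$-robustly satisfiable. Given a $\varnothing$-compatible $\nu$ on an $(n-1)$-element subset $D=D_B\sqcup D_\exists$ of $\mathbb B^\sharp$, I would introduce the left/right arrows of Section~\ref{sec:3SAT} on the finitely many clause-families meeting $D_\exists$ (endpoints of such a family carry inward-pointing arrows, and each link variable in $D_\exists$ carries an arrow according to its $\nu$-value; if $D_\exists=\varnothing$ there are no arrows and nothing to reconcile). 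For each pair of convergent arrows one must assign one open literal in the bracketed \emph{arrow interval} the value $1$; exactly as in Step~4 the number of convergent pairs in any single family is at most the number of link-variable arrows there, because the family's endpoint arrows point inward, so in total at most $|D_\exists|$ new open elements $E$ are needed, giving $|D_B\cup E|\le|D_B|+|D_\exists|=n-1$. The key point, on which the threshold $n-1$ hinges, is that every convergent arrow interval contains an \emph{unassigned} open literal: the interval's leftmost clause already contributes $\ge n-2$ pairwise-distinct open literals ($n-1$ if that clause is a chain endpoint), its two bounding arrows are supplied either by a family endpoint or by a link variable of $D_\exists$ (consuming at most two more elements of $D$), and since $|D|=n-1<n$ these cannot together exhaust $D$. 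Setting such a free literal to $1$ in each interval yields an extension $\nu'$ of $\nu$ to the set $D_B\cup E$ of at most $n-1$ copy-variables; because every $(3n)$-clause of $\mathbb B$ has $3n>n-1$ distinct variables, $\nu'$ falsifies no clause of $\mathbb B$, so $\nu'$ is $\varnothing$-compatible for $\mathbb B$ and, by the $(n-1,\varnothing)$-robust satisfiability of $\mathbb B$, extends to a full solution $\phi$ of $\mathbb B$. Finally $\phi$ extends along each chain to a solution of $\mathbb B^\sharp$: $\phi$ satisfies the underlying $(3n)$-clause, and the literals set to $1$ inside the convergent intervals let us propagate the link variables so as to agree with $\nu$ on $D_\exists$ — precisely the argument of Section~\ref{sec:3SAT}. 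This completes the reduction; as a sanity check, $(n,\varnothing)$-robustness is already trivial by Beacham and Culberson~\cite{beacul}, so the threshold is indeed $k=n-1$, and for $n=3$ the argument recovers the $(2,\varnothing)$ case noted in Remark~\ref{rem:24}.

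The main obstacle I expect is the case analysis behind the claim that \emph{every} convergent arrow interval contains an unassigned open literal: one must treat separately intervals equal to a single interior clause, intervals that abut a chain endpoint, and intervals spanning several clauses, in each case balancing the number of open literals the interval makes available against the budget $n-1$ on $|D|$ and the number of elements of $D$ already consumed by the two bounding arrows and by $D_\exists$. A secondary but necessary point is the bound of ``at most $|D_\exists|$ new open elements'', which relies on the observation that, reading the arrows of a single clause-family from left to right, the endpoint arrows point inward and hence the number of adjacent convergent pairs is bounded by the number of link-variable arrows in that family, i.e.\ by how many of its link variables lie in $D_\exists$.
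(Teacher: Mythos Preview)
Your plan---Gottlob's inflation followed by the clause-chopping pp-reduction, analysed with the arrow-interval bookkeeping of Section~\ref{sec:3SAT}---is exactly the paper's approach. The difference is the intermediate arity: you pass through $(3n)\SAT$, whereas the paper (for $n\geq 4$) starts from $(n-1)\SAT$ and passes through $(n^2-n)\SAT$, treating $n=3$ separately via Remark~\ref{rem:24}. This difference is not cosmetic; it is where your argument breaks.

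In the chain reduction from $m\SAT$ to $n\SAT$, the number of interior clauses is $(m-2(n-1))/(n-2)$. For the paper's $m=n^2-n$ this equals $n-1$, so the chain always consists of exactly $n+1$ clauses, every interior clause carrying $n-2$ distinct open literals and each endpoint clause $n-1$. For your $m=3n$ the count is $(n+2)/(n-2)$, an integer only when $n\in\{3,4,6\}$. For the remaining $n$ (already $n=5$) the last clause of the chain must be padded by repeating a literal, and then it has strictly fewer than $n-1$ \emph{distinct} open literals. That kills the key claim: with $n=5$ the padded last clause has only two distinct open literals $x,x'$ together with $\neg y$, and the $4$-element assignment $y\mapsto 1$, $x\mapsto 0$, $x'\mapsto 0$ (plus any fourth value) is vacuously $\varnothing$-compatible yet already falsifies that clause. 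So the output is not $(n-1,\varnothing)$-robustly satisfiable and the reduction fails.

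The repair is precisely the paper's choice: inflate from $(n-1)\SAT$ (which is \texttt{NP}-complete once $n\geq 4$) so that the $(n^2-n)$ literals split cleanly. With that arity, every convergent arrow interval together with its at most two bounding $\exists$-elements accounts for at least $n$ distinct elements of $\mathbb{B}^\sharp$---the paper checks this as: at least $n-2$ open elements in the interval, and if all of those lie in $D$ then $|D|=n-1$ forces the single $\exists$-element case, which in turn forces the other bounding arrow to be a family endpoint and hence the interval to contain $\geq n-1$ distinct open elements. This is the rigorous form of the ``leftmost clause plus two arrows'' heuristic you rightly flagged as needing a proper case split.
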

\begin{proof}
As noted in Remark \ref{rem:24}, a very easy adaptation of the proof of Theorem \ref{thm:3clean} shows that the promise $(Y_{(2,\varnothing)},N_{\CSP})$ is \texttt{NP}-complete for $3\SAT$.  For the remainder of the proof we fix $n\geq 4$.  If in Section~\ref{sec:3k3SAT} we begin with $(n-1)\SAT$ instead of $3\SAT$, then the same variable replacement (with $k:=n-1$) detailed in Section~\ref{sec:3k3SAT} will show that the promise $(Y_{(n-1,\varnothing)},N_{\CSP})$ is \texttt{NP}-complete for $(n^2-n)\SAT$ (noting $(n-1)(k+1)=n^2-n$).  The reduction of $(n^2-n)\SAT$ to $n\SAT$ takes each $(n^2-n)$-clauses to a family consisting of $n+1$ distinct $n$-clauses.
We need to show that every assignment from $n-1$ elements extends to a solution.
Note that Lemma \ref{lem:arrowbracket} continues to hold, with the notion of arrow intervals defined for $n$-clauses in identical fashion to that defined for $3$-clauses.  Let $\nu:D\to\{0,1\}$ be any assignment from some $n-1$-element set of elements~$D$.  We can replace any $\exists$-element $c$ in $D$ as follows: there are always at least $n-2$ open elements between any two convergent arrows (and $n-1$ such elements at the extreme ends of clause families). If one of these open elements is not in $D$, then we can replace the $\exists$-element with this open element.  If all $n-2$ of the open elements are in $D$, then, as $|D|=n-1$, the $c$ was the only $\exists$-element in~$D$.  Then, the other arrow making the arrow bracket with $c$ is one placed at the extreme end of the clause family.  In this instance, there are more than $n-2$ distinct open elements in the arrow bracket, so that at least one can be selected outside of $D$, to replace~$c$.
\end{proof}
\begin{cor}
Let  $n>2$.  For  $k\geq n$ the $(k,\varnothing)$-robust satisfiability for $n\SAT$ is in $\texttt{AC}^0$, and for $k<n$ the  problem $(Y_{(k,\varnothing)}, N_{\CSP})$ is 
\texttt{NP}-complete.
\end{cor}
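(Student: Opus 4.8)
The plan is to obtain both halves of the corollary by packaging the preceding theorem together with the Beacham--Culberson facts already recalled, via a single monotonicity observation in the parameter~$k$. The monotonicity I need is: for $k'\le k$, every $(k,\varnothing)$-robustly satisfiable instance is $(k',\varnothing)$-robustly satisfiable, and hence $Y_{(k,\varnothing)}\subseteq Y_{(k',\varnothing)}$. This holds because a $\varnothing$-compatible partial assignment is simply an arbitrary partial map (the relation set of $\mathbb{B}_\varnothing$ is empty), so any $\varnothing$-compatible assignment on a $k'$-set extends to a $\varnothing$-compatible assignment on a $k$-set, and robustness at level $k$ then supplies the required full solution. (Equivalently this is Lemma~\ref{lem:leqk} with $\mathscr{F}=\varnothing$, since $\varnothing{\mid}_{k'}$ is logically trivial.)

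For the $\texttt{AC}^0$ half, fix $k\ge n$. By monotonicity, $(k,\varnothing)$-robust satisfiability implies $(n,\varnothing)$-robust satisfiability, which by Beacham and Culberson~\cite{beacul} forces the $n\SAT$ instance to have no clauses; conversely a clause-free instance is vacuously $(k,\varnothing)$-robustly satisfiable. Thus the recognition problem for $(k,\varnothing)$-robustly satisfiable $n\SAT$ instances coincides with the first-order property ``has no clauses'' (the finitely many small satisfiable instances on fewer than $k$ variables being absorbed into the defining formula), and hence lies in~$\texttt{AC}^0$.

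For the \texttt{NP}-completeness half, fix $k<n$, i.e.\ $k\le n-1$. Membership in \texttt{NP} is immediate: a YES instance of $(Y_{(k,\varnothing)},N_{\CSP})$ is satisfiable, and a satisfying assignment is a polynomial-size certificate that fails on every NO instance. For hardness, note that $Y_{(n-1,\varnothing)}\subseteq Y_{(k,\varnothing)}$ by monotonicity, while $N_{\CSP}$ is unchanged and remains disjoint from $Y_{(k,\varnothing)}$ because $(k,\varnothing)$-robust satisfiability entails satisfiability. Hence the identity map is a first order reduction from $(Y_{(n-1,\varnothing)},N_{\CSP})$ to $(Y_{(k,\varnothing)},N_{\CSP})$, sending YES to YES and NO to NO. Since the source problem is \texttt{NP}-complete with respect to first order reductions by the preceding theorem (with the base case $n=3$ supplied by Remark~\ref{rem:24}), the target is \texttt{NP}-hard, hence \texttt{NP}-complete.

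I do not expect a genuine obstacle here, as the corollary merely reassembles results already proved. The only point requiring mild care is the promise-problem bookkeeping: checking that the inclusion $Y_{(n-1,\varnothing)}\subseteq Y_{(k,\varnothing)}$ points in the direction that yields a valid reduction, and that the NO-set stays separated from the enlarged YES-set. A second harmless technicality is the treatment of instances on very few variables in the $\texttt{AC}^0$ half, which are handled by a finite first-order case analysis.
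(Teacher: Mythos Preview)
Your proposal is correct and is precisely the argument the paper has in mind: the corollary is stated without proof, as it follows immediately from the preceding theorem (giving hardness at level $n-1$), the Beacham--Culberson observation that $(n,\varnothing)$-robust satisfiability forces clause-free instances, and the monotonicity $Y_{(k,\varnothing)}\subseteq Y_{(k',\varnothing)}$ for $k'\le k$. Your handling of the two minor technicalities (small-instance edge cases in the $\texttt{AC}^0$ direction, and the promise-set inclusion bookkeeping for the identity reduction) is appropriate.
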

\section{Appendix: proof of Theorem \ref{thm:unaryred}}
We now give this proof in full detail.
\begin{proof}[Proof of Theorem \ref{thm:unaryred}]
Let $\mathbb{B}$ be an instance of $\CSP(\mathbb{A}_{\Con})$ and recall that for $a\in A$, the notation $\underline{a}$ denotes the unary relation which in $\mathbb{A}_{\Con}$ is the singleton set $\{a\}$.  We construct an instance $\mathbb{B}'=\langle B';{\mathcal R}^{\mathbb{B}'}\rangle$ of $\CSP(\mathbb{A})$ in two steps.  We first apply the function $\phi$ of Lemma \ref{lem:fullref}, which is an $\texttt{AC}^0$ reduction from $\CSP(\mathbb{A}_{\Con})$ to itself.   We then apply the construction given in \cite[Theorem 4.7]{BJK}, which reduces $\CSP(\mathbb{A}_{\Con})$ to $\CSP(\mathbb{A})$. Note that this reduction is shown to be first-order in \cite[Lemma $2.5$]{lartes}.  

It remains to show if $\mathbb{B}$ is $(k,\mathscr{F})$-robustly satisfiable with respect to $\mathbb{A}_{\Con}$, then $\mathbb{B}'$ is $(k,\mathscr{G})$-robustly satisfiable with respect to $\mathbb{A}$, for some finite set of pp-formul{\ae} $\mathscr{G}$ in the language of $\mathscr{R}$. 

Assume $\mathbb{B}$ is $(k,\mathscr{F})$-robustly satisfiable; then so too is the structure $\phi(\mathbb{B})$ of Lemma~\ref{lem:fullref}, so from this point on, there is no loss of generality in assuming  that~$\mathbb{B}$ is both $(k,\mathscr{F})$-robustly satisfiable and $(k,\mathscr{F})$-reflected.  We will also assume that~$\mathscr{F}$ contains the singleton fundamental relations of $\mathcal{R}_{\Con}$, as otherwise we may add them. 

We now recall the construction of $\mathbb{B}'$. Let $C_A=\{c_a\mid a\in A\}$ be a copy of the set $A$.  The universe $B'$ is the  union $B\cup C_A$, which we initially assume to be disjoint.  The relations $\mathcal R^{\mathbb{B}'}$ are constructed from ${\mathcal R}^{\mathbb{B}}_{\Con}$ by removing every hyperedge $b\in \underline{a}$ in $\mathcal R^{\mathbb{B}}_{\Con}$ but identifying $b$ with $c_a$, and then including, for all $r\in \mathscr{R}$, all hyperedges $(c_{a_1},\dots,c_{a_{\arity(r)}})\in r^{B'}$, where $(a_1,\dots,a_j)\in r^A$.  Note that because $\mathbb{B}$ is $(k,\mathscr{F})$-reflected it follows that distinct elements of $B$ cannot be identified with the same element of $C_A$, thus both $\mathbb{B}$ and $\mathbb{A}$ are (isomorphic to) induced substructures of $\mathbb{B}'$.  Let $B_1$ denote $B\backslash C_A$ (those elements of $B$ not identified with any element of $C_A$), let $B_2$ denote $B\cap V_A$ (those elements identified with an element of $C_A$) and let $B_3$ denote $C_A\backslash B$ (those elements of~$C_A$ not identified with any element of $B$).

Now we construct the formul{\ae} in $\mathscr{G}$.  In all possible combinations, select 
$p\leq k$ with $k-p\leq |A|$, a $(p,\mathscr{F})$-type $\sigma(x_1,\dots,x_p)$ and a $(k-p)$-element subset  
\[\{x_{a_{i_{p+1}}},\dots,x_{a_{i_{k}}}\}\subseteq \{x_{a_1},\dots,x_{a_{|A|}}\}, 
\]
where $\{x_{a_1},\dots,x_{a_{|A|}}\}$ is a copy of the universe $A$.  
For each such combination of selections construct a pp-formula $\delta(x_1,\dots,x_p,x_{a_{i_{p+1}}},\dots,x_{a_{i_{k}}})$ of arity  $k$ by taking the conjunction 
\[
\diag_{\mathbb A}(x_{a_1}, \dots, x_{a_{|A|}})\wedge \sigma(x_1,\dots,x_p),
\]
replacing all formul{\ae} of the form $x\in \underline{a}$ in $\sigma$ by $x=x_{a}$, and then existentially quantifying all variables not in $\{x_1,\dots,x_p,x_{a_{i_{p+1}}},\dots,x_{a_{i_{k}}}\}$.  Note that in this construction, if $x_i\in\{x_1,\dots,x_p\}$ is constrained to $\underline{a}$ in $\sigma$ and $a\notin \{a_{i_{p+1}},\dots,a_{i_k}\}$, then the conjunct $x_i=x_a$ appears in $\delta$, with $x_i$ unquantified and $x_a$  quantified.  Of course, if $\delta$ is true at some tuple, then any witness to the quantified variable $x_a$ must coincide with the value given to $x$.  The set $\mathscr{G}$ consists of all  formul{\ae} $\delta$ constructed in this way.

Let $D=\{b_1,\dots, b_k\}$ be a $k$-element subset of $B'$ and consider a $\mathscr{G}$-compatible assignment $\alpha\colon D \to A$.
Arrange the elements of $\{b_1,\dots, b_k\}$ so that $b_1, \dots, b_p\in B_1\cup B_2$ and $b_{p+1},\dots,b_{k}\in B_3$.  Note that $\{b_{p+1},\dots,b_k\}$ is a $(k-p)$-element subset $\{c_{a_{i_{p+1}}},\dots,c_{a_{i_{k}}}\}$ of $\{c_{a_1},\dots,c_{a_{|A|}}\}$.  Let $\sigma(x_1,\dots,x_p)$ denote the $(p,\mathscr{F}{\mid}_p)$-type of $(b_1,\dots,b_p)$ in $\mathbb{B}$.   We find that the formula  $\delta(x_1,\dots, x_p,x_k,x_{a_{i_{p+1}}},\dots,x_{a_{i_{k}}})$ is in $\mathscr{G}$, and $\mathbb{B}'\models \delta(b_1,\dots, b_k)$.   Then $\mathscr{G}$-compatibility ensures $\delta$ is preserved by $\alpha$ and hence $\mathbb{A}$ satisfies $\delta(\alpha(b_1), \dots, \alpha(b_k))$. Thus, there is a tuple of witnesses to the quantified variables in $\delta$. 
Let us extend $\alpha$ to cover all of the elements in~$C_A$, by letting $\alpha(c_a)$, for each $c_a\notin \{b_{p+1},\dots,b_k\}$, be the witness to the corresponding existentially quantified variable $x_a$ in satisfaction of $\delta(\alpha(b_1), \dots \alpha(b_k))$.  Note that for $b_i$ in $D\cap B_2$, there is $a\in A$ such that the constraint $b_i\in \underline{a}$ was in~$\mathcal{R}_{\Con}^\mathbb{B}$ (so that $b_i$ is the element $c_a$ of $B'$).  Because all unary relations appear in~$\mathscr{F}$, the type $\sigma(x_1,\dots,x_p)$ of $(b_1,\dots,b_p)$ includes the conjunct $x_i\in \underline{a}$, so that $\delta(\alpha(b_1), \dots \alpha(b_k))$ includes the conjunct $\alpha(b_i)=x_a$.  Thus the selected witness to $x_a$ must coincide with $\alpha(b_i)$.

Now the induced substructure of $\mathbb{B}'$ on $B_2\cup B_3=C_A$ is isomorphic to $\mathbb{A}$.  Hence the restriction $\alpha\rest{B_2\cup B_3}$ is an automorphism~$\beta$ of $\mathbb{A}$.  Because $\beta^{-1}$ preserves the pp-definable relations of $\mathbb{A}$, it follows that $\mathbb A$ satisfies $\delta(\beta^{-1}\circ\alpha(b_1), \dots, \beta^{-1}\circ\alpha(b_k))$ also.  The function $\beta^{-1}\circ\alpha\rest{B_2\cup B_3}$ is the map sending $c_a$ to $a$.  Hence  $\sigma(\beta^{-1}\circ\alpha(b_1),\dots, \beta^{-1}\circ\alpha(b_p))$ is true in $\mathbb{A}_{\Con}$, and because $\sigma$ is the $(p, \mathscr{F}{\mid}_p)$-type of $b_1, \dots, b_p$, it follows that $\beta^{-1}\circ\alpha\rest{\{b_1,\dots,b_p\}}$ is $\mathscr{F}$-compatible for $\mathbb{I}$.  Now $(\leq k, \mathscr{F})$-robustness of $\mathbb{B}$ allows us to extend $\beta^{-1}\circ\alpha\rest{B}$ to a full solution $\varphi$ of~$\mathbb{B}$. We can extend $\varphi$ to a solution $\psi$ of $\mathbb{B}'$ by sending each $c_a$ in $C_A$ to $a$. Then $\beta\circ\psi$ is a solution of $\mathbb{B}'$ that extends the values $\alpha(b_1), \dots, \alpha(b_k)$, as required.
\end{proof}

\section{Appendix: proof of Theorem \ref{thm:gap}(1)}
We now give full details for the proof of Theorem \ref{thm:gap}(1) sketched in Section \ref{sec:BW}.  

\begin{defn}
Let $j\geq 0$, let $\mathbb{A}$ a finite relational structure and $\mathbb{B}$ an input for $\CSP(\mathbb{A})$.  A \emph{$(j,j+1)$-strategy} for $\mathbb{B}$ is a family of partial functions $\mathscr{P}$, with each partial function having domain some subset of $B$ with size at most $j+1$, and codomain the set $A$ and satisfying\up:
\begin{description}
\item[(Homomorphism)] each $f\in\mathscr{P}$ is a homomorphism from the induced substructure of $\mathbb{B}$ on the domain of $f$.
\item[(Restriction)]  if $f\in\mathscr{P}$ then all restrictions of $f$ are in $\mathscr{P}$.
\item[(Extension)] if $f\in\mathscr{P}$ has domain $D$ with $|D|<j+1$ and $d_1,\dots,d_{i}$ for $i\leq j+1-|D|$ are elements of $B\backslash D$, then there is $g\in\mathscr{P}$ with domain $D\cup\{d_1,\dots,d_i\}$ such that $f$ is the restriction of $g$ to $D$.
\end{description}
\end{defn}
\begin{defn}\label{lem:boundedwidth}
Let $j\geq 0$, let $\mathbb{A}$ a finite relational structure and $\mathbb{B}$ an input for $\CSP(\mathbb{A})$.  The problem $\CSP(\mathbb{A})$ has \emph{bounded width} $j+1$ if an instance $\mathbb{B}$ is a YES instance if and only if it has a $(j,j+1)$-strategy.  If in addition, the partial solutions in every $(j,j+1)$-strategy is the restriction of a solution,  then $\CSP(\mathbb{A})$ is said to have \emph{strict width}.
\end{defn}
It is known that $\CSP(\mathbb{A})$ has strict width if and only if it has a near unanimity polymorphism, and has bounded width if and only if its polymorphism algebra generates a congruence meet-semidistributive variety~\cite{barkoz:BW}.  Polymorphism conditions for this last property are stated in Theorem \ref{thm:BW}(2) above.

There should be no surprise that $(k,\mathscr{F})$-robust satisfiability has some relationship to the bounded width $k$ (or less) property.  Indeed if $\mathbb{B}$ is  $(k,\mathscr{F})$-robustly satisfiable then the homomorphisms from $\mathbb{B}$ are determined by $\mathscr{F}$-compatible partial maps from $k$-element domains.  Bounded width $k$ implies also that homomorphisms are determined by special families of of partial homomorphisms from domains of size at most $k$.  The connection is close, but in general problems of bounded width may be as hard as \texttt{P}-complete, so it is perhaps initially a little surprising that the $(Y_{(k,\mathscr{F})},N_{\CSP})$ problem collapses to the class $\texttt{AC}^0$.  The difference is that in the presence of $(k,\mathscr{F})$-robust satisfiability we are able to identify a specific family of partial homomorphisms that must form a $(j,j+1)$-strategy.

For a finite set of pp-formul{\ae} $\mathscr{F}$, we say that a partial map from a subset of the instance~$\mathbb{B}$ with size at most $k$ to the template $\mathbb{A}$ is $\mathscr{F}_{\leq k}$-compatible if it extends to a $\mathscr{F}$-compatible assignment on $k$ elements.
\begin{lem}\label{lem:jstra}
Assume $\mathbb{B}$ is $(k,\mathscr{F})$-robustly satisfiable with respect to $\mathbb{A}$.  Then for any $j\leq k$, the family of all 
$\mathscr{F}_{\leq k}$-compatible solutions on at most $j$ points is a $(j-1,j)$-strategy for $\mathbb{B}$.  
\end{lem}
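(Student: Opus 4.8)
The plan is to reduce the three closure conditions defining a $(j-1,j)$-strategy to one observation: when $\mathbb{B}$ is $(k,\mathscr{F})$-robustly satisfiable, a partial map $f$ whose domain has size at most $j$ (where $j\le k$) is $\mathscr{F}_{\le k}$-compatible if and only if it is the restriction of a homomorphism $h\colon\mathbb{B}\to\mathbb{A}$. We may assume $k\le|B|$; otherwise no partial map extends to an $\mathscr{F}$-compatible assignment on a $k$-element set, so the family in question is empty and is vacuously a $(j-1,j)$-strategy.

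First I would establish this equivalence. If $f$ is $\mathscr{F}_{\le k}$-compatible, then by definition $f$ extends to an $\mathscr{F}$-compatible $\nu\colon S\to A$ with $|S|=k$, and this is exactly the situation in which $(k,\mathscr{F})$-robust satisfiability supplies an extension of $\nu$ --- hence of $f$ --- to a homomorphism $h\colon\mathbb{B}\to\mathbb{A}$. Conversely, suppose $f=h\rest D$ for a homomorphism $h\colon\mathbb{B}\to\mathbb{A}$ and some $D\subseteq B$ with $|D|\le j\le k$ (such an $h$ exists, since by Definition~\ref{defn:kf} a $(k,\mathscr{F})$-robustly satisfiable structure is a YES instance of $\CSP(\mathbb{A})$). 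Enlarging $D$ to a $k$-element subset $S$ of $B$, the map $h\rest S$ is a homomorphism between induced substructures, so it preserves every member of $\pp(\mathscr{R})$, and in particular is $\mathscr{F}$-compatible by Lemma~\ref{lem:Fk}; hence $f=h\rest D$ is $\mathscr{F}_{\le k}$-compatible. This is the only point at which the hypothesis of $(k,\mathscr{F})$-robust satisfiability enters the argument.

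Granting the equivalence, write $\mathscr{P}$ for the family of all $\mathscr{F}_{\le k}$-compatible solutions on at most $j$ points, so that $\mathscr{P}=\{\,h\rest D : h\colon\mathbb{B}\to\mathbb{A}\text{ a homomorphism},\ |D|\le j\,\}$. The \textbf{(Homomorphism)} condition holds since a restriction of a homomorphism to $D$ is a homomorphism on the induced substructure on $D$. The \textbf{(Restriction)} condition holds since any restriction of $h\rest D$ has the form $h\rest{D'}$ with $D'\subseteq D$, whence $|D'|\le j$ and $h\rest{D'}\in\mathscr{P}$. For \textbf{(Extension)}, given $h\rest D\in\mathscr{P}$ with $|D|<j$ and $d_1,\dots,d_i\in B\setminus D$ with $i\le j-|D|$, put $g:=h\rest{D\cup\{d_1,\dots,d_i\}}$; then $|\dom(g)|\le j\le k$, the map $g$ is a restriction of the homomorphism $h$ and so lies in $\mathscr{P}$, and $g\rest D=h\rest D=f$. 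I do not expect a genuine obstacle: the whole content is the equivalence of the previous paragraph, which is an immediate unfolding of the definition of $(k,\mathscr{F})$-robust satisfiability together with the elementary fact that homomorphisms preserve primitive positive formul{\ae}, and once it is in hand the three closure conditions follow at once.
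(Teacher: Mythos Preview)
Your argument is correct and matches the paper's approach. The paper's proof is a one-liner (``trivial from definitions'') together with exactly the remark you make central: the only condition requiring the robust-satisfiability hypothesis is \textbf{(Homomorphism)}, and it holds because every $\mathscr{F}_{\le k}$-compatible assignment extends to a full homomorphism. You go slightly further by packaging this as a two-way equivalence (``$\mathscr{F}_{\le k}$-compatible $\Leftrightarrow$ restriction of a homomorphism''), which is a clean way to verify all three conditions at once; the paper uses only the forward direction, since \textbf{(Restriction)} and \textbf{(Extension)} hold directly from the definition of $\mathscr{F}_{\le k}$-compatibility without invoking the hypothesis.

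One small wording issue in your converse direction: you write that ``$h\rest S$ is a homomorphism between induced substructures, so it preserves every member of $\pp(\mathscr{R})$''. What you actually need is preservation of pp-formul{\ae} \emph{as evaluated in $\mathbb{B}$}, not merely in the induced substructure on $S$ (this is what $\mathscr{F}$-compatibility demands; cf.\ Lemma~\ref{lem:Fk}). The fix is immediate: the full homomorphism $h\colon\mathbb{B}\to\mathbb{A}$ preserves all pp-formul{\ae} evaluated in $\mathbb{B}$, hence $h\rest S$ is $\mathscr{F}$-compatible. This is surely what you meant, but the phrasing as written could be read as appealing to the wrong fact.
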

\begin{proof}  
Trivial from definitions.  We mention that initially the homomorphism condition appears to be difficult to verify, given that there is no restriction on $\mathscr{F}$.  For example, if $\mathscr{F}$ is empty, then an $\mathscr{F}_{\leq k}$-compatible assignment $f$ need not in general preserve the fundamental relations.  But this cannot happen when $\mathbb{B}$ is $(k,\mathscr{F})$-robustly satisfiable, because by assumption such an $f$ must extend to a homomorphism.%
\end{proof}

We now show how the property that ``the $\mathscr{F}_{\leq k}$-compatible solutions on at most $j$ points form a $(j-1,j)$-strategy'' can be expressed as a first order sentence.  There are two ingredients.  The first is that for a tuple $(b_1,\dots,b_i)$ (for $i\leq j+1$), the property that every $\mathscr{F}_{\leq k}$-compatible solution is a homomorphism can be expressed as a first order sentence.  This is trivial if $\mathscr{F}$ contains the fundamental relations of $\mathscr{R}$, but as even the $\mathscr{F}=\varnothing$ case is of interest, we do not want to assume this.  

The second ingredient is that the extension property, expressed in terms of $\mathscr{F}_{\leq k}$-compatibility, can be written as a first order sentence.  The restriction property always holds, by the definition of $\mathscr{F}_{\leq k}$-compatibility.


First, some notation.  Let $\phi(x_1,\dots,x_i)$ be a formula in prenex form, with all quantifiers existential and let $\phi'(x_1,\dots,x_i,z_1,\dots,z_j)$ be the underlying open formula.  Then the assertion that $\phi(x_1,\dots,x_i)$ holds with all witnesses to quantifiers coming from some list $y_1,\dots,y_q$ is the disjunction over all functions $\iota:\{z_1,\dots,z_j\}\to\{y_1,\dots,y_q\}$ of the open formul{\ae}
$\phi'(x_1,\dots,x_i,\iota(z_1),\dots,\iota(z_j))$.
We denote this open formula by $\phi{\mid}_{\{y_1,\dots,y_q\}}(x_1,\dots,x_i)$.
\begin{lem}\label{lem:itypeformula}
Let $\mathbb{B}$ be a finite $\mathscr{R}$-structure and $\mathscr{F}$ a finite set of pp-formul{\ae} in~$\mathscr{R}$.  Let $(b_1,\dots,b_i)$ be a tuple of distinct elements from $B$ and $y_1,\dots,y_q$ be a list of variables.  There is a first order formula with free variables in $y_1,\dots,y_q$ and parameters $b_1,\dots,b_i$, which asserts that the $(i,\mathscr{F}{\mid}_i)$-type of $(b_1,\dots,b_i)$ in $\mathbb{B}$ has witnesses from amongst $y_1,\dots,y_q$.
\end{lem}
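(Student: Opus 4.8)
The plan is to reduce the statement to the witness-restriction construction $\phi{\mid}_{\{y_1,\dots,y_q\}}$ introduced just before the lemma. The key observation is that, for a \emph{fixed} tuple $(b_1,\dots,b_i)$, the $(i,\mathscr{F}{\mid}_i)$-type of $(b_1,\dots,b_i)$ in $\mathbb{B}$ is not a moving target but one explicitly determined pp-formula. Indeed, $\mathscr{F}{\mid}_i$ is a fixed finite set of $i$-ary pp-formul{\ae}, hence so is $(\mathscr{F}{\mid}_i)_i$, and the type $\tau(x_1,\dots,x_i)$ of $(b_1,\dots,b_i)$ is by definition the conjunction of exactly those members of $(\mathscr{F}{\mid}_i)_i$ that hold at $(b_1,\dots,b_i)$ in $\mathbb{B}$. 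Being a conjunction of pp-formul{\ae}, $\tau$ is again a pp-formula, so after renaming bound variables and pulling quantifiers to the front we may write it in prenex form $\tau=\exists z_1\dots\exists z_j\ \tau'(x_1,\dots,x_i,z_1,\dots,z_j)$ with $\tau'$ a conjunction of atomic formul{\ae} in $\mathscr{R}\cup\{=\}$.

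Next I would form $\tau{\mid}_{\{y_1,\dots,y_q\}}(x_1,\dots,x_i)$ precisely as in the definition preceding the lemma --- the disjunction over all functions $\iota\colon\{z_1,\dots,z_j\}\to\{y_1,\dots,y_q\}$ of the open formul{\ae} $\tau'(x_1,\dots,x_i,\iota(z_1),\dots,\iota(z_j))$ --- and then substitute the parameters $b_1,\dots,b_i$ for $x_1,\dots,x_i$. The result $\Theta(y_1,\dots,y_q)$ is quantifier-free, hence in particular first order, has its free variables among $y_1,\dots,y_q$, and uses $b_1,\dots,b_i$ only as parameters, exactly as the lemma requires.

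Correctness is then a matter of unwinding definitions: for any choice of values in $B$ for $y_1,\dots,y_q$, we have $\mathbb{B}\models\Theta$ iff some assignment of values for $z_1,\dots,z_j$ taken from among the values of $y_1,\dots,y_q$ satisfies $\tau'$ at $(b_1,\dots,b_i)$; this is exactly the condition that $\tau(b_1,\dots,b_i)$ holds in $\mathbb{B}$ with all its witnesses drawn from $y_1,\dots,y_q$, which is the required assertion.

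I do not expect any genuine obstacle here; the one point to handle with a little care is that witnessing the prenex form of the whole conjunction $\tau$ with witnesses drawn from $y_1,\dots,y_q$ is equivalent to so witnessing each conjunct of $\tau$ separately, which holds because the existential blocks of distinct conjuncts may be taken pairwise disjoint, so the choice of $\iota$ distributes over the conjunction. If one instead wanted a single formula not depending on the particular tuple, it suffices to disjoin over the finitely many possible $(i,\mathscr{F}{\mid}_i)$-types, each disjunct conjoining a quantifier-free formula (in the parameters $b_1,\dots,b_i$) pinning down the type of the tuple with the associated $\tau{\mid}_{\{y_1,\dots,y_q\}}$; this strengthening is not needed for the statement as given.
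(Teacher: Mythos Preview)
Your argument is correct for the lemma as literally stated, but it differs from the paper's in a way that matters downstream. You exploit that the particular tuple $(b_1,\dots,b_i)$ and the structure $\mathbb{B}$ are fixed, compute the single pp-formula $\tau$ that is their type, and then write $\tau{\mid}_{\{y_1,\dots,y_q\}}(b_1,\dots,b_i)$. The paper instead builds a \emph{uniform} formula: for $i=k$ it disjoins over all subsets $S\subseteq\mathscr{F}_k$, each disjunct asserting that the formul{\ae} in $S$ hold with witnesses among $y_1,\dots,y_q$ \emph{and} the formul{\ae} outside $S$ fail at $(b_1,\dots,b_k)$; only the disjunct with $S$ equal to the actual type can survive. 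For $i<k$ it refines this by additionally disjoining over the ways the extra variables $x_{i+1},\dots,x_k$ may be drawn from $\{y_1,\dots,y_q\}$. The point of this extra work is that the resulting formula uses $b_1,\dots,b_i$ purely as placeholders, so they can later be replaced by universally quantified variables; this is exactly what is needed in Lemma~\ref{lem:FOstrategy}, where one writes ``for all $x_1,\dots,x_i$, if the $(i,\mathscr{F}{\mid}_i)$-type of $(x_1,\dots,x_i)$ has witnesses among \ldots''. Your final paragraph anticipates this and sketches the uniform version correctly---one small slip: the formula pinning down the type cannot be quantifier-free, since it must assert $\phi(b_1,\dots,b_i)$ and $\neg\psi(b_1,\dots,b_i)$ for pp-formul{\ae} $\phi,\psi$; it is merely first order.
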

\begin{proof}
We first prove this when $i=k$.     In this case, the desired formula  can be found by taking the disjunction over all subsets $S$  of $\mathscr{F}_k$ of the following:
\begin{equation}
\bigwedge_{\phi(x_1,\dots,x_k)\in S}\phi{\mid}_{\{y_1,\dots,y_q\}}(b_1,\dots,b_k)\wedge\bigwedge_{\phi(x_1,\dots,x_k)\in \mathscr{F}_k\backslash S}\neg\phi(b_1,\dots,b_k).\tag{$*$}\label{eqn:stricttype}
\end{equation}
Indeed, for each $S$, the first block of conjuncts assert that the formul{\ae} in $S$ are amongst the type of $(b_1,\dots,b_k)$ (with witnesses chosen from the required elements), while the second block of conjuncts assert that no other formula in~$\mathscr{F}_k$ holds at $(b_1,\dots,b_k)$.
Now we consider $i< k$.  Recall that $\mathscr{F}{\mid}_i$ consists of all formul{\ae} $\exists x_{i+1}\dots\exists x_k\ \sigma(x_1,\dots,x_k)$, where $\sigma$ is a $(k,\mathscr{F}_k)$-type.  To find the  formula required by the lemma, we cannot use exactly the same method as for $i=k$, as now there are extra quantified variables $x_{i+1},\dots,x_k$ that appear in both the first block of conjuncts of  \eqref{eqn:stricttype} as well as the second block.  We need these to come from $\{y_1,\dots,y_q\}$, but we do not care where the remaining quantified variables in the second block come from.  Therefore, it suffices to replace each disjunct of the form \eqref{eqn:stricttype} by the disjunction over all functions $\iota:\{b_{i+1},\dots,b_k\}\to\{y_1,\dots,y_q\}$ of 
\begin{multline*}
\bigwedge_{\phi(x_1,\dots,x_k)\in S}\phi_{\{y_1,\dots,y_q\}}(b_1,\dots,b_i,\iota(b_{i+1}),\dots,\iota(b_k))\\
\wedge\bigwedge_{\phi(x_1,\dots,x_k)\in \mathscr{F}_k\backslash S}\neg\phi(b_1,\dots,b_i,\iota(b_{i+1}),\dots,\iota(b_k)).
\end{multline*}
\end{proof}
When we use this lemma we will set $q$ large enough to capture enough witnesses to the possible $(i,\mathscr{F}{\mid}_i)$-type of an $i$-tuple.  An upper bound for this  $k+\ell$, where $\ell$ is the maximal number of quantifiers appearing in any $(k,\mathscr{F}_k)$-type: this is because $k-i+\ell$ is the maximum number of quantifiers appearing in an $(i,\mathscr{F}{\mid}_i)$-type (which involve taking an $(k,\mathscr{F}_k)$-type and quantifying a further $k-i$ variables). 
%
%
\begin{lem}\label{lem:FOstrategy}
For any fixed $k\geq 0$ and $\mathbb{F}$ and any $j\leq k$, the class of $\mathscr{R}$-structures for which the 
$\mathscr{F}_{\leq k}$-compatible solutions on at most $j$ points is a $(j-1,j)$-strategy is first order definable.
\end{lem}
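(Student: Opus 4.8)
I would let $\mathscr{P}$ denote the family of all $\mathscr{F}_{\le k}$-compatible partial maps from at most $j$ points of an input $\mathbb{B}$ into $\mathbb{A}$, and then exhibit a first order sentence satisfied by exactly those $\mathbb{B}$ for which $\mathscr{P}$ is a $(j-1,j)$-strategy. First note that the \emph{Restriction} clause holds for $\mathscr{P}$ over \emph{every} $\mathbb{B}$: a restriction of a map that extends to an $\mathscr{F}$-compatible assignment on $k$ elements extends to the very same assignment, hence is again $\mathscr{F}_{\le k}$-compatible. So the class in question equals $\{\mathbb{B}\mid\mathbb{B}\models\xi_{\mathrm{hom}}\wedge\xi_{\mathrm{ext}}\}$ for first order sentences $\xi_{\mathrm{hom}}$, $\xi_{\mathrm{ext}}$ expressing the \emph{Homomorphism} and \emph{Extension} clauses, and it is enough to write these down. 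The one real ingredient is, for a fixed arity $i\le j$ and a fixed function $\nu\colon\{1,\dots,i\}\to A$, an open formula $\mathrm{Comp}_{i,\nu}(x_1,\dots,x_i)$ holding of a tuple $(b_1,\dots,b_i)$ of distinct elements of $\mathbb{B}$ precisely when the map $b_s\mapsto\nu(s)$ is $\mathscr{F}_{\le k}$-compatible; $\xi_{\mathrm{hom}}$ and $\xi_{\mathrm{ext}}$ are then assembled by finite Boolean bookkeeping over $A$.

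To build $\mathrm{Comp}_{i,\nu}$: by definition $b_s\mapsto\nu(s)$ is $\mathscr{F}_{\le k}$-compatible iff there are distinct $b_{i+1},\dots,b_k\in B\setminus\{b_1,\dots,b_i\}$ and an extension $\bar\nu\colon\{b_1,\dots,b_k\}\to A$ of it that is $\mathscr{F}$-compatible. By Lemma \ref{lem:Fk}, for a \emph{fixed} value-assignment $\bar\nu$ the latter holds iff $\mathbb{B}\not\models\phi(b_1,\dots,b_k)$ for every $\phi\in\mathscr{F}_k$ with $\mathbb{A}\not\models\phi(\bar\nu(1),\dots,\bar\nu(k))$; each such $\phi$ is primitive positive, so $\neg\,\phi$ is first order, and $\mathscr{F}_k$ is finite because $\mathscr{R}$ and $\mathscr{F}$ are. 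As $A$ is finite there are only finitely many extensions $\bar\nu$, so I would set
\[
\mathrm{Comp}_{i,\nu}(x_1,\dots,x_i)\ :=\ \bigvee_{\bar\nu\colon\{1,\dots,k\}\to A,\ \bar\nu\rest{\{1,\dots,i\}}=\nu}\ \exists x_{i+1}\dots\exists x_k\Bigl(\bigwedge_{1\le s<t\le k}x_s\ne x_t\ \wedge\ \bigwedge_{\phi\in\mathscr{F}_k,\ \mathbb{A}\not\models\phi(\bar\nu(1),\dots,\bar\nu(k))}\neg\,\phi(x_1,\dots,x_k)\Bigr),
\]
which is first order with the intended meaning. (Lemma \ref{lem:itypeformula} records essentially the same translation phrased through $(i,\mathscr{F}{\mid}_i)$-types and could be invoked instead.)

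With $\mathrm{Comp}$ available the other two clauses go through mechanically. For Homomorphism I would let $\mathrm{Hom}_{i,\nu}(x_1,\dots,x_i)$ denote the quantifier-free conjunction of all atoms $\neg\, r(x_{m_1},\dots,x_{m_n})$, over $r\in\mathscr{R}$ of arity $n$ and $(m_1,\dots,m_n)\in\{1,\dots,i\}^n$ with $(\nu(m_1),\dots,\nu(m_n))\notin r^A$; on a tuple of distinct elements this holds iff the fixed map $\nu$ is a homomorphism from the induced substructure of $\mathbb{B}$. Then I would take
\[
\xi_{\mathrm{hom}}\ :=\ \bigwedge_{0\le i\le j}\ \forall x_1\dots\forall x_i\Bigl(\bigwedge_{1\le s<t\le i}x_s\ne x_t\ \longrightarrow\ \bigwedge_{\nu\colon\{1,\dots,i\}\to A}\bigl(\mathrm{Comp}_{i,\nu}(x_1,\dots,x_i)\longrightarrow\mathrm{Hom}_{i,\nu}(x_1,\dots,x_i)\bigr)\Bigr).
\]
For Extension, the sentence
\[
\xi_{\mathrm{ext}}\ :=\ \bigwedge_{\substack{0\le p<j,\ 1\le q\le j-p\\ \nu_0\colon\{1,\dots,p\}\to A}}\forall x_1\dots\forall x_{p+q}\Bigl(\bigwedge_{s<t}x_s\ne x_t\ \wedge\ \mathrm{Comp}_{p,\nu_0}(x_1,\dots,x_p)\ \longrightarrow\!\!\!\bigvee_{\substack{\nu\colon\{1,\dots,p+q\}\to A\\ \nu\rest{\{1,\dots,p\}}=\nu_0}}\!\!\!\mathrm{Comp}_{p+q,\nu}(x_1,\dots,x_{p+q})\Bigr)
\]
records that every member of $\mathscr{P}$ whose domain has size $p<j$ extends, on any $q\le j-p$ further points, to a member of $\mathscr{P}$ (requiring the new points distinct is harmless, since coincidences only reduce $q$). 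Both sentences are finite Boolean combinations of first order formulas, so $\xi:=\xi_{\mathrm{hom}}\wedge\xi_{\mathrm{ext}}$ is first order; together with the automatic Restriction clause it defines exactly the class of the lemma.

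\textbf{Main obstacle.} The only subtle step is $\mathrm{Comp}_{i,\nu}$: I must make sure the quantification implicit in ``an $\mathscr{F}$-compatible extension to $k$ elements'' splits cleanly into \emph{honest} first order quantifiers $\exists x_{i+1}\dots\exists x_k$ over the extra \emph{domain} elements and a \emph{finite disjunction} over the extra \emph{values} (finite only because $|A|<\infty$), and that $\mathscr{F}$-compatibility is itself first order, which rests on Lemma \ref{lem:Fk} reducing it to the preservation of the finitely many, primitive positive, formulas of $\mathscr{F}_k$. Once that is in place the three strategy clauses translate routinely; the degenerate case $|B|<k$ is harmless, since then $\mathscr{P}=\varnothing$ is vacuously a $(j-1,j)$-strategy and every $\mathrm{Comp}$-formula is identically false, so $\xi$ holds vacuously.
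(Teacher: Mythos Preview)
Your proof is correct and in fact cleaner than the paper's, though the two take genuinely different routes to the same destination.

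The paper proceeds indirectly: it first invokes Lemma~\ref{lem:itypeformula} to write down a formula asserting that the $(i,\mathscr{F}{\mid}_i)$-type of a tuple has all its witnesses inside a bounded set of auxiliary variables $y_1,\dots,y_q$, and then, rather than encoding compatibility of a \emph{specific} map, it enumerates over the finitely many $\mathscr{R}$-structures on a fixed $p$-element universe, precomputing lists $J_i$ (those structures for which every $\mathscr{F}$-compatible assignment on the first $i$ points is a homomorphism) and $K_{i,i'}$ (those for which every such assignment extends). The final sentence says: for all choices of $p$ points containing the witnesses to the relevant type, the induced substructure is isomorphic to something on the precomputed list.

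You instead encode compatibility \emph{of each fixed value pattern} $\nu\colon\{1,\dots,i\}\to A$ directly, exploiting finiteness of $A$ to turn the quantification over codomain values into a finite disjunction, and using only honest first order quantifiers for the extra domain points. Your $\mathrm{Comp}_{i,\nu}$ is exactly the contrapositive form of Lemma~\ref{lem:Fk}, and once it is in hand, both $\xi_{\mathrm{hom}}$ and $\xi_{\mathrm{ext}}$ are mechanical. This avoids the detour through Lemma~\ref{lem:itypeformula} and the enumeration of substructures entirely; the price is that your sentence has length exponential in $|A|^k$ rather than in the number of $p$-element structures, but both bounds are constants for fixed $\mathbb{A}$, $k$, $\mathscr{F}$, so nothing is lost. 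The paper's route has the minor advantage of reusing its type machinery; yours has the advantage of being self-contained and making the role of $|A|<\infty$ completely transparent.
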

\begin{proof}
We need to verify the homomorphism, restriction and extension properties of a $(j-1,j)$-strategy.   Let $q$ be $k+\ell$, where $\ell$ is the maximal number of quantifiers appearing in any $(k,\mathscr{F}_k)$-type.

{\bf Homomorphism.} When $\mathscr{F}$ includes the fundamental relations of $\mathscr{R}$ then there is nothing to do, as compatibility of a partial function with the fundamental relations is the definition of a partial homomorphism.  When $\mathscr{F}$ does not contain $\mathscr{R}$ then we need to work harder.  We use the fact that the type of a tuple $(b_1,\dots,b_j)$ captures $\mathscr{F}$ compatibility, and that at most a further $q$ elements $c_1,\dots,c_q$ are required to witness the $(j,\mathscr{F}{\mid}_j)$-type of a $j$-tuple.  When such a set $\{c_1,\dots,c_q\}$ of witnesses have been identified, it suffices to know that on the induced substructure on $\{b_1,\dots,b_j,c_1,\dots,c_q\}$, all $\mathscr{F}$-compatible assignments from $\{b_1,\dots,b_j\}$ are partial homomorphisms.  This can be achieved as a first order sentence because there are only finitely many $\mathscr{R}$-structures on at most $j+q$ elements, and so there exists a finite list consisting of all  $\mathscr{R}$-structures on $\{b_1,\dots,b_j,c_1,\dots,c_q\}$ for which all $\mathscr{F}$-compatible assignments from $\{b_1,\dots,b_j\}$ into $\mathbb{A}$ are partial homomorphisms.  

For each $i\leq j$, let $J_{i}$ be the set of all $\mathscr{R}$-structures $\mathbb{D}$ on $p:=q+j$ elements $x_1,\dots,x_{p}$, for which all $\mathscr{F}$-compatible assignments into $\mathbb{A}$ from points $x_1,\dots,x_i$ are homomorphisms from the induced substructure of $\mathbb{D}$ on $\{x_1,\dots,x_i\}$.    Note that $J_{i}$ will typically contain a number of isomorphic copies of its members, but is a finite set, as $p$ is fixed and the universe is a fixed set.  
For each $\mathbb{D}\in J_{i}$, let $\delta_{\mathbb{D}}(x_1,\dots,x_p)$ be the diagram of $\mathbb{D}$: in other words, the open formul{\ae} consisting of the conjunction of all atomic and negated atomic formul{\ae} true in $\mathbb{D}$.  Let $\alpha_{J,i}$ be the disjunction of the $\delta_{\mathbb{D}}$ over all $\mathbb{D}$ in $J_{i}$.  

By calling on Lemma \ref{lem:itypeformula} the following can be written as a first order sentence $\beta_{J,i}$ stating that every $\mathscr{F}$-compatible assignment from $\mathbb{B}$ on $i$ points is a homomorphism is the following:
\begin{quotation}
\noindent``for all pairwise distinct $x_1,\dots,x_i,x_{i+1},\dots,x_{i+q}$, if the set $\{x_{i+1},\dots,x_{i+q}\}$ contains all necessary witnesses to the satisfaction of the $(i,\mathscr{F}{\mid}_i)$-type of $x_1,\dots,x_i$, then $\alpha_{J,i}(x_1,\dots,x_p)$ holds.''
\end{quotation}
The conjunction $\beta_{\rm hom}:=\bigwedge_{0\leq i\leq j}\beta_{J,i}$ states that all $\mathscr{F}_{\leq k}$-compatible assignments on at most $j$ points are homomorphisms from their domain.

{\bf Restriction.}  In the present setting, this follows immediately from the definitions, because the restriction of an $\mathscr{F}$-compatible assignment is also $\mathscr{F}$-compatible.  

{\bf Extension.} For each $i<i'\leq j$, let $K_{i,i'}$ be the set of all $\mathscr{R}$-structures on $p$ elements $x_1,\dots,x_{p}$ for which every $\mathscr{F}{\mid}_{i}$-compatible assignment from $x_1,\dots,x_i$ into $\mathbb{A}$ extends to an $\mathscr{F}{\mid}_{i'}$-compatible assignment from $x_1,\dots,x_{i'}$ into $\mathbb{A}$.   Note that $K_{i,i'}$ is a finite set of structures.  Let  $\alpha_{K,i,i'}$ be the disjunction of the diagrams $\delta_{\mathbb{D}}$ over all $\mathbb{D}$ in $K_{i,i'}$.  Then the following sentence $\beta_{K,i,i'}$ asserts that every $\mathscr{F}{\mid}_i$-compatible assignment on $i$ elements of $\mathbb{B}$ into $\mathbb{A}$ extends to an $\mathscr{F}{\mid}_{i'}$-compatible assignment on any $i'-i$ further elements.
\begin{quotation}
``for all pairwise distinct $x_1,\dots,x_{p}$, if the set $x_{i'+1},\dots,x_{p}$ contains the witnesses to satisfaction of both the $(i,\mathscr{F}{\mid}_i)$-type of $(x_1,\dots,x_i)$ and $(i',\mathscr{F}{\mid}_{i'})$-type of $(x_1,\dots,x_{i'})$, then $\alpha_{(K,i,i')}(x_1,\dots,x_p)$ holds.''
\end{quotation}
The conjunction $\beta_{\rm ext}:=\bigwedge_{0\leq i\leq i'\leq j}\beta_{K,i,i'}$ states that all $\mathscr{F}{\mid}_i$-compatible assignments on at most $i<i'\leq j$ points extend to $\mathscr{F}{\mid}_{i'}$-compatible assignments on any further $i'-i$ points.

The sentence $\beta_{\rm hom}\wedge \beta_{\rm ext}$ states that the $\mathscr{F}_{\leq k}$-compatible assignments on at most $j$ points is a $(j-1,j)$-strategy.
\end{proof}

The following gives case (1) of Theorem \ref{thm:gap} (and more).
\begin{thm}
Let $\mathbb{A}$ be a template for which the problem $\CSP(\mathbb{A})$ has bounded width $\ell$ and let $\mathscr{F}$ be any finite set of pp-formul{\ae}, and $k$ be an integer larger or equal to $\ell$.
\begin{enumerate}
\item The promise problem $(Y_{(k,\mathscr{F})},N_{\CSP{}})$ for $\mathbb{A}$ is solvable in $\texttt{AC}^0$.  Equivalently, the $(k,\mathscr{F})$-robustly satisfiable instances and the NO instances of $\CSP(\mathbb{A})$ have a separator that is decidable in $\texttt{AC}^0$.
\item If $\CSP(\mathbb{A})$ has strict width $\ell$, then the $(k,\mathscr{F})$-robustly satisfiable instances are themselves decidable in the class $\texttt{AC}^0$.
\end{enumerate}
\end{thm}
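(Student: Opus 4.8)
The plan is to exhibit a single first order sentence $\xi$ that functions as an $\texttt{AC}^0$ separator in part (1) and as an outright definition of the robustly satisfiable class in part (2). Put $j:=\ell$ and let $\mathscr{P}$ always denote the family of all $\mathscr{F}_{\le k}$-compatible assignments on at most $\ell$ points of the input $\mathbb{B}$ (well defined since $\ell\le k$). By Lemma~\ref{lem:FOstrategy} there is a first order sentence asserting that $\mathscr{P}$ is an $(\ell-1,\ell)$-strategy; let $\xi$ be the conjunction of that sentence with the assertion that some $\mathscr{F}$-compatible assignment on $k$ points exists (first order by Lemma~\ref{lem:ktype}), so that $\xi$ is never satisfied by the vacuous empty family.

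\emph{Part (1).} If $\mathbb{B}$ is $(k,\mathscr{F})$-robustly satisfiable, then Lemma~\ref{lem:jstra} with $j=\ell$ shows $\mathscr{P}$ is an $(\ell-1,\ell)$-strategy, and it is nonempty since the restriction of any full solution to any $\le\ell$ points belongs to it; hence $\mathbb{B}\models\xi$. Conversely, if $\mathbb{B}$ is a NO instance of $\CSP(\mathbb{A})$, then since $\CSP(\mathbb{A})$ has bounded width $\ell$ it admits no nonempty $(\ell-1,\ell)$-strategy at all, so $\mathscr{P}$ is not one and $\mathbb{B}\not\models\xi$. Thus $\xi$ separates the two promise classes, and since the class of finite structures satisfying a fixed first order sentence is decidable in $\texttt{AC}^0$, the promise problem $(Y_{(k,\mathscr{F})},N_{\CSP})$ is in $\texttt{AC}^0$.

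\emph{Part (2).} Now assume $\CSP(\mathbb{A})$ has strict width $\ell$, so that $\mathbb{A}$ has a near unanimity polymorphism (Theorem~\ref{thm:BW}(1)) and hence, by the Baker--Pixley theorem, every relation pp-definable from $\mathbb{A}$ is $\ell$-decomposable: a tuple lies in such a relation $R$ whenever each of its projections onto an $\ell$-element set of coordinates lies in the corresponding projection of $R$. I claim the class of $(k,\mathscr{F})$-robustly satisfiable inputs equals the first order class defined by $\xi$. One inclusion is exactly as in Part (1), via Lemma~\ref{lem:jstra} and nonemptiness. For the other, suppose $\mathbb{B}\models\xi$; then $\mathscr{P}$ is a nonempty $(\ell-1,\ell)$-strategy, so by strict width every member of $\mathscr{P}$ is the restriction of a solution of $\mathbb{B}$, and in particular (taking the empty map) $\mathbb{B}$ is satisfiable. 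Let $\nu\colon S\to A$ be $\mathscr{F}$-compatible with $|S|=k$, and let $R:=\{(\psi(s))_{s\in S}\mid\psi\colon\mathbb{B}\to\mathbb{A}\text{ a homomorphism}\}\subseteq A^{S}$, which lies in $\pp(\mathbb{A})$ via the canonical pp-formula associated with $\mathbb{B}$ (free variables for $S$, the rest existentially quantified). For each $\ell$-element subset $T\subseteq S$, the restriction $\nu\rest{T}$ is $\mathscr{F}_{\le k}$-compatible and has $\le\ell$ points, hence lies in $\mathscr{P}$, hence extends to a solution of $\mathbb{B}$; that is, $\nu\rest{T}$ lies in the projection of $R$ onto $T$. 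Since this holds for every such $T$ and $R$ is $\ell$-decomposable, $\nu\in R$, i.e.\ $\nu$ extends to a solution. So $\mathbb{B}$ is $(k,\mathscr{F})$-robustly satisfiable, and the class of such inputs, being first order definable, is decidable in $\texttt{AC}^0$.

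\emph{Main obstacle.} The delicate step is the reverse inclusion in Part (2): upgrading ``$\mathscr{P}$ is an $(\ell-1,\ell)$-strategy'' to genuine $(k,\mathscr{F})$-robust satisfiability when $\ell<k$. This is exactly where strict width is used essentially, through the $\ell$-decomposability of pp-definable relations; the weaker bounded-width hypothesis yields only a separator, matching the weaker conclusion of Part (1). A secondary technical point is preventing the empty strategy from satisfying the sentence of Lemma~\ref{lem:FOstrategy}, handled by the extra conjunct built into $\xi$.
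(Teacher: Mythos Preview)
Your argument is correct, and Part~(1) matches the paper's proof essentially verbatim: use Lemma~\ref{lem:jstra} on the YES side and the bounded-width hypothesis on the NO side, with the first order sentence of Lemma~\ref{lem:FOstrategy} as separator. Your added nonemptiness conjunct is a tidy way to handle the degenerate case the paper leaves implicit.

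In Part~(2) you take a genuinely different route. The paper argues that once $\mathscr{P}$ is a strategy, the Feder--Vardi near-unanimity algorithm \cite{fedvar} extends the strategy to arbitrarily large domains while carrying every partial map along, so that the given $\nu$ on $k$ points eventually sits inside a full homomorphism. You instead invoke Baker--Pixley directly: the solution set $R\subseteq A^{S}$ is pp-definable, hence $\ell$-decomposable under the $(\ell{+}1)$-ary NU polymorphism that strict width~$\ell$ provides; since every $\ell$-ary projection of $\nu$ lies in $\mathscr{P}$ (and therefore in the corresponding projection of $R$, by strict width), you conclude $\nu\in R$. Both arguments are sound; yours is more self-contained and avoids appealing to the mechanics of an external algorithm, while the paper's approach makes the connection to the original strict-width machinery more explicit. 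One minor point: your nonemptiness conjunct as stated (``some $\mathscr{F}$-compatible assignment on $k$ points exists'') could fail on very small YES instances with $|B|<k$, which are vacuously $(k,\mathscr{F})$-robustly satisfiable; a routine case split on $|B|$ repairs this.
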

\begin{proof}
(1) 
We assume throughout that $\mathbb{B}$ satisfies the $(Y_{(k,\mathscr{F})},N_{\CSP{}})$ promise.\\If $\mathbb{B}$ is $(k,\mathscr{F})$-robustly satisfiable (the YES promise) then by Lemma \ref{lem:jstra} it satisfies the sentence provided by Lemma \ref{lem:FOstrategy}.  Conversely, if $\mathbb{B}$ satisfies the sentence found by Lemma \ref{lem:FOstrategy}, then as the sentence asserts the existence of a $(\ell-1,\ell)$-strategy and $\CSP(\mathbb{A})$ has bounded width $\ell$ implies that $\mathbb{B}$ is a YES instance of $\CSP(\mathbb{A})$, hence does not satisfy the $N_{\CSP}$ promise.  So $\mathbb{B}$ satisfies the $Y_{(k,\mathscr{F})}$ promise as required.

(2) The sentence in Lemma \ref{lem:FOstrategy} describes the $(k,\mathscr{F})$-robustly satisfiable instances.  Certainly it is satisfied by the $(k,\mathscr{F})$-robustly satisfiable instances, by Lemma \ref{lem:jstra}.  Now assume that $\mathbb{B}$ satisfies the sentence.  In particular, $\mathbb{B}$ has a $(k-1,k)$-strategy, so is a YES instance of $\CSP(\mathbb{A})$.  Let $\nu:\mathbb{B}\to\mathbb{A}$ be an $\mathscr{F}$-compatible partial assignment from $k$ elements.  Then $\nu$ is part of the $(\ell-1,\ell)$-strategy.  Because $\CSP(\mathbb{A})$ has strict width, the partial homomorphisms of this strategy extend to homomorphisms: the algorithm presented in Feder and Vardi \cite{fedvar} uses the near unanimity polymorphism to extend strategies to higher and higher values (in particular, with all the original partial homomorphisms continuing to be extended).  In particular $\nu$ extends a full homomorphism, showing that $\mathbb{B}$ is $(k,\mathscr{F})$-robustly satisfiable.
\end{proof}

We mention that a corollary of (1) is that $(Y_{(k,\mathscr{F}),\mathsf{Q}},N_{\CSP{}})$ is also in $\texttt{AC}^0$.  Something similar can be achieved in part (2), though the sentence needs to amended to include the 
$(k,\mathscr{F})_q$-theory of $\mathbb{A}$.

\section{Appendix: proof of Theorem \ref{thm:gap}(2)}
The hardness case (case 2) of Theorem \ref{thm:gap} is only sketched in Section \ref{sec:BW}.  We now give more detail.  As explained in Section \ref{sec:BW}, it suffices to show that $(Y_{(k,\mathscr{F}),\mathsf{Q}},N_{\CSP})$ is $\texttt{Mod}_p(\texttt{L})$-hard for the template  $\mathbb{C}$ for systems of ternary linear equations over an abelian group structure ${\bf C}$ on the universe $C$.

Let $\mathbf{C}=\langle C;\oplus\rangle$ be an abelian group on a finite set $C$ with zero element $0$ and let $g$ be an element of prime order $p$ in $\mathbf{C}$.  Let $\mathbb{C}$ be a finite relational structure on the same set, whose relations include the ternary relations $r:=\{(x,y,z)\mid x\oplus y\ominus z=0\}$, let $\underline{g}$ and $\underline{0}$ denote the unary relations $\{g\}$ and $\{0\}$ on $C$.  Larose and Tesson \cite{lartes} show that $\CSP(\langle C;r,\underline{g},\underline{0}\rangle)$ is $\Mod_p(\texttt{L})$-complete with respect to first order reductions.  Their system either has no solutions or has exactly one solution.  It is relatively easy to transform this into a system in which the solution space is either empty or has dimension $k$ (for any predetermined $k$), and we employ such a method in our argument.  The difficulty is in maintaining the purely locally determined flexibility in this system required by $(k,\mathscr{F})$-robust satisfiability. 

We consider slightly richer language, but which is nevertheless pp-equivalent to $\{r,\underline{g},\underline{0}\}$, which is sufficient by Theorem \ref{thm:ppGAP}.
For $h\in \{0,g\}$, we let  $r_{h}:=\{(x,y,z)\mid x\oplus y\ominus z=h\}$, and $s_h=\{(x,y,z)\mid x\oplus y\oplus z=h\}$, where $\ominus$ denotes subtraction with respect to $\oplus$.  Let $\mathbb{C}$ denote the structure $\langle C;r_0,r_g,s_0,s_g\rangle$, which we now observe is pp-equivalent to $\langle C;r,\underline{g}\rangle$, using equality-free formul{\ae} (so that Theorem \ref{thm:ppGAP} provides a first order reduction).  First note that $r_0$ coincides with $r$.  Next, $\underline{h}$ is definable from $\{r_0,r_g,s_0,s_g\}$ by $\{(x)\mid (x,x,x)\in r_h\}$.  In the other direction, $r_h$ is defined from $\{r,\underline{g},\underline{0}\}$ by way of $\{(x,y,z)\mid \exists u\exists v\ (x,y,u)\in r\And (z,v,u)\And v\in\underline{h}\}$, while $s_h$ is definable by 
\[
\{(x,y,z)\mid \exists u\exists {-}z\exists z_h\, z_h\in\underline{h} \And u\in \underline{0}\And (z,{-}z,u)\in r\And (x,y,{-}z)\in r\}.
\]

\begin{thm}\label{thm:abelian}
For every $k\geq 0$ there exists a finite set of pp-formul{\ae} $\mathscr{F}$ such that the promise problem $(Y_{(k,\mathscr{F})},N_{\CSP})$ is $\Mod_p(\texttt{L})$-hard for $\mathbb{C}$ with respect to first order reductions.
\end{thm}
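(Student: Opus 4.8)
The plan is to adapt the proof of Theorem~\ref{thm:3SAT} to linear equations over $\mathbf{C}$, with ternary equations playing the role of $3\SAT$ and nine-variable equations the role of $(3k+3)\SAT$, except that the single ``inflate, then deflate'' round of Sections~\ref{sec:3k3SAT}--\ref{sec:3SAT} is now iterated $O(\log k)$ times. I begin from the theorem of Larose and Tesson~\cite{lartes} that $\CSP(\langle C;r,\underline g,\underline 0\rangle)$ is $\Mod_p(\texttt{L})$-complete with respect to first order reductions, moreover via instances whose solution set is either empty or a single point. Since $\langle C;r,\underline g\rangle$ is pp-equivalent to $\mathbb{C}$ using equality-free formul{\ae}, Theorem~\ref{thm:ppGAP} turns this into the same statement for $\CSP(\mathbb{C})$, so it suffices to construct, for every $k$, a finite set $\mathscr{F}$ of pp-formul{\ae} in the language of $\mathbb{C}$ and a first order reduction from $\CSP(\mathbb{C})$ to $(Y_{(k,\mathscr{F})},N_{\CSP})$ for $\mathbb{C}$.

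\emph{Inflation.} Given an instance $\mathbb{B}$ of the ternary equation problem that is $(k,\mathscr{F})$-robustly satisfiable (or a NO instance), I replace each variable $x$ by three fresh variables $x_1,x_2,x_3$ and each equation $\bigoplus_i\varepsilon_i v^{(i)}=h$ by $\bigoplus_i\varepsilon_i\bigl(v^{(i)}_1\oplus v^{(i)}_2\oplus v^{(i)}_3\bigr)=h$, an equation on at most nine variables; call the result $\mathbb{B}^{\mathrm{infl}}$. Associating to a solution $\psi$ of $\mathbb{B}^{\mathrm{infl}}$ the assignment $x\mapsto\psi(x_1)\oplus\psi(x_2)\oplus\psi(x_3)$ (and conversely) shows this is YES/NO preserving, and, exactly as for Gottlob's construction in Section~\ref{sec:3k3SAT}, it can be realised as a first order query with parameters (cf.~Appendix~\ref{appsec:3k3SAT}). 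The point is that a partial assignment to at most $3k+2$ copy-variables forces the ``effective value'' $x_1\oplus x_2\oplus x_3$ of at most $k$ of the original variables---those all three of whose copies are assigned---while every other original variable retains a free copy that can absorb any prescribed correction. Taking $\mathscr{G}$ to consist of the inflations $\phi^{\mathrm{infl}}$ of the formul{\ae} $\phi\in\mathscr{F}$ guarantees that the partial assignment induced on those $\le k$ original variables is $\mathscr{F}$-compatible; it therefore extends to a solution of $\mathbb{B}$ by hypothesis, and then to one of $\mathbb{B}^{\mathrm{infl}}$ agreeing with the given values. Hence $\mathbb{B}^{\mathrm{infl}}$ is $(3k+2,\mathscr{G})$-robustly satisfiable; the base case ($k=0$, $\mathscr{F}=\varnothing$) gives $(2,\varnothing)$-robustness.

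\emph{Deflation.} A nine-variable equation is pp-definable from ternary equations using $O(1)$ existential variables (sum two terms at a time down a chain), so there is a standard pp-reduction from the nine-variable problem back to $\CSP(\mathbb{C})$. Invoking Theorem~\ref{thm:ppGAP} directly would divide the robustness level by $9$; instead I run the sharper ``arrow-interval'' analysis of Section~\ref{sec:3SAT} and Theorem~\ref{thm:3clean}, family by family, so that only a bounded number of new open variables per touched existential variable is needed to re-satisfy a family (once two of the chain's auxiliary variables are pinned, each intervening segment still contains a free copy-variable to absorb the discrepancy). This yields, for a suitable finite set $\mathscr{F}'$ of claw formul{\ae} in the language of $\mathbb{C}$ assembled from $\mathscr{G}$ and the fixed grouping gadgets as in Definition~\ref{defn:claw}, a $(3k+1,\mathscr{F}')$-robustly satisfiable ternary instance---at worst a bounded additive loss. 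Both the inflation and the deflation are YES/NO preserving first order reductions, so one round transforms a first order reduction to $(Y_{(k,\mathscr{F})},N_{\CSP})$ into one to $(Y_{(3k+1,\mathscr{F}')},N_{\CSP})$.

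\emph{Induction and conclusion.} Starting from the base case and iterating this round pushes the robustness level past any prescribed target after $O(\log k)$ rounds, each round enlarging the pp-formula set by only finitely many claw formul{\ae}, so the accumulated $\mathscr{F}$ is finite; a concluding application of Lemma~\ref{lem:leqk} trims down to exactly $(k,\mathscr{F})$. Composing these $O(\log k)$ first order reductions with the Larose--Tesson reduction exhibits $(Y_{(k,\mathscr{F})},N_{\CSP})$ for $\mathbb{C}$ as $\Mod_p(\texttt{L})$-hard; composing further with the first order reflection query of Lemma~\ref{lem:FOreflection} upgrades this to the $(Y_{(k,\mathscr{F}),\mathsf{Q}},N_{\CSP})$ form required in Section~\ref{sec:BW}. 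The principal obstacle is the deflation step: realising the regrouping of nine-variable equations into ternary ones as a pp-reduction that essentially \emph{preserves} robustness rather than collapsing it requires a linear-algebra version of the arrow/interval bookkeeping of Section~\ref{sec:3SAT}, together with a careful accounting to keep the claw-formula sets finite across the logarithmically many iterations; verifying the robustness gain of the inflation step---that the forced-variable restriction is genuinely $\mathscr{F}$-compatible---is the secondary technical point.
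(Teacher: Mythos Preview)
Your overall architecture matches the paper's: prove the theorem by induction on $k$, with each round an ``inflate then deflate'' cycle over $\mathbb{C}$, starting from the Larose--Tesson $\Mod_p(\texttt{L})$-hardness of the ternary equation problem, and with the inflation step replacing each variable $x$ by a triple $x_L,x_M,x_R$ (your $x_1,x_2,x_3$) to pass from $(k,\mathscr{F})$-robust ternary equations to $(3k+2,\mathscr{G})$-robust nine-variable equations. That part is correct and is exactly what the paper does.

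The gap is in your deflation step. You propose to break each nine-variable equation into ternary ones by ``summing two terms at a time down a chain'' and then to run an arrow-interval analysis \`a la Section~\ref{sec:3SAT} to lose only a bounded \emph{additive} amount of robustness, landing at $(3k+1,\mathscr{F}')$. This is not justified, and I do not believe it works as stated. The arrow-interval argument for $\SAT$ exploits disjunction: a single true literal stabilises an entire interval. For linear equations the analogous segment constraint is an \emph{equality} on a partial sum, so stabilising it requires a free open variable \emph{inside that segment} to absorb the discrepancy, and with a naive chain the segments carved out by the assigned existential variables need not contain any such free variable (nor does claw-formula compatibility manufacture one). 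Counting carefully, a chain decomposition used with the generic Theorem~\ref{thm:ppGAP} would lose a factor of~$9$ and hence a net loss per round.

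The paper avoids this by a \emph{structure-respecting} regrouping rather than a chain: first group the nine-variable equation by subscript, writing $u_{e,L}=x_L+y_L\pm z_L$, $u_{e,M}=x_M+y_M\pm z_M$, $u_{e,R}=x_R+y_R\pm z_R$ and $u_{e,L}+u_{e,M}+u_{e,R}=h$. The point of this grouping is a pigeonhole: given any $(3k+2)$-element $D$, one of $D^\downarrow_L,D^\downarrow_M,D^\downarrow_R$ has size at most~$k$, and enlarging $D$ only at the other two subscripts covers all upper variables while still determining at most $k$ original variables---so the Claim applies and $(3k+2)$-robustness is \emph{preserved} at the level of length-$4$ equations. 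Only the final step from length~$4$ to length~$3$ costs anything, and there the honest bookkeeping gives a factor-of-$2$ loss, not an additive one: the induction goes $k\mapsto\lfloor 3k/2+1\rfloor$ per round, which is still exponential growth. Your sketch has the right skeleton, but the heart of the argument---why deflation does not eat the inflation gain---lives in this $L/M/R$-respecting regrouping and the pigeonhole it enables, not in a $\SAT$-style interval analysis.
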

\begin{proof}
The proof is by induction on $k$, with the base case corresponding to the $\Mod_p(\texttt{L})$-hardness of $\CSP(\mathbb{C})$, which coincides with $(Y_{(0,\varnothing)},N_{\CSP})$, shown to be $\Mod_p(\texttt{L})$-hard in \cite[Theorem 4.1(2)]{lartes}.  Let us assume that we have established $\Mod_p(\texttt{L})$-hardness of $(Y_{({\leq} k,\mathscr{F})},N_{\CSP})$ for $\mathbb{C}$ (for some $\mathscr{F}$).  We provide a first order reduction to $(Y_{(\ell,\mathscr{G})},N_{\CSP})$ for $\mathbb{C}$, with the set $\mathscr{G}$ to be constructed in the course of the proof and with $\ell=\lfloor 3k/2+1\rfloor$.  (Thus at $k=0$ we can achieve $\ell=1$, then $\ell=2$ at the next induction step, then $\ell=4$ and so on.)  For most of the proof we aim higher, at $\ell':=3k+2$, with only the final step dropping a factor of $2$.

The proof will be performed in the language of linear equations on $\mathbf{C}$, with the final equations trivially expressible in terms of $r$ and $s$ (with $+$ and $-$ interpreted as $\oplus$ and $\ominus$).  We are presented with a system of linear equations over $C$, with each equation of the form $x+y\pm z=0$ or $x+y\pm z=g$ (where $\pm$ abbreviates the fact that we have two options, for plus and for minus).  

We first reduce $(Y_{(k,\mathscr{F})},N_{\CSP})$ for this system to $(Y_{(3k+2,\mathscr{G})},N_{\CSP})$ for some systems of 9-ary equations over $C$, and for some finite set of pp-formul{\ae} $\mathscr{G}$; this step is reminiscent of Gottlob's reduction in Section \ref{sec:3k3SAT}, except the specific replacement there does not work here.  We let $X$ denote the set of variables and let $X^+$ denote $\{x_L,x_M,x_R\mid x\in X\}$.  We now replace each variable $x$ with the triple of variables $x_{L}, x_M, x_R$ (left, middle and right) and replace equations $x+y\pm z=h$ (for $h\in\{0,g\}$) by 
\begin{equation}
x_L+x_M+x_R+y_L+y_M+y_R\pm (z_L+z_M+z_R)=h.\tag{$\heartsuit$}\label{eqn:linear}
\end{equation}
This new system is equivalently satisfiable, because we have simply replaced each $x$ by $x_L+ x_M+x_R$.  Thus, if the original system has no solution, then so also the new system has no solution. (We omit the obvious details that this is a first order reduction.)  Now we must show that if the original system is $(k,\mathscr{F})$-robustly satisfiable, then the new system is $(3k+2,\mathscr{G})$-robustly satisfiable, for suitable $\mathscr{G}$.

First observe that the same variable replacement trick can also be made to formul{\ae} in~$\mathscr{F}$: for each formula $\phi(x_1,\dots,x_{k'})$ in $\mathscr{F}$ (for some $k'\leq k$), we let 
\[
\phi_{L,M,R}(x_{1,L},x_{1,M},x_{1,R},\dots,x_{k',L},x_{k',M},x_{k',R})
\] 
be the formula obtained from $\phi$ by replacing each variable by three copies, and each equation (of length $3$) by the length $9$ equation, as in Equation \ref{eqn:linear}.  Note that we also apply this to quantified variables.  We let $\mathscr{G}$ denote the formul{\ae} obtained from $\mathscr{F}$ in this way.

Let us consider a $\mathscr{G}$-compatible partial assignment $\nu$ on some subset $D\subseteq X^+$.  Let us say that a variable $x\in X$ is \emph{determined} by $D$ if all three of $x_L,x_M,x_R$ are in $D$.  We argue that provided at most $k$ variables are determined, then $\nu$ can be extended.  In particular, this is true when $|D|\leq 3k$, but we will also use the more general fact below, and state it for later reference.

{\bf Claim.}  If $E=\{x_1,\dots,x_{k'}\}\subseteq X$ is the set of determined variables, with $k'\leq k$, and $\sigma$ is the $(k,\mathscr{F})$-type of $(x_1,\dots,x_k)$, then $\nu$ extends to a full solution, provided that it preserves $\sigma_{L,M,R}$.
\begin{proof}[Proof of claim]
Assume $\nu$ preserves $\sigma_{L,M,R}$.  It follows that assigning the value $\nu(x_{i,L})\oplus\nu(x_{i,M})\oplus \nu(x_{i,R})$ to $x_i$ (for $i=1,\dots,k'$) produces a map $\nu_X:E\to C$ preserving $\sigma(x_1,\dots,x_{k'})$, hence extends to a full solution $\nu_X$ from $X$.  The values $\nu_X$ on a determined variable $x_i\in E$ are consistent with the values given $x_{i,L},x_{i,M},x_{i,R}$ by $\nu$.  For any non-determined variables, $x_j$, there are at least one of $x_{j,L},x_{j,M},x_{j,R}$ unassigned by $\nu$.  We may extend $\nu$ to all remaining variables in $X^+\backslash D$ by giving the unassigned variables values so that $\nu(x_{j,L})+ \nu(x_{j,M})\pm \nu(x_{j,R})=\nu_X(x_j)$.
\end{proof}

We now reduce the equation length from $9$ to $4$, preserving the robust satisfiability level at $3k+2$ (a further step, reducing from equation length $4$ to $3$ is performed after that).  This is analogous to the argument in Section \ref{sec:3SAT}, and the proof follows a similar theme, albeit for a completely different computational problem.
For each equation $e$ as in \eqref{eqn:linear}, we introduce three new variables $u_{e,L}$, $u_{e,M}$ and $u_{e,R}$, replacing Equation \eqref{eqn:linear} by the quadruple of equations
$x_L+y_L\pm z_L=u_{e,L}$,  $x_M+y_M\pm z_M=u_{e,M}$,  $x_R+y_R\pm z_R=u_{e,R}$ and $u_{e,L}+u_{e,M}+u_{e,R}=h$ (with $\pm$ chosen $+$ or $-$ consistently with the value of $\pm$ in $e$).  Let $X^{+2}$ be the new set of variables.
The new variables simply correspond to a regrouping of \eqref{eqn:linear}, so the equation systems are equivalently satisfiable.  In particular, if the system $\mathbb{B}$ is unsatisfiable, then so is the constructed system.  Now we must show that it is $(3k+2)$-robustly satisfiable, relative to some finite set of pp-formul{\ae}~$\mathscr{H}$.

We refer to variables of the form $x_L,x_M,x_R$ as \emph{lower} variables, and variables of the form $u_{e,L},u_{e,M},u_{e,R}$ as \emph{upper} variables.  For any subset $D\subseteq X^{+2}$, let $D_L^{\downarrow}$ denote the lower variables in $D$ with subscript $L$, and similarly with $M$ and $R$; let $D^\downarrow:=D_L^{\downarrow}\cup D_M^{\downarrow}\cup D_R^{\downarrow}$.  Let $D_L^{\uparrow}$, $D_M^{\uparrow}$,$D_R^{\uparrow}$, denote the corresponding subsets in terms of upper variables and $D^\uparrow:=D_L^{\uparrow}\cup D_M^{\uparrow}\cup D_R^{\uparrow}$.  

This regrouping is easily seen to be a pp-reduction, and the formul{\ae} $\mathscr{H}$ will, as in Section \ref{sec:ppstable} be claw formul{\ae} (the lower variables are essentially the open elements, while the upper variables are the $\exists$-elements).  The argument deviates from that in Section \ref{sec:ppstable}, because we do not have arbitrary robustness to play with (and in this respect is closer to that of Section~\ref{sec:3SAT}).  Also, we use the Claim, so technically call back to the original system of ternary equations in $X$.  The set $\mathscr{H}$ will consist of the claw formul{\ae} of arity $3k+2$ over $\mathscr{G}$ and of bound $3k$.   We will need only those claw formul{\ae} $\exists U \gamma\And \sigma_{L,M,R}$ for which $\sigma$ is a $(k',\mathscr{F}{\mid}_{k'})$-type, and the $3k'$ unquantified variables in the wrist $\sigma_{L,M,R}$ have been made to coincide with some set of $k'$ triples ($L$, $M$ and $R$ copies) of lower level variables from the talon $\gamma$.  

Assume that we have a subset $D\subseteq X^{+2}$ of at most $\ell$ variables and a partial assignment $\nu:D\to C$.  Our goal is to replace $\nu$ with an assignment $\nu'$ on some set $F$ such that any extension of $\nu'$ to a full solution agrees with $\nu$ on $D$, and such that $F^\uparrow$ is empty and $F^\downarrow$ determines at most $\lfloor \ell/3\rfloor$ variables from $X$.  We then show that our compatibility formul{\ae} $\mathscr{H}$ ensure that $\nu'$ is $\mathscr{G}$-compatible, and hence will extend to a full solution.  


At least one of $D_L^{\downarrow}$, $D_M^{\downarrow}$ and $D_R^{\downarrow}$ has size at most $\lfloor (3k+2)/3\rfloor= k$.  Without loss of generality, assume that $|D_R^\downarrow |\leq k$.  Let $F$ denote the set of lower variables consisting of $D^\downarrow$, along with 
\begin{itemize}
\item $x_{i,L}$ if $x_i$ appears in an equation $e$ for which $u_{e,L}$ or $u_{e,R}$ is in $D$ and 
\item $x_{i,M}$ if $x_i$ appears in an equation $e$ for which $u_{e,M}$ or $u_{e,R}$ is in $D$.
\end{itemize}
Observe that $F^\uparrow$ is empty, while $F_R=D_R^\downarrow$, so that $F$ still determines at most $k$ variables from $X$; denote these by $E$ and let $k':=|E|\leq k$.  If we can extend an evaluation of the variables in $F$ to a full solution of our system, then this forces values for variables in $D$.  For $u_{e,L}$ (with $e$ the equation $x_{i_1}+x_{i_2}\pm x_{i_3}=b$) this is because we have now included the variables $x_{i_1,L},x_{i_2,L},x_{i_3,L}$, and the equation 
$x_{i_1,L}+x_{i_2,L}\pm x_{i_3,L}=u_{e,L}$ is in our system.  A dual statement applies to $u_{e,M}$.  For $u_{e,R}$ it is because we have included the variables $x_{i_1,L},x_{i_2,L},x_{i_3,L},x_{i_1,M},x_{i_2,M},x_{i_3,M}$ and the equations $x_{i_1,L}+x_{i_2,L}\pm x_{i_3,L}=u_{e,L}$, $x_{i_1,M}+x_{i_2,M}\pm x_{i_3,M}=u_{e,M}$ and $u_{e,L}+u_{e,M}\pm u_{e,R}=b$ are in our system.  Our goal then is to select a $\mathscr{G}$-compatible assignment for $F$ that forces the previously given values for $D$.  This is enabled by the claw formul{\ae}.

There are at most $3k+2$ equation subscripts $e$ appearing amongst the members of $D^\uparrow$.  Let $\gamma$ denote the conjunction of the equations containing these elements: it is a system of at most $3k+2$ equation families (created in the pp-reduction: replacement of each 9-variable equation by a system of 4 equations of lengths 3 and 4).  Recall the set $E$ of variables $x_i\in X$ determined by $F$ and let $\sigma$ be their $(k',\mathscr{F}{\mid}_{k'})$-type.  The formula $\gamma\wedge \sigma_{L,M,R}$, with suitable quantification, is a claw formula in $\mathscr{H}$ and hence is preserved by $\nu$.  Note that all variables in $F\backslash D$ appear in $\gamma$, thus the value in $C$ of $\nu$ on $D$ and witnesses to quantifiers for elements of $F\backslash D$ provide a definition of a partial assignment $\nu_F$ from $F$.  Because $\gamma$ is true, the values given $F$ by $\nu_F$ are consistent with the values given $D$ by $\nu$.  Because $\sigma_{L,M,R}$ is true, the values given $F$ by $\nu_F$ determine an $\mathscr{F}$-compatible assignment $\nu_X$ from $E\subseteq X$.  This extends to a full assignment from $X$, by assumption, and thus $\nu_F$ also extends to a full solution.  As observed, this must agree with $\nu$, as required.

We now perform a final regrouping, introducing three new variables $v_{e,L}$, $v_{e,M}$ and $v_{e,R}$ for each equation subscript $e$, and replacing  $x_L+y_L\pm z_L=u_{e,L}$ by the equivalent system $x_L+y_L-v_{e,L}=0$ and $u_{e,L}\mp z_L-v_{e,L}=0$, and similarly for the middle and right versions.  The case where $\mp$ is negative in $u_{e,L}\mp z_L-v_{e,L}=0$ can be rewritten as $z_L+v_{e,L}-u_{e,L}=0$.  In this way, all equations are expressed as a sum of three variables with at most one negative variable in each, so directly encode into the relations $\{r_0,r_g,s_0,s_g\}$.  Also, it is trivial that the new system is equivalently satisfiable to the original, as it can be considered as the grouping of $x_L+y_L$ and replacement by $v_{e,L}$.  We can again employ claw formul{\ae} to keep track of compatibilities: we use claw formul{\ae} of arity $\lfloor (3k+2)/2\rfloor$ and of bound $3k+2$.

Suppose we are given an assignment $\nu$ on $\lfloor (3k+2)/2\rfloor$ variables $D$ in this newly created system of ternary equations.  Let $D_\exists$ denote the members of $D$ of the form $v_{e,L}$, $v_{e,M}$ and $v_{e,R}$ for some $e$ and $D'$ the set $D\backslash D_{\exists}$.  
The idea is to replace the given values  of each variable of the form $v_{e,L}\in D_\exists$ by a suitable pair of values for $x_L+y_L$ (and similarly for $v_{e,M}$ and $v_{e,R}$ in $D_\exists$).  Let $E$ denote the union of these pairs (one pair for each element of $D_\exists$)  with~$D'$.  Then $|E|\leq 2|D_\exists|+|D'|\leq 2\lfloor (3k+2)/2\rfloor\leq 3k+2$.  The argument will be complete if we can use the $\mathscr{H}'$-compatibility of $\nu$ to make an $\mathscr{H}$-compatible assignment to the values in $E$ that forces the values $\nu$ gave to $D$.  A claw formula achieves this: take the conjunction of the at-most $|D_\exists|$ many equation families involving the elements of $D_\exists$ with the $(|E|,\mathscr{H}{\mid}_{|E|})$-type of the elements of $E$.  After existentially quantifying those variables not in $D$ we have a claw formula of arity $\lfloor (3k+2)/2\rfloor$ over $\mathscr{H}$ and of bound $3k+2$.  If $\nu$ is compatible with this set of formul{\ae}, then witnesses to the quantified elements of $E$ provide the desired $\mathscr{H}$-compatible assignment.  We have arrived back at the original signature, thus completing the induction step.
\end{proof}

The following is an easy corollary of the claim proved in the proof of Theorem \ref{thm:abelian} and does not require the full power of Theorem \ref{thm:abelian}.
\begin{cor}
Let $d$ be an arbitrary positive integer.  For a finite field $\operatorname{GF}(p^n)$, it is $\Mod_p(\texttt{L})$-complete to distinguish those systems of affine equations of length $3$  having no solutions from those for which the solution space has dimension $d$.
\end{cor}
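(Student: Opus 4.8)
The plan is to establish both membership in $\Mod_p(\texttt{L})$ and $\Mod_p(\texttt{L})$-hardness of the promise problem. For membership, note that on any input system one may compute by Gaussian elimination the rank $\rho$ of the coefficient matrix $A$ and the rank of the augmented matrix $[A\mid b]$; the system is solvable exactly when these coincide, and in that case the solution space has dimension $N-\rho$, where $N$ is the number of variables. Rank of a matrix over $\operatorname{GF}(p)$ is computable in $\Mod_p(\texttt{L})$ (indeed it is complete for this class), and a matrix over $\operatorname{GF}(p^n)$ may be replaced by the $\operatorname{GF}(p)$-linear system it induces on the underlying $n$-dimensional $\operatorname{GF}(p)$-vector space — multiplication by each fixed field element being a $\operatorname{GF}(p)$-linear map — with the $\operatorname{GF}(p^n)$-rank equal to $1/n$ times the resulting $\operatorname{GF}(p)$-rank. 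So the algorithm ``output YES iff the ranks agree and $N-\rho=d$'' is correct on all promise instances and runs in $\Mod_p(\texttt{L})$; in particular the two promise classes admit a $\Mod_p(\texttt{L})$ separator.

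For hardness I would reduce from the $\Mod_p(\texttt{L})$-complete problem of Larose and Tesson~\cite{lartes}: taking $\mathbf{C}$ to be the additive group of $\operatorname{GF}(p^n)$, which is an elementary abelian $p$-group so that any nonzero element serves as $g$, the problem $\CSP(\langle C;r,\underline{g},\underline{0}\rangle)$ is $\Mod_p(\texttt{L})$-complete under first order reductions, its instances are (up to the trivial rewriting $v=g$ as $v+v'-v'=g$ and $v=0$ as $v+v'-v'=0$) finite systems of affine equations of length $3$ over $\operatorname{GF}(p^n)$, and each such instance has either no solution or exactly one solution. Given such an instance $S_0$, I would adjoin, on fresh variables disjoint from those of $S_0$ and from one another, a fixed collection of ``dimension-pumping'' gadgets contributing exactly $d$ further dimensions: a single length-$3$ equation $a+b+c=0$ on three new variables contributes a $2$-dimensional space, and a short fixed system of length-$3$ equations on a constant number of new variables whose coefficient matrix has full rank one less than its number of variables contributes a $1$-dimensional space (a $4$-variable, $3$-equation full-rank system works in every characteristic). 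Using $\lfloor d/2\rfloor$ copies of the former, together with one copy of the latter when $d$ is odd, produces exactly $d$ new free dimensions. If $S_0$ is unsatisfiable the combined system is unsatisfiable; if $S_0$ has a unique solution then the solution space of the combined system is the product of that point with the gadget solution spaces, hence has dimension exactly $d$. The gadgets have fixed size and placement, so $S_0\mapsto S_0\cup(\text{gadgets})$ is a first order query, and the composition with the rewriting above is a first order reduction.

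The linear-algebra bookkeeping in the hardness step is exactly the elementary observation at the heart of the claim in the proof of Theorem~\ref{thm:abelian} — that substituting a variable by a sum of fresh copies enlarges the solution space by a complement of controlled dimension — which is why no robustness machinery from that theorem is needed here. I expect no substantial obstacle; the only points requiring care are cosmetic, namely verifying that the adjoined directions really are unconstrained and, depending on how strictly ``length $3$'' is read, normalising the length-$1$ constraints of Larose--Tesson instances and the odd-$d$ gadget into genuine ternary $\pm$-sum equations without disturbing the ``unique solution or none'' dichotomy of the source. The corollary over arbitrary finite fields then follows, and the stated $d$-dimensional completeness is immediate.
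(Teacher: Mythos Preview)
Your approach is correct and arguably cleaner than the paper's. The paper derives the corollary as a byproduct of the variable-tripling construction inside the proof of Theorem~\ref{thm:abelian}: replacing each variable $x$ by $x_L+x_M+x_R$ sends a uniquely-solvable system on $|X|$ variables to an equivalently-satisfiable length-$9$ system whose solution space has dimension $2|X|$, and the subsequent regroupings back to length~$3$ preserve this. Your route is more elementary: keep the Larose--Tesson instance untouched and adjoin a fixed, instance-independent block of length-$3$ equations on fresh variables contributing exactly $d$ affine dimensions. This avoids any appeal to the robustness machinery and makes the ``exactly $d$'' promise transparent, whereas the paper's argument produces a dimension that depends on~$|X|$ and so tacitly needs an extra normalisation step to hit a prescribed $d$. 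You also supply the membership half, which the paper leaves implicit (it follows already from the containment of the full $\CSP$ in $\Mod_p(\texttt{L})$, so your rank computation is more than strictly necessary but certainly correct).

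One small slip to fix: your proposed rewriting of the unary constraints $v=g$ and $v=0$ as $v+v'-v'=g$ with a \emph{fresh} $v'$ introduces an unconstrained variable and hence an extra dimension per unary constraint, destroying the ``unique solution or none'' property you rely on. Use the repetition $(v,v,v)\in r_h$, i.e.\ $v+v-v=h$, instead (as the paper itself does when showing $\underline{h}$ is pp-definable from $r_h$); this is a genuine ternary constraint and adds no variables. You already flagged this step as needing care, so this is cosmetic rather than structural.
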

\begin{proof}
Note after the stated claim in the proof of Theorem \ref{thm:abelian} that if the original system has dimension $d$, then there are $d$ variables $x_1,\dots,x_d$ in $X$ which may be given any value.  Then by the claim, there are $3d+2(|X|-d)$ variables that may be given any value in the new system: namely the three variables each replacing $x_1,\dots,x_d$, along with two of the three variables for each of the remaining $|X|-d$ variables.  These variables remaining for the remainder of the proof, inside an equivalent system of equations, so that the final system has solution space of dimension at least $3d+2(|X|-d)$.
\end{proof}

\bibliographystyle{amsplain}

\end{document}